\tikzset{>= stealth'}
\tikzstyle{vertex}=[circle, draw,fill=gray!20, inner sep=0pt, minimum size=16pt]
\tikzstyle{square}=[rectangle, draw,fill=gray!20, inner sep=0pt, minimum size=16pt]
\tikzstyle{svertex}=[circle, draw,fill=gray!20, inner sep=0pt, minimum size=12pt]
\tikzstyle{ssquare}=[rectangle, draw,fill=gray!20, inner sep=0pt, minimum size=12pt]
\tikzstyle{tvertex}=[circle, draw,fill=gray!20, inner sep=0pt, minimum size=8pt]
\tikzstyle{tsquare}=[rectangle, draw,fill=gray!20, inner sep=0pt, minimum size=6pt]
\newtheorem{theorem}{Theorem}[section]
\newtheorem{lemma}[theorem]{Lemma}
\newtheorem{corollary}[theorem]{Corollary}
\newtheorem{definition}[theorem]{Definition}
\newtheorem{proposition}[theorem]{Proposition}
\newtheorem{claim}[theorem]{Claim}
\newcommand{\MST}{\operatorname{MST}}
\newcommand{\OPT}{\operatorname{OPT}}
\newcommand{\ALG}{\operatorname{ALG}}
\newcommand{\class}{\operatorname{class}}
\newcommand{\rank}{\operatorname{rank}}
\newcommand{\mlast}{\operatorname{\textsc{Mlast-Alg}}}
\newcommand{\type}{\tau}
\newcommand{\fixed}{\sigma}
\newcommand{\incr}{\beta}
\newcommand{\Steiner}{\operatorname{Steiner}}
\newcommand{\load}{\operatorname{load}}
\newcommand{\cost}{\operatorname{cost}}
\newcommand{\str}{3}
\newcommand{\charge}{\phi}
\newcommand{\diam}{\operatorname{diam}}
\newcommand{\net}{Z}
\newcommand{\leaves}{L}
\newcommand{\layer}{H}
\newcommand{\buy}{\mathcal{B}}
\newcommand{\rent}{\mathcal{R}}
\newcommand{\clust}{\mathcal{C}}
\newcommand\E{\mathbb{E}}
\newcommand\R{\mathbb{R}}
\newcommand{\defcal}[1]{\expandafter\newcommand\csname c#1\endcsname{{\mathcal{#1}}}}
\newcommand{\defbb}[1]{\expandafter\newcommand\csname b#1\endcsname{{\mathbb{#1}}}}
\newcounter{calBbCounter}
    \edef\letter{\Alph{calBbCounter}}
\begin{document}

\title{LAST but not Least: Online Spanners for Buy-at-Bulk}

\author{Anupam Gupta\thanks{Computer Science Department, Carnegie Mellon
    University, Pittsburgh, PA 15213, USA. Research partly supported by
    NSF awards CCF-1319811, CCF-1540541 and CCF-1617790.}
\and
R.~Ravi\thanks{Tepper School of Business, Carnegie Mellon University, USA, ravi@cmu.edu; This material is based upon research supported in part by the U. S. Office of Naval Research under award number N00014-12-1-1001, and the U. S. National Science Foundation under award number CCF-1527032.}
\and
Kunal Talwar\thanks{Google Research}
\and
Seeun William Umboh\thanks{Department of Mathematics and Computer
  Science, Eindhoven University of Technology, Netherlands. Email:
  {\tt seeun.umboh@gmail.com}. This work was supported in part by ERC
  consolidator grant 617951 and NSF grant CCF-1320854. Part of this work was done while a student at the University of Wisconsin - Madison, and while visiting the Simons Institute for the Theory of Computing.}
}
\date{\today}

\begin{titlepage}
\def\thepage{}
\thispagestyle{empty}

\maketitle
\begin{abstract}
  The online (uniform) buy-at-bulk network design problem asks us to
  design a network, where the edge-costs exhibit economy-of-scale.
  Previous approaches to this problem used tree-embeddings, giving us
  randomized algorithms. Moreover, the optimal results with a
  logarithmic competitive ratio requires the metric on which the network
  is being built to be known up-front; the competitive ratios then
  depend on the size of this metric (which could be much larger than the
  number of terminals that arrive).

  We consider the buy-at-bulk problem in the least restrictive model
  where the metric is not known in advance, but revealed in parts along
  with the demand points seeking connectivity arriving online. For the
  single sink buy-at-bulk problem, we give a deterministic online
  algorithm with competitive ratio that is logarithmic in $k$, the
  number of terminals that have arrived, matching the lower bound known
  even for the online Steiner tree problem. In the \emph{oblivious} case when the buy-at-bulk function used
  to compute the edge-costs of the network is not known in advance (but
  is the same across all edges), we give a deterministic algorithm with
  competitive ratio polylogarithmic in k, the number of terminals.

  At the heart of our algorithms are optimal constructions for online
  Light Approximate Shortest-path Trees (LASTs) and spanners, and their
  variants. We give constructions that have optimal trade-offs in terms
  of cost and stretch. We also define and give constructions for a new
  notion of LASTs where the set of roots (in addition to the points)
  expands over time. We expect these techniques will find applications in
  other online network-design problems.

\end{abstract}
\end{titlepage}

\setcounter{tocdepth}{1}
\tableofcontents
\newpage

\section{Introduction}
\label{sec:intro}

The model of (uniform) buy-at-bulk network design captures economies-of-scale in
routing problems. Given an undirected graph $G = (V,E)$ with edge lengths $d: E \to
R_{\geq 0}$---we can assume the lengths form a metric---the cost of
sending $x_e$ flow over any edge $e$ is $d(e) \cdot f(x_e)$ where $f$
is some concave cost function. The total cost is the sum over all edges
of the per-edge cost. Given some traffic matrix (a.k.a.\ demand), the
goal is now to find a routing for the demand to minimize the total cost.
This model is well studied both in the operations research and
approximation algorithms communities, both in the offline and online
settings. In the offline setting, an early result was an $O(\log
k)$-approximation due to Awerbuch and Azar, one of the first uses of
tree embeddings in approximation algorithms~\cite{AwerbuchA97}---here
$k$ is the number of demands. 
For the single-sink
case, the first $O(1)$-approximation was given
by~\cite{GuhaMM09}.
In fact, one can get a constant-factor even for the ``oblivious''
single-sink case where the demands are given, but the actual concave
function $f$ is known only after the network is built~\cite{GoelP12}.

The problem is just as interesting in the \emph{online context}: in the
online single-sink problem, new demand locations (called \emph{terminals}) are added over time,
and these must be connected to the central root node as they
arrive. This captures an increasing demand for telecommunication
services as new customers arrive, and must be connected via access
networks to a central core of nodes already provisioned for high
bandwidth. The Awerbuch-Azar approach of embedding $G$ into a tree
metric $T$ with $O(\log n)$ expected stretch (say using~\cite{FRT04}),
and then routing on this tree, gives an $O(\log n)$-competitive
randomized algorithm even in the online case. But \emph{this requires
  that the metric is known in advance}, and the dependence is on $n$, the
number of nodes in the metric, and not on the number of terminals $k$!
This may be undesirable in situations when $n \gg k$; for example, when the terminals come
from a Euclidean space $\R^d$ for some large $d$. Moreover, we only get a \emph{randomized} algorithm
(competitive against oblivious adversaries).\footnote{The tree
  embeddings of Bartal~\cite{Bartal96} can indeed be done online with
  $O(\log k \log \Delta)$ expected stretch, where $\Delta$ is the ratio of maximum to minimum distances in the metric. Essentially, this is because the probabilistic partitions used to construct the embedding can be computed online. This gives an $O(\log^2
  k)$-competitive randomized algorithm, alas sub-optimal by a logarithmic
  factor, and still randomized.}

In this paper, we study the Buy-at-Bulk problem in the online setting,
in the least restrictive model where the metric is not known in advance,
so the distance from some point to the previous points is revealed only
when the point arrives. This forces us to focus on the problem
structure, since we cannot rely on powerful general techniques like tree
embeddings. Moreover, we aim for deterministic algorithms for the
problem. Our first main result is an asymptotically optimal deterministic online
algorithm for single-sink buy-at-bulk.

\begin{restatable}[Deterministic Buy-at-Bulk]{theorem}{babknown}
\label{thm:non-ob-BAB}
    There exists a deterministic $O(\log k)$-competitive algorithm for
    online single-sink buy-at-bulk, where $k$ is the number of
    terminals.
\end{restatable}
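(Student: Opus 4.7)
The plan is to reduce single-sink buy-at-bulk to a structured collection of problems, one per cable type, each solved using the paper's new online multi-root LAST primitive. I start from the standard approximation of the concave function $f$ by the pointwise minimum of $L = O(\log k)$ affine pieces $f_i(x) = \sigma_i + \rho_i x$, so that $f(x) \leq \min_{i \in [L]} f_i(x) \leq O(1) \cdot f(x)$ for $x \in [1,k]$; a cable of type $i$ costs $\sigma_i$ per unit length to install and then $\rho_i$ per unit of flow per unit length to use. Using these cable types, I maintain an aggregator hierarchy $A_1 \supseteq A_2 \supseteq \dots \supseteq A_{L+1} = \{r\}$ where $A_1$ is the set of all terminals seen so far. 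When a terminal $t$ arrives it is inserted into $A_1$ and then greedily promoted: for $i = 1, \dots, L$ in order, if $t$'s distance to the current $A_{i+1}$ exceeds a threshold of order $\sigma_i/\rho_i$, then $t$ is also added to $A_{i+1}$. This net-like construction guarantees both that every $A_i$-point is within its threshold of some $A_{i+1}$-point and that $A_{i+1}$ is itself roughly $\sigma_i/\rho_i$-separated.

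At each level $i$, I invoke the paper's online multi-root LAST on point set $A_i$ with root set $A_{i+1}$ to maintain a forest $T_i$ with (i) total length within the promised factor of the Steiner forest connecting $A_i$ to $A_{i+1}$, and (ii) an $O(1)$-stretch path from each $A_i$-point to its nearest $A_{i+1}$-root inside $T_i$. Crucially, both $A_i$ and the \emph{root set} $A_{i+1}$ grow adversarially over time, which is exactly the generalization of LAST that the paper develops. I install cable of type $i$ along every edge of $T_i$, and each arriving terminal routes to the sink by concatenating sub-paths: along $T_1$ to its nearest point in $A_2$, along $T_2$ to its nearest point in $A_3$, and so on up to $A_{L+1} = \{r\}$. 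Hence the algorithm's total cost decomposes as $\sum_{i=1}^{L}\bigl(\sigma_i \cdot \operatorname{len}(T_i) + \rho_i \cdot \sum_{a \in A_i} d_{T_i}(a, \operatorname{root}_i(a))\bigr)$.

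To complete the analysis I charge each of the $L$ summands separately to $\OPT$. The $\sigma_i/\rho_i$-separation of $A_{i+1}$, together with the cable-type lower envelope for $f$, implies that $\OPT$ itself must spend $\Omega\bigl(\sigma_i \cdot \Steiner(A_i, A_{i+1}) + \rho_i \cdot \sum_{a \in A_i} d(a,A_{i+1})\bigr)$: any routing that serves the flow at scale $\sigma_i/\rho_i$ must either install cables at least as expensive as type $i$ along a Steiner-forest-like structure on $(A_i, A_{i+1})$, or pay the unavoidable shortest-path contribution through the $\rho_i x$ term. Combining this lower bound with the LAST's cost and stretch guarantees shows that each level-$i$ summand is $O(1)\cdot \OPT$, and summing over the $L = O(\log k)$ levels yields $\ALG = O(\log k)\cdot \OPT$. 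The main obstacle is therefore the online multi-root LAST primitive invoked above: classical LAST constructions assume a single fixed root, and achieving simultaneously near-optimal cost and constant stretch when the root set itself grows adversarially is the paper's principal technical contribution, which I would invoke here as a black box; once it is in hand, the cable-type accounting is essentially standard.
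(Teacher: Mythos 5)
There is a genuine gap, and it sits exactly where the real difficulty of this theorem lies: in how terminals are assigned to levels. Your hierarchy promotes a terminal to $A_{i+1}$ based on a \emph{distance} threshold of order $\sigma_i/\rho_i$, so $A_{i+1}$ is a net of $A_i$ and, in particular, isolated terminals that are far from everything get promoted to the highest levels. This breaks the lower bound you need. Your charging step asserts $\OPT = \Omega\bigl(\sigma_i \cdot \Steiner(A_i, A_{i+1})\bigr)$, but separation in distance says nothing about how much demand aggregates near the points of $A_{i+1}$: take $k$ terminals on a line, consecutive ones at distance just over $\sigma_i/\rho_i$, with $k \ll \sigma_i/\rho_{i-1}$. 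Then $\Steiner(A_i,A_{i+1})$ has length about $k\sigma_i/\rho_i$ and your claimed bound forces $\OPT \gtrsim \sigma_i^2 k/\rho_i$, while the true optimum never profits from cable type $i$ at all and is far cheaper. The paper's type rule is deliberately \emph{density}-based, not distance-based: $\type(v) = \max\{i : n_i(v) \ge \fixed_i/\incr_{i-1}\}$, where $n_i(v)$ counts terminals in a ball of radius $d_i(v)/2^3$. A terminal only earns type $i$ if it has at least $\fixed_i/\incr_{i-1}$ nearby ``witnesses'' (the break-even load for cable $i$), and it is precisely these witnesses that pay for the layer $\layer_i$ in the fixed-cost lemma, while the contrapositive (too many clustered low-type terminals would force a promotion) drives the rent-cost lemma. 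Without a demand-based promotion rule, both halves of the charging argument fail, and indeed the algorithm itself (not just its analysis) can be forced to install expensive cables over long isolated stretches.

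A second, independent problem is your use of the multi-sink LAST as a black box that is ``within the promised factor of the Steiner forest connecting $A_i$ to $A_{i+1}$'' while you charge each level to $O(1)\cdot\OPT$. No online algorithm can maintain a subgraph within $O(1)$ of the offline Steiner forest (the $\Omega(\log k)$ online Steiner lower bound applies), and accordingly the paper's MLAST guarantee is not a comparison to the Steiner forest at all: it bounds $d(\layer)$ by $O(1)\sum_{v} 2^{\class(v)}$, a sum of net-radii of the sources, which is then charged to $\OPT_i(T)$ on a \emph{fixed HST embedding} $T$. The single $O(\log k)$ loss in the whole argument is incurred once, globally, by passing from $\OPT(T)$ to $\OPT$ via the HST embedding (Corollary~\ref{cor:HST}), together with the decomposition $\sum_i \OPT_i(T) \le O(1)\OPT(T)$, which itself relies on the geometric pruning of cable costs. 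Your accounting instead tries to pay an $O(1)\cdot\OPT$ per level and sum over $O(\log k)$ levels; even if the per-level bound were salvageable, replacing the black box with a correct online guarantee would cost you another $\log k$ factor per level. So the reduction as written does not go through, and the missing ingredients are precisely the density-based type selection and the HST-based charging that constitute the paper's proof.
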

Note that the guarantee is best possible, since it matches the lower
bound~\cite{ImaseW91} for the special case of a single cable type
encoding the online Steiner tree problem.

En route, we consider a generalization of the {\em Light
  Approximate Shortest-path Trees} (LASTs). Given a set of ``sources''
and a sink, a LAST is a tree of weight close to the minimum spanning
tree (MST) on the sources and the sink, such that the tree distance from
any source to the sink is close to the shortest-path distance.
%
Khuller, Raghavachari, and Young~\cite{KhullerRY95} defined and studied
LASTs in the offline setting and showed that one can get constant
stretch with cost constant times the MST. Ever since their work, LASTs have proved very
versatile in network design applications. We first give (in
\S\ref{sec:lasts}) a simple construction of LASTs in the online setting
where terminals arrive online. We get constant stretch, and cost at
most $O(\log k)$ times the MST, which is the best possible in the online case.

For our algorithms, we extend the notion of LASTs to the setting of MLASTs
(\emph{Multi-sink LASTs}) where both sources \emph{and sinks} arrive
over time. We have to maintain a set of edges so that every source
preserves its distance to the closest sink arriving before it, at
minimum total cost. We provide a tight deterministic online algorithm
also for MLASTs, which we think is of independent interest. This
construction appears in \S\ref{sec:mlast}. Then we use MLASTs to prove Theorem \ref{thm:non-ob-BAB} in \S\ref{sec:non-obl}.

\paragraph{Oblivious Buy-at-Bulk.} We then change our focus to the
\emph{oblivious} problem. Here we are given neither the terminals nor the
buy-at-bulk function $f$ in advance. When the terminals arrive, we have to
choose paths for them to the root, so that for every concave cost
function $f$, the chosen routes are competitive to the optimal solution
for $f$. Our first result for this problem is the following:

\begin{restatable}[Oblivious Buy-at-Bulk, Randomized]{theorem}{BabUnknownRand}
  \label{thm:ob-BAB-rand}
  There exists a randomized online algorithm for the buy-at-bulk problem
  that produces a routing $P$ such that
for all concave functions $f$,
  \[\textstyle \E\big[\cost_f(P)\big] \leq O(\log^2
  k)\OPT_f.\] 
\end{restatable}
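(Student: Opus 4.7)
The plan is to reduce oblivious single-sink buy-at-bulk to online probabilistic tree embedding: build a random tree $T$ of the arriving terminals online, and route each terminal to the root along its unique path in $T$. Since a tree routing is determined by $T$ alone, the committed paths automatically serve every concave $f$ simultaneously, which is exactly what obliviousness requires.

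For the embedding step, I would use a Bartal-style hierarchical well-separated decomposition adapted to process terminals one at a time: maintain a laminar family of random clusters at geometrically decreasing radius scales, and when a new terminal arrives insert it into the nearest existing cluster at each scale (opening a new singleton cluster when no nearby cluster exists). The probabilistic partitions can be sampled online, and the resulting HST $T$ satisfies $\E[d_T(u,v)] \le O(\log k \cdot \log \Delta) \cdot d(u,v)$ over the terminal metric. A preprocessing step that collapses any pair of terminals at distance less than $\OPT_f / \operatorname{poly}(k)$ into a single representative adds only a negligible overhead for every concave $f$, which brings the effective aspect ratio down to $\operatorname{poly}(k)$ and therefore the expected stretch down to $O(\log^2 k)$.

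For the analysis, fix any concave $f$ with $f(0)=0$ and let $x^*$ be the optimal graph routing with cost $\OPT_f = \sum_e d(e) f(x^*_e)$. Lift $x^*$ to $T$ by sending each unit of flow on graph edge $(u,v)$ along the tree path $\pi_T(u,v)$; the aggregated tree load $y_{e'} = \sum_{(u,v) : e' \in \pi_T(u,v)} x^*_{(u,v)}$ dominates the canonical tree-routing load on every edge (the two flows differ only by cancellation, since a tree admits no circulations), so by monotonicity of $f$ the committed routing cost is at most $\sum_{e'} d_T(e') f(y_{e'})$. Subadditivity of $f$ (concavity together with $f(0)=0$) and swapping summations then give
\[
\cost_f(P) \; \le \; \sum_{e' \in T} d_T(e') \sum_{(u,v) : e' \in \pi_T(u,v)} f(x^*_{(u,v)}) \; = \; \sum_{(u,v)} f(x^*_{(u,v)}) \, d_T(u,v).
\]
Taking expectation over $T$ and plugging in the stretch bound yields $\E[\cost_f(P)] \le O(\log^2 k) \cdot \OPT_f$, and this holds for every $f$ with the same routing $P$.

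The main obstacle is the first step: ensuring the online tree embedding has expected stretch $O(\log^2 k)$ independent of the metric's aspect ratio, since the natural online Bartal construction only gives $O(\log k \cdot \log \Delta)$. Justifying the distance-thresholding reduction uniformly over all concave $f$ requires a universal lower bound on $\OPT_f$ in terms of the terminal spread and $k$, and this has to be set up so that a single threshold simultaneously works for every $f$. Once a low-stretch online embedding is in hand, the remaining concavity/subadditivity calculation is a clean classical argument.
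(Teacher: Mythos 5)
Your route---build an online probabilistic tree embedding and route each terminal to the root along its tree path---is genuinely different from the paper's. The paper deliberately avoids tree embeddings here: it assigns each terminal a random type $i$ with probability $2^{-i}$, maintains an online multicommodity spanner $F_i$ on the terminals of type at least $i$, and routes each terminal through waypoints of geometrically increasing distance from it; the $O(\log^2 k)$ bound then follows from the basis lemma for the rent-or-buy functions $g_i(x)=\min\{x,2^i\}$, the spanners' $O(\log k)$ lightness and $O(\log k)$ stretch, and a ball-packing lower bound on $\OPT_i$. The paper explicitly concedes that your route is the known alternative (``this guarantee can also be achieved by performing Bartal's tree embeddings''); its reason for the spanner-based proof is that those ideas carry over to the deterministic oblivious algorithm, which a randomized embedding cannot. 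Your subadditivity calculation reducing the tree cost to $\sum_{(u,v)} f(x^*_{(u,v)})\, d_T(u,v)$ is the standard Awerbuch--Azar argument and is fine.

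The gap is exactly where you flag it, and the fix you sketch does not close it. First, a threshold of $\OPT_f/\operatorname{poly}(k)$ depends on $f$, so as stated you do not get a single routing valid for all $f$; you would need a universal threshold, e.g.\ tied to $\max_v d(v,r)$, which lower-bounds $\OPT_f/f(1)$ for every monotone concave $f$. Second, and more seriously, even a universal threshold tied to the current diameter does not bound the final aspect ratio by $\operatorname{poly}(k)$: the diameter grows online, so a pair of terminals legitimately admitted at distance $d$ (above the threshold at their arrival time) can later be dwarfed by a far-away arrival, leaving $d \ll \Delta_{\mathrm{final}}/\operatorname{poly}(k)$, and collapsing decisions are irrevocable. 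Since the $\log\Delta$ factor in the online Bartal stretch counts the scales between a pair's distance and the final diameter, contraction alone does not bring the expected stretch down to $O(\log^2 k)$; indeed the paper's conclusion poses improving the $O(\log k\log\Delta)$ online embedding stretch as an open problem. The folklore $O(\log k\cdot\min(\log\Delta,\log k))$ bound for buy-at-bulk via embeddings is obtained by a problem-specific charging over distance scales, not by a generic stretch bound on a preprocessed metric, so this step is a genuine missing piece of your argument rather than a routine detail.
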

This randomized algorithm has the same approximation guarantee as one
obtained using Bartal's tree-embedding technique. The benefit of this
result, however, is in the ideas behind it. We give constructions of
low-stretch spanners in the online setting. Like LASTs, spanners have
been very widely studied in the offline case; here we show how to
maintain light low-stretch spanners in the online setting (in
\S\ref{sec:spanner}). Then we use the spanners to prove the above theorem in
\S\ref{sec:oblivious-bab-rand}. 
Moreover, building on these ideas, we give a deterministic algorithm in
\S\ref{sec:oblivious-bab-det}.

\begin{restatable}[Oblivious Buy-at-Bulk, Deterministic]{theorem}{BabUnknownDet}
  \label{thm:ob-BAB-det}
  There exists a deterministic online algorithm for the buy-at-bulk
  problem that produces a routing $P$ such that
  for all concave functions $f$,
  \[\cost_f(P) \leq O(\log^{2.5} k)\OPT_f.\]
\end{restatable}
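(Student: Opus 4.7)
The plan is to derandomize the $O(\log^2 k)$-competitive algorithm of Theorem~\ref{thm:ob-BAB-rand} by replacing its single random choice with a small deterministic ensemble, paying an extra $\sqrt{\log k}$ factor. First I would identify the only source of randomness in that algorithm. The online light spanner of \S\ref{sec:spanner} is already deterministic, so randomness can only enter through the multi-scale aggregation used to turn the spanner into a buy-at-bulk routing: at each of the $O(\log k)$ geometric distance scales, the algorithm picks a random ``offset'' or hub-assignment so that every terminal's demand is consolidated into the optimal single-cable Steiner-like subtree in expectation, losing an $O(\log k)$ factor per scale and $O(\log^2 k)$ overall.

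Next I would replace that random offset by a deterministic family of $t = \Theta(\sqrt{\log k})$ evenly-spaced offsets $\theta_1,\ldots,\theta_t$ and run $t$ parallel copies of the aggregation step on top of the same underlying online spanner. The final deterministic routing $P$ sends a $1/t$ fraction of each terminal's demand along the route prescribed by each $\theta_j$, so for every edge $e$ the load is $\load_P(e) = \tfrac{1}{t}\sum_{j=1}^{t} \load_{\theta_j}(e)$, and $P$ uses exactly the union of the $t$ individual edge sets.

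The analytical core is a concavity-plus-overlap lemma bounding
\[
\cost_f(P) \;=\; \sum_e d(e)\, f\!\Big(\tfrac{1}{t}\sum_{j=1}^t \load_{\theta_j}(e)\Big) \;\leq\; O(\sqrt{t})\cdot \tfrac{1}{t}\sum_{j=1}^t \cost_f(P_{\theta_j}).
\]
Given this, the per-offset bound $\tfrac{1}{t}\sum_j \cost_f(P_{\theta_j}) \le O(\log^2 k)\OPT_f$ (which is just the randomized expectation of Theorem~\ref{thm:ob-BAB-rand} specialized to the uniform distribution over $\{\theta_j\}$) yields $\cost_f(P) \le O(\sqrt{t}\log^2 k)\OPT_f = O(\log^{2.5} k)\OPT_f$ for $t = \Theta(\sqrt{\log k})$.

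The main obstacle is proving this $\sqrt{t}$ bound rather than the trivial $t$ bound coming from subadditivity of concave functions ($t \cdot f(x/t) \ge f(x)$). The improvement must exploit the fact that the $t$ deterministic copies produce highly \emph{correlated} subgraphs: the spanner's hierarchical structure forces most edges to be shared across the $t$ offsets, so the ``fixed-cost'' component of each canonical concave $f = \mathbf{1}[x>0]\,c + x$ is paid on essentially a common edge set and only the ``linear'' component is split $t$ ways. Formalizing this overlap---likely via a Cauchy--Schwarz argument over the scales of the spanner, or by exhibiting an online potential that tracks the incremental growth of the union as new offsets are introduced---is where I expect the technical work of the proof to live.
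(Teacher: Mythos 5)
Your proposal has several genuine gaps, starting with a misidentification of where the randomness lives. In the randomized algorithm of Theorem~\ref{thm:ob-BAB-rand}, the only random choice is the \emph{type assignment}: each terminal $v$ gets $\type(v)=i$ with probability $2^{-i}$, which determines which spanners $F_i$ it joins and hence which terminals serve as ``buy'' terminals for each rent-or-buy function $g_i$. There is no random ``offset'' in a multi-scale aggregation step; the routing through waypoints and rings is already deterministic given the types. So the object you propose to derandomize is not the object that is actually random, and the ensemble-of-offsets construction does not address the real difficulty, which is choosing types deterministically so that $X_i$ is simultaneously a good buy set for every $i$.

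Beyond that, two steps of your plan fail on their own terms. First, splitting each terminal's demand into $1/t$ fractions across $t$ parallel routings violates the problem's requirement that routing be unsplittable (a single path per terminal), so the final solution you describe is not feasible. Second, the central ``concavity-plus-overlap'' inequality with a $\sqrt{t}$ loss is asserted, not proved, and there is no reason offered why the $t$ copies' edge sets overlap enough to beat the trivial factor $t$; nor is it justified that averaging over a small deterministic family of offsets recovers the expectation bound of the randomized analysis (that analysis uses the full product distribution over types, e.g.\ independence of $\{v\in X_i\}$ from $d(v,X_i(v))$ in Lemma~\ref{lem:opt}). The paper instead derandomizes by reusing the deterministic ball-counting type rule of the non-oblivious algorithm ($\type(v)=\max\{i:|B_i(v)|\geq 2^i\}$), which already yields $O(\log^3 k)$, and then obtains the $\sqrt{\log k}$ saving by a completely different balance: each terminal joins only the $L=\sqrt{\log k}$ spanners $F_i$ with $\type(v)-L\leq i\leq\type(v)$ (cutting the buy overcharge from $\log k$ to $L$), at the price of routing in $O(\log k/L)=O(\sqrt{\log k})$ phases through intermediate waypoints (multiplying the rent cost by $L$). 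Your $\sqrt{\log k}$ appears in the right place numerically, but the mechanism producing it in the paper is this buy/rent trade-off, not an overlap bound for correlated copies.
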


A question that remains open is whether there is an $O(\log
k)$-competitive algorithm for this problem.  The only other
deterministic oblivious algorithm we know for the buy-at-bulk problem is
a derandomization of the oblivious network design algorithm
from~\cite{GuptaHR06}, which requires the metric to be given in advance.

\textbf{LASTs and Spanners.}  A central contribution of our work is to
demonstrate the utility of online spanners in building networks
exploiting economies-of-scale. We record the following two theorems on
maintaining LASTs and spanners in an online setting, since they are of
broader interest. These results are near-optimal, as we discuss in the
respective sections.

\begin{restatable}[Online LAST]{theorem}{thmlast}
  \label{thm:last}
  There exists a deterministic online algorithm for maintaining a
  tree with cost $O(\log k)$ times the MST, and a stretch of $7$ for
  distances from terminals to the sink.
\end{restatable}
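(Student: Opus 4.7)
The plan is to maintain a tree $T$ online, initialized to $\{r\}$, by a hybrid greedy rule that balances Imase--Waxman-style attachment against direct shortest-path shortcuts to the root. When terminal $v$ arrives, let $u^*$ be its closest node in $T$. If $d(v, u^*) + d_T(u^*) \le 7\,d(v, r)$, attach $v$ by adding the shortest $v$--$u^*$ path in $G$; otherwise add the shortest $v$--$r$ path in $G$, truncated at the first existing tree node $u'$ it encounters. The stretch bound of $7$ is proved by induction on arrival order: the first case enforces the condition directly, while the second case uses the inductive stretch bound at $u'$ together with the fact that $u'$ lies on a shortest $v$--$r$ path to give $d_T(v) = d(v, u') + d_T(u') \le d(v, u') + 7\,d(u', r) = 7\,d(v, r) - 6\,d(v, u') \le 7\,d(v, r)$.

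For the cost, the Imase--Waxman-type edges contribute $O(\log k) \cdot \MST$ by the classical amortized analysis: the nearest-tree-node distances only decrease as the tree grows, so the standard IW argument goes through verbatim. The main technical obstacle is bounding the total cost of the shortcut edges by the same quantity. Whenever a shortcut is triggered for $v$, the failed check gives $d_T(u^*) > 7\,d(v,r) - d(v, u^*) \ge 6\,d(v,r)$ (using $d(v, u^*) \le d(v, r)$ since $r \in T$), which together with the inductive stretch bound $d_T(u^*) \le 7\,d(u^*, r)$ forces $d(u^*, r) > \tfrac{6}{7}\,d(v, r)$; in other words, the nearest tree node is forced to be ``comparably far'' from the sink whenever the shortcut fires.

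The hard part is to convert this structural observation into an amortized charging scheme that pays for each shortcut against $O(1)$ units of committed IW cost (or some lifted IW-like potential), so that the total shortcut cost is also $O(\log k) \cdot \MST$. The stretch constant $7$ is calibrated precisely so that both the stretch induction and this cost amortization close simultaneously. Note that the $O(\log k)$ cost bound is best possible, matching the $\Omega(\log k)$ lower bound of Imase and Waxman for online Steiner tree, which is the single-cable-type special case of our problem.
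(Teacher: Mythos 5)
Your algorithm and stretch analysis are essentially the paper's: the paper also maintains a greedy (Imase--Waxman) tree $T$ as a backbone and, for each arriving $v$, either routes via the existing structure (when the resulting path has length at most $7\,d(v,r)$) or installs a direct shortcut to $r$; the stretch $7$ then holds by construction. The greedy part costing $O(\log k)\cdot \MST$ is standard, as you say. But the entire non-trivial content of the theorem is the bound on the total shortcut cost, and this is exactly the step you label ``the hard part'' and leave open. A proof proposal that defers the one genuinely hard step is not a proof, so there is a real gap here.

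Moreover, the structural observation you extract --- that a shortcut for $v$ forces the nearest tree node $u^*$ to satisfy $d(u^*,r) > \tfrac{6}{7}\,d(v,r)$ --- is not the lever the argument actually needs; it says something about a single shortcut in isolation, whereas the cost bound requires controlling the \emph{interaction between distinct shortcuts}. The paper's charging works as follows. Let $Z$ be the set of terminals that triggered a shortcut, so the shortcut cost is $\sum_{v\in Z} d(v,r)$. For each $v\in Z$, let $L_v$ be the initial segment of $v$'s path to $r$ in the greedy tree $T$ that lies inside the ball of radius $d(v,r)/2$ around $v$; this segment has length at least $d(v,r)/2$. The key claim is that for any two $v,v'\in Z$ one has $d_T(v,v') > d(v,r)/2 + d(v',r)/2$: if not, the later of the two (say $v$) could have reached $r$ by walking in $T$ to $v'$ and then using the already-installed edge $(v',r)$, giving a path of length at most $\tfrac{3}{2}d(v',r)+\tfrac{1}{2}d(v,r)$, and combining this with the failed test $d(P_v) > 7\,d(v,r)$ yields $d(v',r) < d(v',r)$, a contradiction. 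Hence the segments $L_v$ are pairwise disjoint, so $\sum_{v\in Z} d(v,r) \le 2\sum_{v\in Z} |L_v| \le 2\,c(T)$, and the total cost is $O(\log k)\cdot\MST$. Note also that for this charging to be available, the greedy edge $(v,u^*)$ must be recorded in the backbone tree $T$ \emph{even when the shortcut fires} (the paper adds it unconditionally); your variant omits it in that case, which would break the disjoint-segment argument as stated.
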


\begin{restatable}[Online Spanner]{theorem}{thmspanner}
  \label{thm:spanner}
  There exists a deterministic online algorithm that given $k$ pairs of
  terminals, maintains a forest with cost $O(\log k)$ times the Steiner
  forest on the pairs, and a stretch of $O(\log k)$ for distances
  between all given pairs of terminals. Moreover, the total number of edges 
  in the forest is $O(k)$, i.e. linear in the number of terminals.
\end{restatable}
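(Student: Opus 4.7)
The plan is to build the spanner as a union of two online substructures, paralleling the two-component nature of the online LAST construction of Theorem \ref{thm:last}: a base Steiner forest $F_1$ that carries the cost guarantee, and a shortcut forest $F_2$ that enforces the stretch guarantee. The output spanner is $F = F_1 \cup F_2$ in the metric completion, so each ``edge'' is a single metric edge of length equal to its endpoints' distance, and edges are only added (never removed) as pairs arrive.

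\textbf{Step 1: Base Steiner forest $F_1$.} On the arrival of pair $(s_i,t_i)$, I run a known online Steiner-forest algorithm (e.g.\ Berman--Coulston or an Awerbuch--Azar-style greedy) which connects each new endpoint to the closest existing component (or to its partner if both are isolated) using at most $O(1)$ metric edges per pair. Over all $k$ pairs this gives $|F_1|=O(k)$ and $\cost(F_1)=O(\log k)\cdot\OPT_{\text{SF}}$, where $\OPT_{\text{SF}}$ is the optimal Steiner forest on the pairs.

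\textbf{Step 2: Shortcut augmentation $F_2$.} After incorporating the pair into $F_1$, test whether $d_{F_1\cup F_2}(s_i,t_i)\le \alpha\cdot d(s_i,t_i)$ for a stretch target $\alpha=\Theta(\log k)$. If so, do nothing; otherwise add the direct metric edge $(s_i,t_i)$ to $F_2$. Since edges are monotonically added, once a pair satisfies the stretch bound it continues to do so, yielding the $O(\log k)$ stretch for every pair. Because at most one shortcut is added per pair, $|F_2|\le k$, so $|F|=O(k)$.

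\textbf{Step 3: Bounding $\cost(F_2)$ (the main obstacle).} The heart of the proof is showing $\cost(F_2)=O(\log k)\cdot\OPT_{\text{SF}}$. I would partition the shortcuts into distance scales, putting shortcut $(s_i,t_i)$ into scale $j$ when $d(s_i,t_i)\in[2^j,2^{j+1})$. The key claim is a girth/packing property within each scale: if a scale-$j$ shortcut is added, then neither of its endpoints can be within metric distance roughly $2^j$ (a constant fraction of $\alpha 2^j / \alpha$) of the endpoints of any earlier scale-$j$ shortcut, because otherwise the existing $F_1\cup F_2$ would already witness an $\alpha 2^j$-length path between $s_i$ and $t_i$ and the test would not have triggered. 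This separation lets me charge each scale-$j$ shortcut to an essentially disjoint portion of $\OPT_{\text{SF}}$ of length $\Omega(2^j)$: every pair is connected inside $\OPT_{\text{SF}}$ by a path of cost at least $d(s_i,t_i)\ge 2^j$, and the well-separated endpoints force these charge regions to overlap by only a constant factor. Summing over the $O(\log k)$ scales (which covers all relevant distances since all pair-distances lie in an $O(k)$-bounded aspect-ratio window after discarding trivial cases) gives the desired $O(\log k)\cdot\OPT_{\text{SF}}$ cost bound on $F_2$, and hence on $F$.

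\textbf{Expected difficulty.} Steps 1 and 2 and the edge-count bound are essentially bookkeeping. The main obstacle is the charging argument in Step 3: turning the shortcut-trigger condition into a clean metric separation statement among same-scale shortcut endpoints, and then into a disjoint charge against $\OPT_{\text{SF}}$. I expect this will need a dual-fitting or moat-growing analysis in the spirit of Berman--Coulston's online Steiner forest bound, with the stretch parameter $\alpha=\Theta(\log k)$ tuned to give just enough geometric slack for the charges to be (approximately) disjoint within each scale while still matching the $O(\log k)$ competitive factor inherited from $F_1$.
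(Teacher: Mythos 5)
Your Steps 1--2 and the $O(k)$ edge count are fine, but Step~3 --- the cost bound on the shortcuts --- has a genuine gap, and it is precisely the step where the paper's construction diverges from yours. Your claimed separation property (``if a scale-$j$ shortcut is added, neither endpoint can be within metric distance roughly $2^j$ of the endpoints of an earlier scale-$j$ shortcut'') does not follow from the trigger condition, for two reasons. First, the trigger compares against the distance in $F_1\cup F_2$, and metric proximity of two terminals does not imply proximity in the partially built graph, so an earlier shortcut with metrically nearby endpoints need not be reachable cheaply. Second, and more fundamentally, to route $(s_i,t_i)$ through an earlier shortcut $(s_{i'},t_{i'})$ you need \emph{both} $s_i$ close to $s_{i'}$ \emph{and} $t_i$ close to $t_{i'}$; the correct negation of the trigger is therefore a disjunction (``either the sources are far apart or the sinks are'') rather than the conjunction your ball-packing needs. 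This is exactly why the single-sink LAST analysis (where every shortcut shares the endpoint $r$, so only one side must be controlled) does not lift to the multicommodity setting, and your proposal is essentially that lift.

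The paper's algorithm is structurally different in the ways needed to close this gap. It does not only add shortcut edges between mates: it classifies each terminal by $\class(s_i)=\class(t_i)=\lfloor\log_2 d(s_i,t_i)\rfloor$, and at each scale $j$ it adds an \emph{augmentation edge} between \emph{any} two terminals $u,v$ of class at least $j$ with $d(u,v)\in[2^j,2^{j+1})$ whose current stretch exceeds $4\log k$ --- including non-mate pairs --- together with \emph{bridge edges} from $u,v$ to the centers of a $2^j/16$-separated clustering $Z_j$. The bridge edges guarantee that short paths in the meta-graph on clusters translate to short paths in $H$ (Lemma~\ref{lem:meta-path}); this yields a girth bound of $\log k$ on the meta-graph (Lemma~\ref{lem:girth}); the Bollob\'as extremal bound then gives $|A_j|=O(|Z_j|)$; and finally the disjoint balls are placed around the \emph{cluster centers} (not around shortcut endpoints), each of which has a mate at distance at least $2^j$, so that $\OPT$ pays $\Omega(2^j)$ in each ball (Lemma~\ref{lem:cost}). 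In short, the packing argument you want is carried out on a separated net of centers, and the count of expensive edges is tied to that net only via the girth argument; neither ingredient is present in your algorithm, and without the non-mate augmentation edges and bridge edges it is not clear your graph satisfies any useful girth property at all. To repair your proof you would essentially have to import this clustering-plus-girth machinery, i.e., change the algorithm, not just the analysis.
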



Our results on online LAST and multicommodity spanners give us an
optimal $O(\log k)$-competitive deterministic algorithms for the
``single cable version'' of the buy-at-bulk problems, where the function
is a single-piece affine concave function~\cite{MansourP98}. 


\subsection{Our Techniques}
\label{sec:techniques}

We now outline some of the ideas behind our algorithms and the role of
spanner constructions in them. All our algorithms share the same
high-level framework: (1) each terminal $v$ is assigned an integer
\emph{type} $\type(v)$; (2) it is then routed through a sequence of
terminals with increasing types; (3) the routes are chosen using a
spanner-type construction. In particular, each algorithm is specified by
a \emph{type-selection rule}, the sequence of terminals to route
through, and whether to use a spanner or MLAST.

The analysis for our deterministic algorithms also has a common thread: while these algorithms are not based on tree-embeddings, the analysis uses tree-embeddings crucially. Both analyses are based on the analysis framework of \cite{Umboh15} and follow the same template: we fix an HST embedding $T$ of the metric, and charge the cost of the algorithm to the cost of the optimal solution on this tree (losing some factor $\alpha$). Since HSTs can approximate general metrics to within a factor of $O(\log k)$, this gives us an $O(\alpha \log k)$-competitive algorithm.

\paragraph{Functions vs.\ Cables.}
As is common with buy-at-bulk algorithms, we represent the function
$f(x)$ as the minimum of affine functions $\min_i \{ \sigma_i + \beta_i
x\}$. And when we route a path, we even specify which of the linear
functions we use on each edge of this path. Each $i$ is called a ``cable
type'', since one can think of putting down cable-$i$ with an up-front
cost of $\sigma_i$, and then paying $\beta_i$ for every unit of flow
over it.

\paragraph{Non-oblivious algorithm.} In the non-oblivious setting where
we know the function $f$,
we will want to route each terminal $v$ through a path $P(v)$ with
non-decreasing cable types. So $\type(v)$ should simply be the cable of
lowest type that we will install on $P(v)$---how should we choose this
value? It makes sense to choose $\type(v)$ based on the number of other
terminals close to $v$. Intuitively, the more terminals that are nearby, the larger the flow that can be aggregated at $v$, making it natural to select a larger type for $v$.

Once we have chosen the type, the route selection is straightforward:
first we route $v$'s demand to the nearest terminal of higher type using
cable type $\type(v)$. Then we iterate: while $v$'s demand is at a
terminal $w$ (which is not the root $r$), we route it to a terminal $w'$ of
type higher than $\type(w)$ that nearest to $w$, using cable type
$\type(w)$.

Finally, how do we select the path when routing $v$'s demand from $w$ to
$w'$? This is where our Multi-sink LAST (MLAST) construction comes in
handy. We want that for each cable type $i$, the set of edges $\layer_i$
on which we install cable type $i$ has a small total
cost 
while ensuring that each terminal of type $i$ has a short path to its
nearest terminal of higher type. We can achieve these properties by
having $\layer_i$ be an MLAST with the sources being the terminals of
type exactly $i$, and the sinks being terminals of type higher than $i$.

\paragraph{Randomized Oblivious Algorithm.}
While designing oblivious algorithms seems like a big challenge because
we have to be simultaneously competitive for all concave cost functions,
Goel and Estrin~\cite{GoelE05} showed that functions of the form $g_i(x)
= \min\{x, 2^i\}$ form a ``basis'' and hence we just have to be good
against all these so-called ``rent-or-buy'' functions. This is precisely
our goal.

Note that the optimum solution for the cost function $g_0$ is the optimal Steiner tree, and that for the cost function $g_{M}$, for $M \gg k$, is the shortest path tree rooted at the sink. Thus being competitive against $g_0$ and $g_M$ already requires us to build a LAST. Thus it is not surprising that our online spanner algorithm is a crucial ingredient in our algorithm.

There are two key ideas. The first is that to approximate a given
rent-or-buy function, it suffices to figure out which terminals should
be connected to the root via ``buy'' edges---i.e., via edges of cost
$2^i$ regardless of the load on them---these terminals we call the
\emph{``buy'' terminals}. The rest of the terminals are simply connected
to the buy terminals via shortest paths. One way to choose a good set of
buy terminals is by random sampling: if we wanted to be competitive
against function $g_i$, we could choose each terminal to be a ``buy''
terminal with probability $2^{-i}$. (See, e.g.,~\cite{AwerbuchAB04}.) Since in the oblivious case we don't
know which function $g_i$ we are facing, we have to hedge our bets.
Hence we choose each terminal $v$ to have $\type(v) = i$ with
probability $2^{-i}$. Thus terminal $v$ is a good buy terminal for all
$g_i$ with $i \leq \type(v)$.

Next, we need to ensure that the path $P(v)$ we choose for $v$
simultaneously approximates the shortest path from $v$ to the set of
terminals with type at least $i$ for all $i$.  This is where our
online spanner construction comes in handy. For each type $i$, we will
build a spanner $F_i$ on terminals of type at least $i$.

\paragraph{Deterministic Oblivious Algorithm.}
To obtain our deterministic oblivious algorithm, we make two further
modifications. First, we remove the need for randomness by using the
deterministic type selection rule of the non-oblivious algorithm. This
modification already yields an $O(\log^3 k)$-competitive algorithm. A technical difficulty that arises is that our type-classes are no longer nested. Thus it is not obvious how to route from a node of type $i$ to one of a higher type, as these nodes may not belong to a common spanner. Adding nodes to multiple spanners can remedy this, but leads to a higher buy cost; being stingy in adding nodes to spanners can lead to higher rent cost. By carefully balancing these two effects and using a more sophisticated routing scheme, we are able to improve the competitive ratio to $O(\log^{2.5} k)$.

\subsection{Other Related Work}
\label{sec:related}

Offline approximation algorithms for the (uniform) buy-at-bulk network
design problem were initiated in~\cite{SalmanCRS97}---here
\emph{uniform} denotes the fact that all edges have the same cost
function $f(\cdot)$, up to scaling by the length of the edge.  Early
results for approximation algorithms for buy-at-bulk network
design~\cite{MansourP98,AwerbuchA97} already observed the relationship
to spanners, and tree embeddings. Using the notion of probabilistic
embedding of arbitrary metrics into tree
metrics~\cite{AKPW95,Bartal96,Bartal98,Bartal04,FRT04}, logarithmic factor approximations
are readily derived for the buy-at-bulk problem in the offline
setting. A hardness result of $O(\log^{\nicefrac14 - \varepsilon} n)$
shows we cannot do much better in the worst case~\cite{Andrews04}.

For the offline single-sink case, new techniques were developed to get
$O(1)$-approx\-imations~\cite{GuhaMM09}, as well as to prove
$O(1)$-integrality gaps for natural LP formulation~\cite{GargKKRSS01,
  Talwar02}; other algorithms have been given
by~\cite{GKR03a,GrandoniI06,JothiR09}. Apart from its inherent interest,
the single-sink buy-at-bulk problem can also be used to solve other
network design problems~\cite{GRS11}. The oblivious single-sink version
was first studied in~\cite{GoelE05}, and $O(1)$-approximations for this
very general version was derived in~\cite{GoelP12}.

In the setting of online algorithms, the Steiner tree and generalized
Steiner forest problems have tight $O(\log k)$-competitive
algorithms~\cite{BermanC97,ImaseW91}, where $k$ is the number of
terminals. These algorithms work in the model where new terminals only
reveal their distances to previous terminals when they arrive, and the
metric is not known \emph{a priori}. It is well-known that the
tree-embedding result of Bartal~\cite{Bartal96} can be implemented
online to give an $O(\log k \cdot \min(\log \Delta, \log
k))$-competitive algorithm for the online single-sink oblivious
buy-at-bulk problem, where $\Delta$ is the ratio of maximum to minimum
distances in the metric. For online rent-or-buy, Awerbuch, Azar, and
Bartal~\cite{AwerbuchAB04} gave an $O(\log^2 k)$-deterministic and an
$O(\log k)$-randomized algorithm; recently,~\cite{Umboh15} gave an $O(\log
k)$-deterministic algorithm.

If the metric is known a priori, the results depend on $n$, the size of
the metric, and not on $k$, the number of terminals.\footnote{Note that $k$ is at most $n$, and it can be much smaller.} E.g.,
tree-embedding results of~\cite{FRT04} give a randomized $O(\log
n)$-competitive algorithm, or a derandomization of oblivious network
design from~\cite{GuptaHR06} gives an $O(\log^2 n)$-competitive
algorithm.


A generalization which we do not study here is the \emph{non-uniform}
buy-at-bulk problem, where we can specify a different concave function
on each edge. A poly-logarithmic approximation for this problem was
recently given by Ene et al.~\cite{EneCKP15}; see the references therein
for the rich body of prior work.

\section{Preliminaries}
\label{sec:prelims}

Formally, in the online buy-at-bulk problem, we have a complete graph $G
= (V,E)$, and edge lengths $d(e)$ satisfying the triangle inequality. In
other words, we can treat $(V,d)$ as a metric. We have $M$ \emph{cable
  types}.  The $i$-th cable type has \emph{fixed cost} $\fixed_i$ and
\emph{incremental cost} $\incr_i$, with $\fixed_i > \fixed_{i-1}$ and
$\incr_i < \incr_{i-1}$. Routing $x$ units of demand through cable type
$i$ on some edge $e$ costs $(\fixed_i + \incr_i x)d(e)$.

In the single-sink version, we are given a root vertex $r \in
V$. Initially, no cables are installed on any edges. When a terminal $v$
arrives, we install some cables on some edges and choose a path $P(v)$
on which to route $v$'s unit demand. (This routing has to be unsplittable,
i.e., along a single path.) We are allowed to install multiple cables on
the same edge; if $e \in P(v)$ has multiple cables installed on it,
$v$'s demand is routed on the one with highest type, i.e., the one with least incremental cost. The choice of
$P(v)$ and cable installations are irrevocable.  Given a routing
solution, if $\load_i(e)$ is the total amount of demand routed through
cable type $i$ on edge $e$, the total cost is $\sum_{e \in E}
\big[\sum_{i : \load_i(e) > 0} \fixed_i + \incr_i \load_i(e) \big]
d(e)$.  We call $\sum_{e \in E} \sum_{i : \load_i(e) > 0} \fixed_i d(e)$
the \emph{fixed cost} of the solution and $\sum_{e \in E} \sum_{i :
  \load_i(e) > 0} \incr_i \load_i(e) d(e)$ the \emph{incremental cost}
of the solution. Let $\OPT$ denote the cost of the optimal solution.  We
assume the cable costs satisfy the conditions of the following
Lemma~\ref{lem:prune}.



\begin{lemma}~\cite{GuhaMM09}
 \label{lem:prune}
 We can prune the set of cables such that for all the retained cable
 types $i$, (a)~$\fixed_i \geq \str \fixed_{i-1}$, and (b)~$\incr_i
 \leq (1/\str^2) \incr_{i-1}$, so that the cost of any solution using
 only the pruned cable types increases only by an $O(1)$ factor.
\end{lemma}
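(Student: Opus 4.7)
The plan is to retain a subsequence $S$ of the original cables by greedy sub-sampling and to show that any feasible solution can be rerouted to use only cables in $S$ at $O(1)$-factor cost. As a first cleanup I would discard any cable $i$ that is pointwise dominated by another cable $j$ (i.e., $\fixed_j \le \fixed_i$ and $\incr_j \le \incr_i$), which is free because replacing $i$ by $j$ only helps. After this preprocessing, the surviving cables correspond to pieces of the lower envelope of the affine functions $x \mapsto \fixed_i + \incr_i x$, so $\fixed_i$ is strictly increasing and $\incr_i$ strictly decreasing in $i$.

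\textbf{Greedy construction and rerouting.} I would build $S$ greedily by sweeping the surviving cables in increasing index order. Initialize $S := \{1\}$; given the most recently added $j \in S$, insert the next cable $i$ into $S$ as soon as both $\fixed_i \ge \str\,\fixed_j$ and $\incr_i \le (1/\str^2)\incr_j$ hold. Conditions (a) and (b) are then immediate by construction. For any feasible solution and any edge $e$ carrying load $x$ on a cable $i \notin S$, let $j$ and $k$ be the two neighbors of $i$ in $S$ with $j < i < k$, and replace cable $i$ by whichever of $j, k$ is cheaper at load $x$ (so $j$ when $\fixed_j + \incr_j x \le \fixed_k + \incr_k x$, and $k$ otherwise). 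The greedy rule implies that every cable $i'$ strictly between $j$ and $k$ fails at least one of the two growth inequalities relative to $j$: either $\fixed_{i'} < \str\,\fixed_j$ or $\incr_{i'} > (1/\str^2)\incr_j$. A short case analysis---using the break-even relation $\incr_j x \le \fixed_k - \fixed_j + \incr_k x$ in the small-$x$ regime (and its symmetric counterpart in the large-$x$ regime), together with the structural bound on either $\fixed_{i'}$ or $\incr_{i'}$ furnished by the skipped cables---then yields
\[
\min\{\fixed_j + \incr_j x,\ \fixed_k + \incr_k x\} \;\le\; c\bigl(\fixed_i + \incr_i x\bigr)
\]
for an absolute constant $c$. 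Multiplying through by $d(e)$ and summing over edges increases the total cost by at most a factor $c$, proving the lemma.

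\textbf{Main obstacle.} The technical core is the last case analysis. When a skipped cable $i$ violates only one of the two growth conditions relative to $j$, neither $j$ alone nor $k$ alone is a uniformly good replacement: $j$ has too large an incremental cost for large loads, while $k$ has too large a fixed cost for small loads. Showing that taking the cheaper of the two at each load suffices requires the specific constants $\str = 3$ and $1/\str^2 = 1/9$ to force the fixed-cost and incremental-cost geometric scales to compose cleanly; weaker constants would allow a long run of skipped cables to contribute a compounding rather than bounded approximation loss. Once these constants are verified to balance, the per-edge inequality above is clean enough that no further machinery is needed.
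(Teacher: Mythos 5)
The paper itself does not prove this lemma---it imports it from \cite{GuhaMM09}---so your proposal has to stand on its own, and unfortunately the construction has a genuine gap: the greedy rule ``keep the first cable, then admit the next cable only once \emph{both} growth conditions hold relative to the last retained cable'' does not admit the replacement argument you sketch. Consider three cables $j=(\fixed_j,\incr_j)=(1,1)$, $i'=(2,F^{-2})$, and $k=(F,F^{-2}/9)$ for a large parameter $F$. Fixed costs increase and incremental costs decrease, so all three survive your lower-envelope preprocessing; your greedy retains $\{j,k\}$ (it skips $i'$ since $\fixed_{i'}=2<\str\,\fixed_j=3$, and admits $k$ since $\fixed_k\ge 3$ and $\incr_k\le\incr_j/9$). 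At load $x=F$ the skipped cable costs $\fixed_{i'}+\incr_{i'}x=2+1/F\le 3$, while $\fixed_j+\incr_j x=1+F$ and $\fixed_k+\incr_k x\ge F$, so $\min\{\fixed_j+\incr_j x,\ \fixed_k+\incr_k x\}\ge F$, which is not $O(1)\cdot(\fixed_{i'}+\incr_{i'}x)$. This is exactly the case you flag as the ``main obstacle'' (the skipped cable violates only the fixed-cost condition): $j$ is too expensive incrementally, $k$ is too expensive in fixed cost, and taking the cheaper of the two does not rescue the bound, for any choice of constants. The per-edge inequality at the heart of your argument is therefore false for this construction.

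The missing idea is that the pruning must be two-sided rather than anchored at the first cable of each run. Among a maximal run of cables whose fixed costs are within a factor $\str$ of one another, the retained representative should be the one with the \emph{smallest incremental cost} (the last of the run), and the earlier cables of the run are remapped \emph{forward} onto it, paying a factor $\str$ in fixed cost and nothing in incremental cost; a symmetric pass handles runs of cables whose incremental costs are within a factor $\str^2$, retaining the one with smallest fixed cost and remapping \emph{backward}. In the example above the correct pruning drops $j$ and keeps $i'$, remapping any use of $j$ onto $i'$ at a factor-$2$ loss. Your scheme instead always keeps the first cable of a run, which is precisely what the counterexample exploits. The remaining scaffolding of your proposal (restriction to the lower envelope, per-edge replacement, summing over edges) is sound once the construction is repaired so that every discarded cable has a retained representative whose fixed \emph{and} incremental costs are simultaneously within $O(1)$ factors; establishing the geometric separation (a)--(b) for the retained set after such a two-sided pass requires some additional care but is standard.
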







\subsection{HST embeddings}
\label{sec:hst}

Let $(X,d)$ be a metric over a set of points $X$ with distances $d$ at
least $1$. 

\begin{definition}[HST embeddings~\cite{Bartal96}]
  \label{def:HST}
  A \emph{hierarchically separated tree (HST) embedding} $T$ of metric $(X,d)$ is a rooted tree with height $\lceil \log_2 (\max_{u,v \in X} d(u,v))\rceil$ and edge lengths that are powers of $2$ satisfying:
  \begin{enumerate}[topsep=0pt,noitemsep,label=\emph{\alph*}.]
  \item The leaves of $T$ are exactly $X$.
  \item The length of the edge from any node to each of its children is the same.
  \item The edge lengths decrease by a factor of $2$ as one moves along any root-to-leaf path.
  \end{enumerate}
  For $e \in T$, we use $T(e)$ to denote the length of $e$ and say that $e$ is a \emph{level-$j$ edge} if $T(e) = 2^j$. Furthermore, we write $T(u,v)$ to denote the distance between $u$ and $v$ in $T$. We also use $\leaves(e)$ denote the leaves that are ``downstream'' of $e$, i.e.~they are the leaves that are separated from the root of $T$ when $e$ is removed from $T$.
\end{definition}

The crucial property of HST embeddings that we will exploit in our analyses is the following proposition, which follows directly from Properties (2) and (3) of Definition \ref{def:HST}.
\begin{proposition}
  Let $T$ be a HST embedding $T$ of $(X,d)$. For any level-$j$ edge $e \in T$, we have $d(u,v) < 2^j$ for any $u,v \in \leaves(e)$.
\end{proposition}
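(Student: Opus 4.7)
The plan is to exploit the rigid halving structure of HST edge lengths that Properties (2) and (3) impose. Let $w$ be the lower endpoint of the edge $e$, so that $\leaves(e)$ coincides with the leaves of the subtree $T_w$ of $T$ rooted at $w$. Property (3) forces the edges one level below $w$ to have length $2^{j-1}$, and each subsequent level halves, while Property (2) ensures that sibling edges emerging from any internal node of $T_w$ share a common length. Consequently, along any root-to-leaf path from $w$ downward, the edge lengths form the geometric sequence $2^{j-1}, 2^{j-2}, \ldots, 1$, which telescopes to $2^j - 1$.

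The main step is to convert this into a diameter bound for the leaves of $\leaves(e)$. For two leaves $u, u' \in \leaves(e)$ with lowest common ancestor $z$ inside $T_w$, the tree distance $T(u, u')$ decomposes as $T(u, z) + T(z, u')$, and each summand is bounded by the appropriate truncation of the halving series starting at $z$. Combining with the non-contracting property of the HST embedding, $d(u,u') \le T(u,u')$, the desired bound $d(u,u') < 2^j$ then follows from the strictness of the geometric-series inequality $\sum_{i=0}^{j-1} 2^i = 2^j - 1 < 2^j$.

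The subtlety I expect to dwell on is the boundary case $z = w$, where the two descending branches together could in principle push $T(u,u')$ up near $2^{j+1}$; the sharper bound $< 2^j$ is secured by appealing to the cluster-diameter property built into the HST embedding construction, namely that the subtree below a level-$j$ edge corresponds to a cluster of the original metric whose diameter is strictly less than $2^j$. This is exactly the feature that makes the embedding an \emph{embedding} of $(X,d)$ rather than merely an abstract tree with matching leaf set, and it is what Properties (2) and (3) are designed to encode at each level of the hierarchy.
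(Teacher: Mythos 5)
Your structural argument is sound exactly up to the point where it yields $d(u,v) < 2^{j+1}$: writing $z$ for the lowest common ancestor of $u$ and $v$ in the subtree below $e$, each of $T(u,z)$ and $T(z,v)$ telescopes to at most $2^{j-1} + 2^{j-2} + \cdots < 2^j$, so their sum is bounded only by $2^{j+1}$ --- and the case $z = w$ that you flag is the generic case, not a boundary curiosity. The problem is your proposed repair: you ``appeal to the cluster-diameter property \ldots\ namely that the subtree below a level-$j$ edge corresponds to a cluster of the original metric whose diameter is strictly less than $2^j$.'' That property \emph{is} the proposition being proved, so the decisive step of your proof is circular. It is also not accurate that Properties (2) and (3) ``encode'' this fact: they constrain only the tree's edge lengths and say nothing about $d$. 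Indeed, a non-contracting tree satisfying Definition~\ref{def:HST} verbatim can violate the bound (take $X$ to be two points at distance $5$, hung as the two leaves of a height-$3$ tree with edge lengths $4,2,1$ down each branch; the top edge is a level-$2$ edge, $T(u,v)=6 \ge 5 = d(u,v)$, yet $d(u,v) = 5 > 2^2$). The factor-$2$ improvement is genuinely a property of the Bartal/FRT \emph{construction} --- level-$j$ clusters are formed from balls of radius less than $2^{j-1}$ --- and must be imported as part of what Corollary~\ref{cor:HST} supplies, not derived from the edge-length axioms.

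In fairness, the paper offers no more of a proof than you do (it simply asserts the proposition ``follows directly from Properties (2) and (3)''), and your diagnosis of exactly where the purely structural argument stalls is correct and worth stating. Note also that wherever the proposition is actually invoked (the proof of Lemma~\ref{lem:rent-cost} and of Claim~\ref{clm:disjointness}), only the weaker bound $\diam(\leaves(e)) < 2\,T(e)$ is used, and that weaker bound \emph{is} established by your telescoping-plus-LCA argument, modulo the non-contraction assumption $d(u,v) \le T(u,v)$, which is itself standard but not listed in Definition~\ref{def:HST}. So your argument proves what the paper needs, but not the statement as written; the appeal to the construction's cluster diameters should be made explicit as a hypothesis rather than presented as a consequence of Properties (2) and (3).
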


\begin{corollary}[\cite{AwerbuchA97,FRT04}]
  \label{cor:HST}
  For any online buy-at-bulk instance with $k$ terminals $X$ and distances $d$, there exists a HST embedding $T$ of $(X,d)$ such that $\OPT(T) \leq O(\log k)\OPT$, where $\OPT(T)$ is the cost of the optimal solution for online buy-at-bulk with terminals $X$ in the tree $T$.
\end{corollary}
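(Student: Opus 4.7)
The plan is to apply the FRT theorem~\cite{FRT04} to the $k$-point metric $(X,d)$, which yields a distribution $\dist$ over HST embeddings $T$ such that (i) the leaves of $T$ are exactly $X$, (ii) $T(u,v) \geq d(u,v)$ for every pair $u,v \in X$ and every $T$ in the support, and (iii) $\E_{T \sim \dist}[T(u,v)] \leq O(\log k)\, d(u,v)$ for every pair. By the probabilistic method it suffices to establish $\E_{T \sim \dist}[\OPT(T)] \leq O(\log k)\, \OPT$ and then pick any $T$ in the support meeting this expectation.

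To upper bound $\OPT(T)$ for a fixed $T$, I would exhibit a feasible buy-at-bulk solution on $T$ built from an optimal solution on the original instance. Let $\OPT$ route terminal $v$ along a path $P(v)$ in $G$ with cable loads $\{\load_i(e)\}_{e,i}$. On $T$, route $v$ along the concatenation of the unique tree paths in $T$ corresponding to the successive edges of $P(v)$, installing on every tree edge of each such tree path the same cable type $\OPT$ uses on the corresponding original edge. (If $\OPT$ uses Steiner nodes outside $X$, a standard metric-completion reduction handles this at $O(1)$ loss.)

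The cost of this induced solution on $T$ is bounded above by
\[
\sum_{(u,v) \in E}\; \sum_{i\,:\,\load_i((u,v))>0} \bigl(\fixed_i + \incr_i\,\load_i((u,v))\bigr)\,T(u,v),
\]
since $\fixed_i \cdot T(u,v)$ pays to install cable $i$ on every tree edge comprising the tree path from $u$ to $v$, while $\incr_i\,\load_i((u,v))\cdot T(u,v)$ pays for the $\load_i((u,v))$ units of flow routed along that tree path. Taking $\E_{T \sim \dist}[\cdot]$ of this inequality and then applying $\E[T(u,v)] \leq O(\log k)\,d(u,v)$ edge-by-edge gives $\E_{T \sim \dist}[\OPT(T)] \leq O(\log k)\,\OPT$, which is what we need.

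The main point to watch is the fixed-cost overcount when several edges used by $\OPT$ traverse shared tree edges in $T$: the displayed sum charges $\fixed_i$ once per contributing original edge, which is only an upper bound on the actual charge of $\fixed_i$ once per distinct tree edge carrying cable $i$. Since we only need an upper bound on $\OPT(T)$, the inequality goes the right way. Everything else is routine linearity, in the underlying edge length, of both the fixed and incremental cost terms.
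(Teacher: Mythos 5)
Your argument is the standard one that the paper relies on by citing \cite{AwerbuchA97,FRT04}: apply FRT to $(X,d)$, push the optimal solution onto the tree edge-by-edge using linearity of the cost in the edge length (overcounting fixed costs only helps the upper bound), take expectations, and conclude by the probabilistic method. This matches the intended proof, which the paper does not spell out; the only cosmetic remark is that the domination property $T(u,v)\geq d(u,v)$ is not needed for this direction of the corollary.
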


\subsection{Decomposition into rent-or-buy instances}
\label{sec:rob-decomp}

Since buy-at-bulk functions can be complicated, it will be useful to
deal with simpler \emph{rent-or-buy} functions where to route a load of
$x$ on an edge, we can either \emph{buy} the cable for unlimited use at
its buy cost, or pay the rental cost times the amount $x$. Given a
buy-at-bulk instance as above, for each $i$, define the rent-or-buy
instance with the rent-or-buy function $f_i(x) = \min\{\fixed_i,
\incr_{i-1}x\}$. Let $\OPT_i$ to be the cost of the optimum solution
with respect to this function $f_i$. Note that under this function when
the load aggregates up to $\frac{\fixed_i}{\incr_{i-1}}$, it becomes
advantageous to switch from renting to buying.  


The following lemma will prove very useful, since we can charge
different parts of our cost to different $\OPT_i(T)$s for some HST $T$, and
then sum them up to argue that the total cost is $O(1) \OPT(T)$,
and hence  $O(\log k) \OPT$ using Corollary~\ref{cor:HST}.

\begin{lemma}[$\OPT$ Decomposition on Trees]
  \label{lem:decomp}
  For every tree $T$, we have $\sum_i \OPT_i(T) \leq O(1) \OPT(T)$.
\end{lemma}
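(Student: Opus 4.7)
The plan is to reduce this global inequality to a single pointwise bound and then verify that bound using the geometric pruning guaranteed by Lemma~\ref{lem:prune}. The key first observation is that because $T$ is a tree, every demand is forced onto its unique tree-path to the root, so the load $x_e$ on each edge $e\in T$ is the same regardless of which cost function we consider; moreover, splitting flow across multiple cables on a single edge only adds extra fixed costs, so on a tree the optimum uses a single cable per edge. Hence
\[
\OPT(T) \;=\; \sum_{e \in T} d_T(e)\,\min_j\bigl(\sigma_j + \beta_j x_e\bigr), \qquad
\OPT_i(T) \;=\; \sum_{e \in T} d_T(e)\,\min\bigl(\sigma_i,\ \beta_{i-1} x_e\bigr),
\]
and summing the second equation over $i$ reduces the lemma to the single-variable inequality
\[
\sum_i \min\bigl(\sigma_i,\ \beta_{i-1}x\bigr) \;\leq\; O(1)\cdot\min_j\bigl(\sigma_j + \beta_j x\bigr) \qquad \text{for every } x \geq 0.
\]

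For this pointwise inequality, fix $x$, pick a minimizer $j^{\star} = \argmin_j(\sigma_j+\beta_j x)$, and set $M = \sigma_{j^{\star}} + \beta_{j^{\star}} x$. The plan is to split the sum at $j^\star$ and bound each piece by $M$. For $i \leq j^{\star}$, bound $\min(\sigma_i,\beta_{i-1}x)\leq \sigma_i$; by Lemma~\ref{lem:prune}(a) the fixed costs grow by a factor of at least $\str=3$ per index, so $\sum_{i\leq j^{\star}}\sigma_i \leq \sigma_{j^{\star}}\sum_{k\geq 0} 3^{-k} = O(\sigma_{j^{\star}}) \leq O(M)$. For $i > j^{\star}$, bound $\min(\sigma_i,\beta_{i-1}x)\leq \beta_{i-1}x$; by Lemma~\ref{lem:prune}(b) the incremental costs decay by a factor of at least $\str^2=9$ per index, so $\sum_{i>j^{\star}}\beta_{i-1}x \leq \beta_{j^{\star}}x\sum_{k\geq 0} 9^{-k} = O(\beta_{j^{\star}}x) \leq O(M)$. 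Adding yields $\sum_i \min(\sigma_i,\beta_{i-1}x) \leq O(M)$, and summing this edge by edge in $T$ gives $\sum_i \OPT_i(T) \leq O(1)\OPT(T)$.

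I do not anticipate a real obstacle; the only mild subtlety is the choice of split point. A natural first instinct is to split at the ``breakeven'' index where $\sigma_i \approx \beta_{i-1}x$ (the point at which $f_i$ switches from renting to buying), but then one would still need to compare $\sigma_i + \beta_i x$ back to $\min_j(\sigma_j+\beta_j x)$; splitting at the minimizer $j^{\star}$ directly bypasses that comparison. The parameters $\str$ in Lemma~\ref{lem:prune} are tuned precisely so that the two geometric tails remain summable against $\sigma_{j^{\star}}$ and $\beta_{j^{\star}} x$, respectively.
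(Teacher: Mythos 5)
Your proof is correct and follows essentially the same route as the paper: both reduce the claim to a per-edge pointwise inequality $\sum_i \min\{\fixed_i, \incr_{i-1}x\} \leq O(1)\min_j\{\fixed_j + \incr_j x\}$ (with $x = |X(e)|$, the load forced on edge $e$ by the tree routing) and then split the sum at an index $j$, bounding the low-index tail by the geometrically increasing $\fixed_i$'s and the high-index tail by the geometrically decreasing $\incr_i$'s from Lemma~\ref{lem:prune}. The only cosmetic difference is that the paper proves the bound for an arbitrary split index $j$ and then takes the minimum, while you split directly at the minimizer $j^\star$; these are equivalent.
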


\begin{proof}
  For an edge $e \in T$, let $\charge^*_i(e)$ and $\charge^*(e)$ be the costs incurred by $\OPT_i(T)$, and $\OPT(T)$ respectively, on $e$. We will show that for every edge $e \in T$, we have $\sum_i \charge^*_i(e) \leq O(1) \charge^*(e)$. Let $X(e)$ be the set of terminal-pairs whose tree paths (in the single-sink case, terminals whose paths to $r$) in $T$ include $e$, and $T(e)$ denote the length of the tree edge $e$. So, $\charge^*_i(e) = T(e) \cdot \min \{\fixed_i, \incr_{i-1} |X(e)|\}$ and $\charge^*(e) = T(e) \cdot \min_i \{\fixed_i + \incr_i|X(e)|\}$. For any fixed $j$, we have
  \begin{align*}
    \sum_i \min \{\fixed_i, \incr_{i-1} |X(e)|\}
    &\leq \sum_{i \leq j} \fixed_i + \sum_{i > j} \incr_{i-1}|X(e)| \\
    &\leq O(1) (\fixed_j + \incr_j|X(e)|),
  \end{align*}
  where the last inequality follows from the fact that the fixed costs $\fixed_i$ increase geometrically with $i$ and the incremental costs $\incr_i$ decrease geometrically with $i$, as assumed in Lemma~\ref{lem:prune}.

  Thus, $\sum_i \charge^*_i(e) \leq O(1)\charge^*(e)$ for every edge $e \in T$ and so $\sum_i \OPT_i(T) \leq O(1) \OPT(T)$.
\end{proof}

The next lemma proves that the rent-or-buy functions form a ``basis''
for buy-at-bulk functions. For $i\in \mathbb{Z}_{\geq 0}$, define the
rent-or-buy function $g_i(x) = \min\{x, 2^i\}$. (See,
e.g.,~\cite[Section~2]{GoelP12}.) 

\begin{lemma}[Basis of Rent-or-Buy Functions]
  \label{lem:basis}
  Fix some routing of demands. If for every $i$, the cost of this
  routing under $g_i$ is within a factor of $\rho$ of the optimal
  routing for $g_i$, then for any monotone, concave function $f$ with
  $f(0) = 0$, the cost of the routing under $f$ is within $O(\rho)$ of
  the optimal cost of routing under $f$.
\end{lemma}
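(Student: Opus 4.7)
The plan is to express the cost functional's linearity in its cost-function argument: write any admissible $f$ as an $O(1)$-accurate non-negative combination of the rent-or-buy kernels $g_i(x)=\min(x,2^i)$, and then invoke the hypothesis termwise.

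\textbf{Decomposing $f$.} A monotone concave $f$ with $f(0)=0$ has a non-increasing derivative $f'\ge 0$, and hence admits the integral representation
$$f(x) \;=\; Bx \;+\; \int_0^{\infty} \min(x,s)\, d\mu(s),$$
where $B=\lim_{t\to\infty} f'(t)\ge 0$ and $\mu$ is the non-negative Borel measure on $(0,\infty)$ determined by $f'(t)=B+\mu((t,\infty))$. Bucketing $s$ by dyadic intervals and using the elementary inequality $g_j(x)\le \min(x,s)\le g_{j+1}(x)\le 2 g_j(x)$ for every $s\in [2^j,2^{j+1})$ and every $x\ge 0$, I obtain non-negative weights $\alpha_i\ge 0$ with $f(x)=\Theta\big(\sum_i \alpha_i\, g_i(x)\big)$ on the range $\{0,1,\dots,k\}$ of possible loads. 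The linear term $Bx$ is folded into the coefficient of $g_I$ for $I=\lceil\log_2 k\rceil$ (since $g_I(x)=x$ when $x\le k$), and any mass $\mu$ places on $(0,1)$ contributes a single load-independent constant on integer loads $\ge 1$ which is absorbed into the $g_0$ coefficient.

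\textbf{Transferring the guarantee.} Since $\cost_f(P)=\sum_e d(e)\, f(\load_e(P))$ is linear in $f$, the decomposition yields $\cost_f(P)=\Theta\big(\sum_i \alpha_i \cost_{g_i}(P)\big)$. Let $P^*_f$ and $P^*_i$ denote optimal routings for $f$ and for $g_i$ respectively. The hypothesis gives $\cost_{g_i}(P)\le \rho\, \cost_{g_i}(P^*_i)$, and optimality of $P^*_i$ for $g_i$ gives $\cost_{g_i}(P^*_i)\le \cost_{g_i}(P^*_f)$. Weighting by $\alpha_i$, summing, and applying the decomposition in the reverse direction to $\sum_i \alpha_i \cost_{g_i}(P^*_f)$ yields $\cost_f(P)\le O(\rho)\, \cost_f(P^*_f)$, which is the claim.

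\textbf{Main obstacle.} The conceptual work is entirely in the decomposition step; once $f$ is sandwiched between two non-negative combinations of $g_i$'s, linearity of $\cost_{(\cdot)}(P)$ in the cost function and the optimality of each $P^*_i$ make the rest routine. The one technical subtlety is that $\mu$ may be infinite near $0$ (e.g.\ $f(x)=\sqrt{x}$ yields $d\mu(s)\propto s^{-3/2}\,ds$), so one must avoid the naive estimate $\int_0^1 \min(x,s)\,d\mu(s)\le g_0(x)\,\mu((0,1))$ and instead use $\min(x,s)=s$ for $s\le 1\le x$ so that the small-$s$ contribution is the finite constant $\int_0^1 s\,d\mu(s)\le f(1)$, which is absorbed into the $g_0$ term.
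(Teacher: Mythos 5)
Your proof is correct. The paper itself gives no proof of Lemma~\ref{lem:basis} and simply defers to \cite[Section~2]{GoelP12}; your argument --- writing a monotone concave $f$ with $f(0)=0$ as a non-negative combination of the kernels $\min(x,s)$ plus a linear term, dyadically bucketing into the $g_i$'s with a factor-$2$ loss, and then transferring the per-$g_i$ guarantee by linearity of $\cost_{(\cdot)}(P)$ in the cost function together with $\cost_{g_i}(P^*_i)\le\cost_{g_i}(P^*_f)$ --- is exactly that standard basis argument. Your handling of the mass of $\mu$ near $0$ is the right fix, and the same absorption into the $g_0$ coefficient also covers the one remaining edge case you did not name explicitly, namely a concave $f$ with a jump at $0$ (i.e.\ $\lim_{x\to 0^+}f(x)>0$), since on integer loads that jump contributes a multiple of $g_0$.
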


\section{LASTs}
\label{sec:lasts}

In the \emph{light approximate shortest-path tree} (LAST) problem, we
are given a graph $G = (V,E)$ with edge costs $d(u,v)$ and a root $r$,
and we seek a low-cost spanning subtree $T$ that preserves distances to
the root. Let $\MST$ denote the cost of the minimum spanning tree. We
say that $T$ is an $(\alpha, \beta)$-LAST if $c(T) \leq \alpha\, \MST$ and
$d_T(v,r) \leq \beta\, d(v,r)$ for all $v \in V$. Without loss of
generality, $G$ is complete and $d$ is a metric on $V$. In the online
setting, terminals arrive one by one; when a terminal arrives, its
distances to previous terminals are revealed and the algorithm picks an
edge that connects the new terminal to a previous one. We prove the
following theorem.

\thmlast*



\paragraph{High-level overview.}
Our algorithm maintains a greedy online Steiner tree $T$ plus a set of
edges $A$ which we call the ``direct'' edges. The subgraph $H$ produced
by the graph consists of a subset of edges from $T \cup A$. When
terminal $v$ arrives, the algorithm considers the shortest $(v,r)$-path
$P_v$ contained in $T \cup A$; if $d(P_v) > 7d(v,r)$, then it adds
$(v,r)$ to $H$ and $A$, otherwise it adds $P_v$ to $H$. The formal
algorithm is presented as Algorithm~\ref{alg:last}.

\begin{algorithm}
\caption{Online LAST}
\begin{algorithmic}[1]
  \label{alg:last}
  \STATE $T \leftarrow \emptyset; H \leftarrow \emptyset; A \leftarrow
  \emptyset$
  \WHILE {terminal $v$ arrives}
  \STATE Let $u$ be closest previous terminal
  \STATE Add $(u,v)$ to $T$
  \STATE Let $P_v$ be the shortest $(v,r)$-path in $T \cup A$
  \IF {$d(P_v) > 7d(v,r)$}
  \STATE Add $(v,r)$ to $H$ and $A$
  \ELSE
  \STATE Add $P_v$ to $H$
  \ENDIF
  \ENDWHILE
\end{algorithmic}
\end{algorithm}

\subsection{Analysis}
Since $H \subseteq T \cup A$ and $c(T) \leq O(\log n)\MST$, it
suffices to bound $c(A)$.
\begin{lemma}
  $c(A) \leq 2c(T)$.
\end{lemma}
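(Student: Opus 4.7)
My plan is a charging argument that maps each direct edge in $A$ to tree edges in $T$. The key observation is that whenever the algorithm adds $(v,r)$ to $A$, the tree-only distance $d_T(v,r)$ already exceeds $7\,d(v,r)$: since $T \subseteq T\cup A$, we have $d_T(v,r) \ge d_{T\cup A}(v,r) = d(P_v) > 7\,d(v,r)$. Thus the unique $v$-to-$r$ path $\pi_v$ in $T$ has total cost more than $7\,d(v,r)$, giving ample slack to charge $d(v,r)$ to tree edges along it. A useful feature of the greedy construction is that once a terminal $w$ arrives with parent $u_w$, the tree edge $(w,u_w)$ is fixed forever, so $\pi_v$ is the same at the moment $v$ arrives as in the final tree $T$.

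For each direct-edge terminal $v$, I would walk up $\pi_v$ starting at $v$, greedily accumulating tree edges into a set $S_v \subseteq \pi_v$ until the total cost first reaches $d(v,r)$. Since $c(\pi_v) > 7\,d(v,r)$, this prefix $S_v$ uses only a small part of $\pi_v$, and by construction $c(S_v) \ge d(v,r)$.

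The crux of the plan is to show that every tree edge lies in at most two of the sets $S_v$. Once this holds, we are done:
\[
   c(A) \;=\; \sum_{(v,r)\in A} d(v,r) \;\le\; \sum_{(v,r)\in A} c(S_v) \;\le\; 2\,c(T).
\]
To establish this bounded-overlap claim, I would fix a tree edge $e=(x,p(x))$ and study the direct-edge terminals $v$ with $e \in S_v$. Each such $v$ is a descendant of $x$ in $T$ with $d_T(v,x) < d(v,r)$. If two such direct-edge terminals $v_i, v_j$ with $i<j$ both claim $e$, I would use the alternative-path inequality $d_T(v_j, v_i) + d(v_i, r) > 7\,d(v_j,r)$ (a consequence of the direct-edge condition for $v_j$ together with the already-present direct edge at $v_i$) alongside the closest-previous greedy rule $d(v_j, u_{v_j}) \le d(v_j, v_i) \le d_T(v_j, v_i)$ to derive sharp spacing constraints between direct-edge terminals sitting inside a common subtree rooted at $x$.

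The main obstacle will be making this bounded-overlap step fully rigorous: a priori, direct-edge terminals near the root could share tree edges along their $r$-paths, so the argument must carefully exploit both the stretch invariant $d_H(u,r) \le 7\,d(u,r)$ for every previously processed terminal $u$ and the availability of alternative paths through earlier direct edges in $A$ to control how many direct-edge $S$-sets can crowd onto any single tree edge.
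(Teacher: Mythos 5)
Your overall strategy---charge each direct edge $(v,r)\in A$ to an initial segment of $v$'s tree path, and control overlaps using the alternative-path inequality $d_T(v_j,v_i)+d(v_i,r)\geq d(P_{v_j})>7\,d(v_j,r)$---is exactly the right one, and that inequality is indeed the engine of the paper's proof. But there is a genuine gap: the bounded-overlap claim (``every tree edge lies in at most two of the sets $S_v$'') is the entire content of the lemma, and you leave it unproven. Worse, with segments $S_v$ of length $d(v,r)$ it does not appear to follow from the inequalities you list. For $e$ with lower endpoint $x$ to lie in $S_v$ you only get $d_T(v,x)<d(v,r)$, and one can write down a sequence $v_1,v_2,\dots$ of direct-edge terminals with $d(v_i,r)$ decreasing geometrically (ratio slightly below $1/3$), each within tree distance just under $d(v_i,r)$ of the same vertex $x$ on separate branches; this configuration satisfies $d_T(v_j,v_i)+d(v_i,r)>7\,d(v_j,r)$ for all $i<j$ yet has arbitrarily many sets $S_{v_i}$ containing the edge above $x$. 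So the charging as you set it up cannot be closed with a constant of $2$ per edge.

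The paper avoids this by taking segments of \emph{half} the length: $L_v$ is the portion of the tree path within distance $d(v,r)/2$ of $v$, and it proves the segments are pairwise \emph{disjoint} via the claim $d_T(v,v')>d(v,r)/2+d(v',r)/2$ for all $v,v'\in Z$. That claim is derived from precisely your alternative-path inequality: assuming the contrary with $v$ arriving after $v'$ gives $7\,d(v,r)<d(P_v)\leq d_T(v,v')+d(v',r)\leq \tfrac{1}{2}d(v,r)+\tfrac{3}{2}d(v',r)$, hence $d(v,r)<d(v',r)/4$, and then the triangle inequality $d(v',r)\leq d_T(v',v)+d(v,r)$ yields $d(v',r)<d(v',r)$, a contradiction. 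Disjointness of the $L_v$ then gives $c(A)=\sum_{v\in Z}d(v,r)\leq 2\sum_{v\in Z}|L_v|\leq 2c(T)$, the factor $2$ coming from the halved segment length rather than from a two-fold overlap bound. If you shorten your $S_v$ to cost $d(v,r)/2$ and aim for disjointness instead of overlap at most two, your argument goes through.
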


\begin{proof}
  Define $Z = \{v : (v,r) \in A\}$. Note that
  $c(A) = \sum_{e \in A} d(e) = \sum_{v \in Z} d(v,r)$. For each
  $v \in Z$, define $L_v$ to be the segment of the $(v,r)$-path in $T$
  that is contained in a ball centered at $v$ of radius
  $d(v,r)/2$. The lemma follows if we can show that these segments are
  disjoint. The following claim implies that they are disjoint.
  \begin{claim}
    For every $v,v' \in Z$, we have
    $d_T(v,v') > d(v,r)/2 + d(v',r)/2$.
  \end{claim}

  \begin{proof}
    Suppose, towards a contradiction, that for some $v, v' \in Z$, we
    have $d_T(v,v') \leq d(v,r)/2 + d(v',r)/2$ and $v$ arrived after
    $v'$. Consider the point in time when the $v$ arrived. Since
    $(v',r) \in A$, a possible $(v,r)$-path in $T \cup A$, is to take
    the $(v,v')$-path in $T$ followed by $(v',r)$. So
    \[d(P_v) \leq d_T(v,v') + d(v',r) \leq \frac{d(v,r)}{2} +
    \frac{d(v',r)}{2} + d(v',r).\]
    Furthermore, $v \in Z$ implies $d(P_v) > 7d(v,r)$. Therefore,
    $d(v,r) < d(v',r)/4$.

    On the other hand,
    \begin{align*}
      d(v',r)
      &\leq d(v',v) + d(v,r)\\
      &\leq d_T(v',v) + d(v,r)\\
      &\leq \frac{d(v,r)}{2} + \frac{d(v',r)}{2} + d(v,r)\\
      &< \frac{d(v',r)}{8} + \frac{d(v',r)}{2} + \frac{d(v',r)}{4}\\
      &< d(v',r).
    \end{align*}
    This gives the desired contradiction.
  \end{proof}
  Therefore, the segments $L_v$ are disjoint, so
  $c(A) = 2\sum_{v \in Z} |L_v| \leq 2c(T)$.
\end{proof}


\section{Multi-Sink LASTs}
\label{sec:mlast}

Recall that LASTs were Light Approximate-Shortest-path Trees, i.e.,
trees where we maintain the shortest-path distance of sources to a sink,
using a light tree. (This was traditionally done offline, but in
\S\ref{sec:lasts} we show how to maintain this online---though the
reader is not required to know the offline construction for this section.)
In this section, we are interested in the \emph{multi-sink LAST} (MLAST)
problem.  The input is a sequence of terminals in the underlying metric
$G$, each of which is either a \emph{source} or a \emph{sink} (we assume
the first terminal is always a sink), and our algorithm has to maintain
a subgraph $\layer$ such that (a)~the distance of any source to its
closest sink in $\layer$ should be comparable to the distance of that
source to its closest sink in $G$, and moreover (b)~the cost $d(\layer)$
should not be ``too large''.

Note that when a new sink $s$ arrives, it may be close to many sources,
and hence we may need to add ``shortcut'' edges to reduce their distance
to $s$. Moreover, what does it mean for the cost of $\layer$ to not be
``too large'', since the distance from a source to its closest sink can
fall dramatically over time. Our notion is the following. Note that when
a \emph{source} $v$ arrives, it has to pay at least the distance to its
closest terminal (source or sink) at that time, just to maintain
connectivity. Loosely, we want our cost $d(\layer)$ to be not much more
than the sum of these distances. (N.b.: this is the intuition, formally
we will pay $O(2^{\class(v)})$ which will be defined soon.)

\subsection{The Algorithm}

The first terminal to arrive is a sink we denote as $s^\star$. We use
$R$ and $S$ to denote the sets of sources and sinks that have arrived.
For a terminal $u$, let $R(u)$ and $S(u)$ be the sources and sinks that
arrived strictly before $u$.
Our algorithm $\mlast$ maintains a subgraph $\layer$ that consists of
two parts: 
a forest $F$ which is a ``backbone'' connecting each source to some sink
cheaply, and an edge set $A$ which ``augments'' $F$ to ensure that each
source is not too far from the sinks in $\layer = A \cup F$.

The forest $F$ is constructed using nets which we define now. Let
$\Delta > 0$ be some distance scale. Then, a \emph{$\Delta$-net} $\net$
is a subset of terminals (called \emph{net points}) such that every
terminal $v$ has $d(v,\net) < \Delta$ and for every pair of net points
$u,v \in \net$, we have $d(u,v) \geq \Delta$. Initially, both $F$ and
$A$ are empty. The algorithm maintains a $2^j$-net $\net_j$ on the
entire set of terminals, for each distance scale $j \in
\mathbb{Z}$. When a terminal $v$ arrives, for every distance scale $j$,
it is added to $\net_j$ if $d(u,\net_j) \geq 2^j$. 
The
\emph{class} of $v$ is defined to be $\class(v) := \max \{j : v \in
\net_j\}$, i.e., $\class(v)$ is the largest distance scale such that $v$
belongs to the net of that scale. Hence the class of the first
terminal $s^\star$ is $\infty$. 

Let $u$ be the nearest terminal in
$\bigcup_{j: j > \class(v)} Z_j$, i.e., $u$ is the nearest net-point lying in any net at
a higher scale. 
If the current
vertex $v$ is a source, then the edge $(u,v)$ is added to $F$; otherwise, $F$ is left untouched. Next, the algorithm goes through every source $u$ (including $v$ if it is also a source), checks if $d_{A \cup F}(u,S) \leq \str d(u,S)$, and otherwise adds the edge $(u,u')$ to $A$ where $u' \in S$ is the nearest sink to $u$. 
(See Algorithm \ref{alg:mlast}.)

\begin{algorithm}
\caption{$\mlast(R,S)$}
\begin{algorithmic}[1]
  \label{alg:mlast}
  \STATE $\net_j \leftarrow \emptyset$ for all $j$, $A \leftarrow \emptyset$, $F \leftarrow \emptyset$, $\Delta = 0$;
  \WHILE {terminal $v$ arrives}
  \STATE If $d(v,S(v)) > \Delta$, set diameter $\Delta \leftarrow d(v,S(v))$
  \FOR { all $j \in \mathbb{Z}$}
  \IF {$d(v,\net_j) \geq 2^j$}
  \STATE Add $v$ to $\net_j$
  \ENDIF
  \ENDFOR
  \STATE $\class(v) \leftarrow \max \{j : v \in \net_j\}$
  \IF {$v \in R$} 
  \STATE Add edge $(u,v)$ to $F$ where $u$ is a terminal of higher class nearest to $v$
  \ENDIF
  \FOR {$x \in R$}
  \IF {$d_{A \cup F}(x,S) > \str d(x,S)$}
  \STATE Add $(x,x')$ to $A$ where $x'$ is the terminal in $S$ nearest to $u$
  \ENDIF
  \ENDFOR
  \ENDWHILE
\end{algorithmic}
\end{algorithm}

\textbf{High-Level Idea of the Analysis.}
We first state the main properties of our MLAST algorithm that we will
use in the remainder of the paper.

\begin{restatable}{lemma}{mlastAPI}
  \label{lem:mlast}
  Given an online sequence of sources $R$ and sinks $S$, the algorithm
  $\mlast$ maintains a subgraph $\layer$ and assigns classes to
  terminals such that
  \begin{enumerate}[noitemsep,label=\emph{\alph*}.]
   \item $d(\layer) \leq O(1) \sum_{v \in R} 2^{\class(v)}$,
   \item $d_{\layer}(u,S) \leq \str d(u,S)$ for every $u \in R$,
   \item if $\class(u) = \class(v) = j$, then $d(u,v) \geq 2^j$,
   \item $d(u, R(u) \cup S(u)) \leq 2^{\class(u)} \leq d(u,s^\star)$ for every $u \in R$, where $s^\star$ is the first terminal that arrived, and $R(u)$ and $S(u)$ are the sources and sinks that arrived before $u$.
  \end{enumerate}
\end{restatable}

Property~(a) is the formal bound on the cost of the MLAST. One should
think of $2^{\class(v)}$ as being the radius of some ``dual'' ball
around source $v$, that we will later use to give lower bounds on our
buy-at-bulk instances. Property~(b) guarantees distance preservation.
Properties~(c) and~(d) ensure that terminals of the same class are
well-separated, and that $\class(u)$ is not too large.
Out of these, property~(a) is the most non-trivial one to prove. The
cost of $F$ is easy to bound, since each source $v$ adds in a single
edge of length at most $O(2^{\class(v)})$.  For the cost of edges in
$A$, the argument at a high level is that the sources adding two
different edges of the same length must be far from each other in
$\layer$ (compared to the length of the edges added) else the later one
could use a path to the earlier one, and then the added edge, to fulfill
the distance guarantee.

\subsection{Analysis of $\mlast$}
We devote the remainder of this section to proving Lemma \ref{lem:mlast}. First, note that by construction, $\mlast$
ensures a short path in $H$ from each source $x \in R$ to the sinks $S$:
$d_H(x,S(x)) \leq \str d(x,S(x))$. Secondly, two terminals
$u$ and $v$ having the same class $j$ must both belong to the
$2^j$-net $\net_j$ so
their distance $d(u,v) \geq 2^j$. Thus, we have proved Properties~(b) and~(c).

Next, we prove Property~(d).

\begin{proof}[Proof of Lemma~\ref{lem:mlast}~(d)]
  Let $\class(u) = j$. Since every terminal that arrived before $u$ is at distance $d(u, R(u) \cup S(u))$ from $u$, we have that $u$ belongs to the net $Z_{j^-}$ where $j^- = \lfloor \log d(u, R(u) \cup S(u)) \rfloor$. Thus, $2^j \geq d(u, R(u) \cup S(u))/2$. On the other hand, $s^\star$ belongs to every net since it was the first terminal to arrive. Thus, $u$ cannot belong to any net $Z_{j'}$ where $j' > \lfloor \log d(u,s^\star) \rfloor$.
\end{proof}
 


To prove Property~(a), we start with some useful properties of the
forest $F$ that will allow us to relate $d(\layer)$ to the classes of
sources. Root each tree $T$ in $F$ using the terminal of highest class
in that tree, which we call the \emph{leader} of $T$.

\begin{claim}
  \label{clm:F}
  Each tree $T$ of the forest $F$ satisfies the following properties:
  \begin{enumerate}[noitemsep]
  \item Its leader is a sink and all other terminals are sources.
  \item On every leaf-to-leader path of $T$, the classes of terminals on
    that path strictly increase. 
  \item If $v \in T$ is a source, then when it arrived, the edge $(u,v)$ was added to $F$, where $u$ is a terminal in $T$ with $\class(u) > \class(v)$ and $d(u,v) \leq 2^{\class(v)+1}$.
  \item If $u$ is the parent of $v$ and $v$ is an ancestor of $t$ in $T$, then the length of the $(t,u)$-path $P$ in $T$ is $d_T(t,u) \leq 2^{\class(v)+2} \leq 2^{\class(u)+1}$.
  \end{enumerate}
  See Figure \ref{fig:F} for an illustration of properties (1), (2), and (3); and Figure \ref{fig:F2} for property (4).
  \end{claim}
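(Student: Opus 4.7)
The plan is to verify the four properties in the order (3) $\to$ (2) $\to$ (1) $\to$ (4), relying directly on the algorithm's rules and on one small geometric-series estimate at the end.

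I would start with property (3) since it is essentially a restatement of what the algorithm does. When source $v$ with $\class(v) = j$ arrives it belongs to $\net_j$ but not to $\net_{j+1}$, and the latter means there is some $w \in \net_{j+1}$ at distance strictly less than $2^{j+1}$ from $v$. Any such $w$ has class at least $j+1$, so the nearest strictly higher-class terminal $u$ satisfies $d(u,v) < 2^{\class(v)+1}$, and $\mlast$ adds exactly this edge $(u,v)$ to $F$. Property (2) is an immediate corollary: walking from a leaf toward the root repeatedly follows these ``parent'' edges, and each such step strictly increases the class.

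For property (1), the key observation is that each source adds exactly one edge when it arrives, and that edge attaches it to its unique higher-class parent's component; hence trees only grow by absorbing new sources and never merge two pre-existing components that each already contain a sink. Since every source has a parent edge but a sink never adds one, the root of any tree (the terminal with no parent) must be a sink; and by (2) this root is also the terminal of maximum class in the tree, making it the leader. No other sink can lie inside the same tree, since that would require a second root. This also forces every non-leader node to be a source.

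The main technical obstacle is property (4), because it combines information along an entire root-directed path rather than a single edge. Let $t = x_0, x_1, \ldots, x_k = v$ be the path from $t$ up to $v$; each of $x_0, \ldots, x_{k-1}$ is a source by (1), so (3) gives $d(x_i, x_{i+1}) < 2^{\class(x_i)+1}$. By property (2), the integers $\class(x_0), \ldots, \class(x_{k-1})$ are strictly increasing and all bounded above by $\class(v) - 1$, so the telescoping sum is controlled by a geometric series:
\[
d_T(t, v) \;<\; \sum_{i=0}^{k-1} 2^{\class(x_i)+1} \;\leq\; \sum_{\ell \leq \class(v)-1} 2^{\ell+1} \;=\; 2^{\class(v)+1}.
\]
Since $v$ itself is a source with parent $u$, property (3) also yields $d(v,u) < 2^{\class(v)+1}$. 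Adding the two bounds gives $d_T(t,u) \leq 2^{\class(v)+2}$, and the second inequality $2^{\class(v)+2} \leq 2^{\class(u)+1}$ comes from $\class(u) \geq \class(v) + 1$, the integer strengthening of (2). The delicate point is just making sure every edge on the path is accounted for exactly once; the clean parent-pointer structure established by (1) and (2) makes this bookkeeping immediate.
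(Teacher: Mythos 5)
Your proof is correct and follows essentially the same route as the paper's: property (3) from the net structure (non-membership in $\net_{\class(v)+1}$ bounds the distance to the chosen higher-class parent), properties (1) and (2) from the parent-pointer structure, and property (4) by summing the geometric series of edge lengths along the strictly-increasing-class path. You simply spell out the argument for property (1) in more detail than the paper does; no gaps.
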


\begin{proof}
  \begin{figure}[t]
    \centering
    \includegraphics[page=1, scale=0.5]{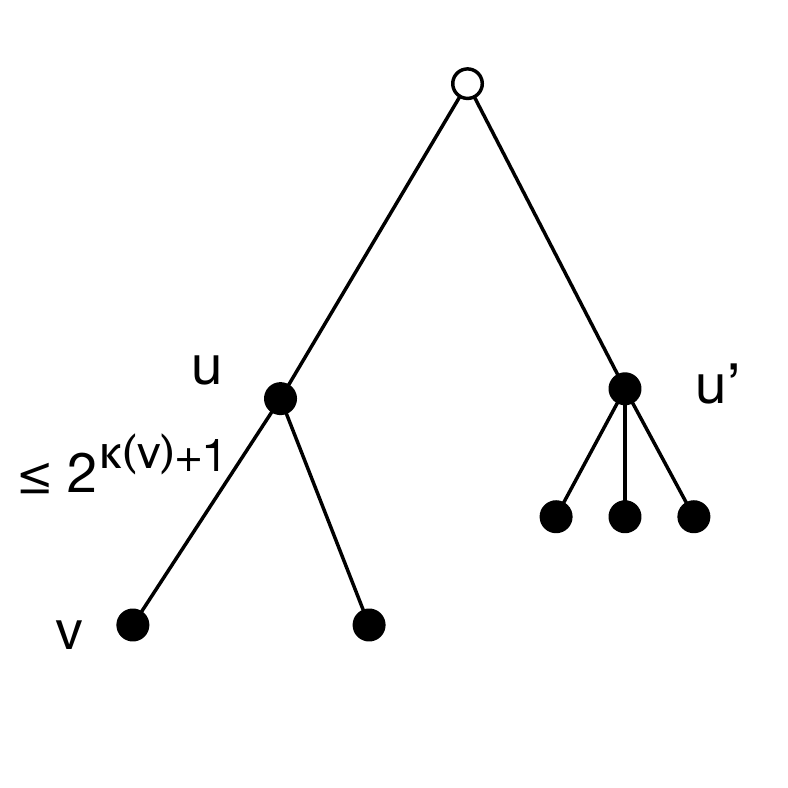}
    \caption{\em An example tree of F. The hollow vertex at the top is
      the leader and a sink, and the other are terminals. When $v$
      arrived, the algorithm added the edge $(u,v)$, making $v$ a child
      of $u$, and the length of the edge is at most
      $2^{\class(v)+1}$. Note that it is possible for two terminals $u$
      and $u'$ to have the same class, and yet $u$'s children have much
      larger class than $u'$'s.}
    \label{fig:F}
  \end{figure}

  Properties (1) and (2) follow from the fact that each source has a
  single edge in $T$ to a terminal of higher class, and sinks are not
  connected to other terminals when they arrive.

  We now prove property (3). Let $(u,v)$ be the edge added when $v$
  arrived. The algorithm chose $u$ to be the nearest terminal of some
  higher class. By definition, all terminals in $\bigcup_{j: j >
    \class(v)} \net_{j}$ have class at least $\class(v)+1$. Moreover,
  the closest net-point to $v$ in $\net_{\class(v)+1}$ is within
  distance $2^{\class(v)+1}$, so $u$ can be no further. Hence, we have
  $d(v,u) \leq d(v, \net_{\class(v)+1}) \leq 2^{\class(v)+1}$.

    \begin{figure}[t]
      \centering
      \includegraphics[page=2, scale=0.5]{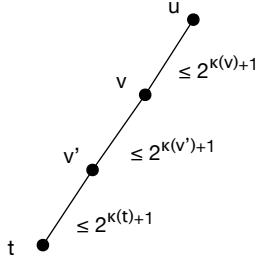}
      \caption{\em An illustration of the proof of property (4). Since $\class(t) < \class(v') < \class(v) < \class(u)$ by properties (2) and (3), the length of the path from $t$ to $u$ is at most $2^{\class(v)+2} \leq 2^{\class(u)+1}$.}
      \label{fig:F2}
    \end{figure}

    Finally, we prove property (4).  Suppose $u$ is the parent of $v$
    and $v$ is an ancestor of $t$ in the tree $T$, and let $P$ be the
    path in $T$ from $t$ to $u$. Property (2) implies that the classes
    of terminals in $P$ are strictly increasing as we move from $t$ to
    $u$. Property (3) now implies that the length of the $t$-$u$ path
    $P$ is at most $\sum_{x \in P: x \neq u} 2^{\class(x)+1} <
    2^{\class(v)+2} \leq 2^{\class(u)+1}$. This concludes the proof of
    the claim.
  \end{proof}

Since each edge in $F$ is added by a distinct source, the total cost
of $F$ (using Claim~\ref{clm:F}(3)) is:
\begin{equation}
  \label{eq:Fcost}
  d(F) \leq 2\sum_{v \in R} 2^{\class(v)}.
\end{equation}
It remains to bound $d(A)$ in terms of the classes of sources. The following claim allows us to find sources to charge $d(A)$ to. Define $A_j = \{(u,v) \in A : d(u,v) \in [2^j, 2^{j+1})\}$. Observe that every edge $(u,v) \in A$ has one endpoint being a source terminal and another being a sink terminal.

\begin{claim}
  \label{clm:netpath}
  Suppose $(u,v) \in A_j$ with $u$ being the source endpoint and $T$ is
  the tree in $F$ containing $u$. Then there exists an ancestor
  $\varphi(u) = z$ of $u$ in $T$ such that $z$ is a source, has class $\class(z) \geq j-2$, and the length of the $(u,z)$-path in $T$ is $d_T(u,z) \leq 2^{j-1}$.
\end{claim}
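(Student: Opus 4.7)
The plan is to walk up the tree $T$ from $u$ toward its leader $\ell$, and choose $z$ to be the last ancestor along this walk that is still at $T$-distance at most $2^{j-1}$ from $u$. Concretely, if $P = u_0, u_1, \ldots, u_m$ is the path in $T$ from $u_0 = u$ to $u_m = \ell$, I would set $i^{\star} := \max\{i : d_T(u, u_i) \leq 2^{j-1}\}$ and let $z := u_{i^{\star}}$. Then $d_T(u,z) \leq 2^{j-1}$ is immediate; it remains to verify (i) $z$ is a source (equivalently, $z \neq \ell$), and (ii) $\class(z) \geq j - 2$.

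For (i) I would use the reason $(u,v)$ was added to $A$ in the first place: at that moment the algorithm saw $d_{A \cup F}(u, S) > 3\, d(u,S)$. Since $v$ is the nearest sink to $u$ and $d(u,v) \geq 2^j$, we get $d(u,S) \geq 2^j$, and hence every sink reachable in $A \cup F$ lies at distance more than $3 \cdot 2^j$ from $u$. The leader $\ell$ is a sink (Claim~\ref{clm:F}(1)) and $T \subseteq F \subseteq A \cup F$, so $d_T(u, \ell) > 3 \cdot 2^j > 2^{j-1}$. This forces $i^{\star} < m$, and hence $z$ is not the leader and is therefore a source by Claim~\ref{clm:F}(1).

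For (ii) the maximality of $i^{\star}$ gives $d_T(u, u_{i^{\star}+1}) > 2^{j-1}$, while Claim~\ref{clm:F}(4), applied with leaf $t := u$, ancestor $v := z$, and its parent $u_{i^{\star}+1}$, gives $d_T(u, u_{i^{\star}+1}) \leq 2^{\class(z)+2}$. Combining the two estimates yields $2^{\class(z)+2} > 2^{j-1}$, i.e.~$\class(z) \geq j - 2$.

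The main delicate point is the boundary case $i^{\star} = 0$, in which $v := z$ coincides with the leaf $t := u$ and Claim~\ref{clm:F}(4) is invoked with ``$v$ an ancestor of itself.'' I would handle this either by treating each vertex as its own ancestor (which is consistent with the telescoping bound used to prove Claim~\ref{clm:F}(4)), or more directly by invoking Claim~\ref{clm:F}(3): when $u$ arrived the edge it added to $u_1$ had length $d(u, u_1) \leq 2^{\class(u)+1}$, and together with $d_T(u, u_1) > 2^{j-1}$ this already forces $\class(u) \geq j - 2$. I expect this small edge case to be the only real wrinkle; the rest of the argument is a direct application of the structural properties of $F$ established in Claim~\ref{clm:F}.
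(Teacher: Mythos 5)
Your proof is correct and follows essentially the same route as the paper: both walk up the tree from $u$ and combine Claim~\ref{clm:F} (parts (1), (3), (4)) with the reason $(u,v)$ was added to $A$; the only difference is that you select the cutoff ancestor by $T$-distance and deduce its class from maximality plus Claim~\ref{clm:F}(4), whereas the paper selects it by class and deduces the distance bound the same way. Your handling of the $i^\star = 0$ edge case via Claim~\ref{clm:F}(3) is fine, and your use of the stretch-violation condition $d_{A\cup F}(u,S) > 3\,d(u,S)$ to rule out $z$ being the leader is a valid (slightly stronger) substitute for the paper's observation that $d(u,\ell) \geq d(u,v) \geq 2^j$.
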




\def\leader{a}
\begin{proof}
  Let $\leader$ be the sink that is the leader of $T$. By definition of the
  algorithm, $v$ is the nearest sink to $u$ at the moment that $(u,v)$
  was added to $A$, so $d(u,v) \leq d(u,\leader) \leq d_T(u,\leader) \leq
  2^{\class(\leader)+1}$, where the last inequality follows from Claim
  \ref{clm:F}(4). Since $(u,v)$ belongs to $A_j$, we also have $d(u,v)
  \geq 2^j$. Putting the previous two inequalities together, $2^j \leq
  2^{\class(\leader)+1}$ and so the leader $\leader$ has $\class(\leader) \geq j-1$.

  If $u$ has $\class(u) \geq j-2$ we are done, so assume $\class(u) \leq
  j-3$. Let $t$ be the highest ancestor of $u$ such that $\class(t) \leq
  j-3$ (clearly $u$ is a candidate). Observe that $t \neq \leader$. Let $z$ be
  $t$'s parent; we claim $z$ satisfies the desired conditions. It has
  class $\geq j-2$ (by the maximality of $t$), and $d(u, z) \leq
  2^{\class(t)+2} \leq 2^{j-1}$ by Claim~\ref{clm:F}(4).  But we'd
  already observed that $d(u,\leader) \geq 2^j$, so $z \neq \leader$; hence $z$ is a
  source by Claim~\ref{clm:F}(1).
\end{proof}

\begin{proof}[Proof of Lemma~\ref{lem:mlast}~(a)]
  To upper bound $d(A)$, we charge the length of each edge $(u,v) \in A$
  to a source vertex, and then show that the charge received by each
  source $z$ is at most $O(1)2^{\class(z)}$. For each edge $(u,v) \in
  A$, the charging scheme is defined as follows: if $(u,v) \in A_j$
  (i.e. its length $d(u,v) \in [2^j, 2^{j+1})$) and $u$ is the source
  endpoint and contained in the tree $T$ of $F$, charge $2^{j+1} \geq
  d(u,v)$ to the ancestor $\varphi(u)$ given by Claim
  \ref{clm:netpath}. Hence $d(A)$ is at most the total charge received
  by all the sources.

  Next, we show that the charge received by every source $z$ is at most
  $O(1) 2^{\class(z)}$. Terminal $z$ is only charged by edges belonging
  to $A_j$ for $j \leq \class(z)+2$; each such edge charges $z$ charges
  $2^{j+1}$. We claim that for each $j$, at most one edge in $A_j$
  charges $z$: this means the total charge received by $z$ is at most
  $\sum_{j \leq \class(z)+2} 2^{j+1} \leq O(1) 2^{\class(z)}$, so $d(A)
  \leq \sum_{z \in R} O(1) 2^{\class(z)}$, which proves the lemma.

  To prove the claim, suppose that there exist two edges $(u,v), (u',v')
  \in A_j$ that charge $z$. Let $u$ and $u'$ be sources, and $v$ and $v'$ be
  sinks. Since both edges charge $z$, $\varphi(u) = \varphi(u') = z$.
Since both edges belong
  to $A_j$, their lengths must satisfy
  \begin{equation}
    \label{eq:uv}
    2^j \leq d(u,v) \leq 2d(u',v').
  \end{equation}
  By definition of the charging scheme, 
  \begin{equation}
    \label{eq:u'z}
    d_T(u',z) \leq 2^{j-1} \qquad \text{and} \qquad d_T(u,z) \leq 2^{j-1}.
  \end{equation}
  Suppose $(u',v')$ was added to $A$ after $(u,v)$. At the time when we
  consider adding $(u',v')$, the sink $v \in S(u')$ and so
  $d_{\layer}(u',S) \leq d_{\layer}(u',v)$. Moreover, there is a path in
  $\layer$ from $u'$ to $v$: the path in $T$ from $u'$ to $u$, followed
  by the edge $(u,v)$, which implies
  \[d_{\layer}(u',S) \leq d_{T}(u',z) + d_{T}(z,u) + d(u,v).\] (See
  Figure \ref{fig:A} for an illustration.) Plugging in the inequalities
  \eqref{eq:uv} and \eqref{eq:u'z}, we get that $d_{\layer}(u',S) \leq
  2^j + 2d(u',v') \leq 3d(u',v')$, where the last inequality follows from the fact that $2^j \leq d(u',v')$. 
  However, this contradicts the condition for adding $(u',v')$ to
  $A$. Hence only one edge in $A_j$ can charge any $z$; this proves the
  claim and hence the lemma.
\end{proof}

    \begin{figure}[t]
      \centering
      \includegraphics[page=3, scale=0.5]{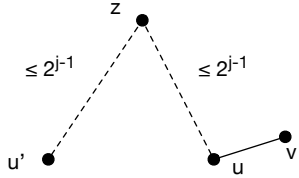}
      \caption{\em An illustration of the situation when two edges $(u,v)$ and $(u',v')$ in $A_j$ charge the same terminal $z$, and $u$ and $u'$ are the source endpoints. The dashed edges represent paths from $u'$ to $z$ and $u$ to $z$ in $T$. There is a path in $H$ from $u'$ to $S$ following the paths in $T$ represented by the dashed edges and the edge $(u,v)$.}
      \label{fig:A}
    \end{figure}

\section{Non-Oblivious Buy at Bulk}
\label{sec:non-obl}

In this section we prove our main result for online buy-at-bulk for the
case when the function $f$ is known. 
\babknown*

For a high-level intuition behind the algorithm, see the discussion in
\S\ref{sec:techniques}. Here we explain precisely how to assign types,
update layers and route demands.

\paragraph{Type assignment.}
Consider a terminal $v$. At the time of its arrival, for each type
$i$, let $d_i(v)$ be the distance to the nearest type-$i$ terminal,
and define the ball $B_i(v) := \{u : d(u,v) \leq d_i(v)/2^3\}$. Let
$n_i(v) := |B_i(v)|$ be the number of terminals in this ball. Terminal
$v$ is assigned type
\[ \textstyle \type(v) := \max \big\{i : n_i(v) \geq \frac{\fixed_i}{\incr_{i-1}}\big\}.\]
To make sense of this, observe that the threshold for being of a certain
type is the same as the threshold for the load at which it is
advantageous to buy rather than rent for that type. This means vertices
assigned a certain type can collect enough ``witnesses'' whose paths to
the root in the optimal solution can be used to pay for edges
constructed by $v$ in the online solution.  

\paragraph{Layers.} 
Let $R_i $ be the current set of terminals of type exactly $i$, and
$X_i := \cup_{i' \geq i} R_{i'}$ be the current set of terminals of
type at least $i$.  Let $X_{i}(v)$ (resp., $R_i(v)$) be the set of
terminals of type at least (resp., exactly) $i$ that arrived strictly
before $v$. Layer $\layer_i$ is maintained by running the online
algorithm $\mlast$ with $R_i$ as sources and $X_{i+1}$ as sinks. This
ensures that every $u \in R_i$ has
$d_{\layer_i}(u,X_{i+1}(v)) \leq \str d(u,X_{i+1}(v))$.

\paragraph{Routing.}
We route $v$'s demand to the root on path $P(v)$ that is constructed
as follows. The path $P(v)$ is constructed iteratively and consists of
$M$ different segments $P_i(v)$, one per type. Initially, $v$'s demand
is located at $v$ and the segments are all empty. While the current
terminal $w$ containing $v$'s demand is not the root, choose $w'$ to
be the terminal in $\layer_{\type(w)}$ with $\type(w') > \type(w)$
nearest to $w$ (nearest in terms of distances in $\layer_{\type(w)}$)
and route $v$'s demand to $w'$ along the shortest path between $w$ and
$w'$ in $\layer_{\type(w)}$.

This completes the description of the algorithm. See Algorithm
\ref{alg:main} for a summary.

\begin{algorithm}
\caption{Online Algorithm for Single-Sink Buy-at-Bulk}
\begin{algorithmic}[1]
\label{alg:main}
  \STATE $\type(r) \leftarrow \infty$
  \STATE $\layer_i \leftarrow \emptyset$ for all $i$
  \WHILE {terminal $v$ arrives}
  \STATE \textit{// Determine type of $v$}
  \STATE Let $d_i(v) = d(v,X_i)$, $B_i(v) = \{u : d(u,v) \leq d_i(v)/2^3\}$, and $n_i(v) = |B_i(v)|$ for $1 \leq i \leq M$
  \STATE $\type(v) \leftarrow \max \{i : n_i(v) \geq \frac{\fixed_i}{\incr_{i-1}}\}$
  \STATE \textit{// Update layers $\layer_i$}
  \FOR {$i \leq \type(v)$}
  \STATE Update $\layer_i$ using $\mlast$ with $R_i(v)$ as sources and $X_{i+1}(v)$ as sinks
  \STATE Install cable of type $i$ on new edges of $H_i$
  \ENDFOR
  \STATE \textit{// Find routing path $P(v)$ to route $v$'s demand to the root}
  \STATE $P_i(v) \leftarrow \emptyset$ for all $i$; $P(v) \leftarrow \emptyset$
  \WHILE {$v$'s demand is not yet at $r$}
  \STATE Let $w$ be the terminal containing $v$'s demand and $i = \type(w)$
  \STATE Let $P_i(v) \subseteq \layer_i$ be the shortest path from $w$ to $X_{i+1}(v)$ in $\layer_i$
  \STATE Route $v$'s demand from $w$ to $X_{i+1}(v)$ along the path $P_i(v)$
  \ENDWHILE
  \ENDWHILE
\end{algorithmic}
\end{algorithm}

\subsection{Analysis}
\label{sec:analysis}







Let us now consider the cost of the algorithm's solution due to each
cable type. The fixed cost due to cables of type $i$ is
$\fixed_i d(\layer_i)$ since they are installed on layer
$\layer_i$. For each terminal $v$ of type $\type(v) \leq i$, $P_i(v)$
is the segment of its routing path consisting of type $i$ cables.
Hence the total cost
of the algorithm's solution is
\begin{gather}
  \ALG := \sum_i \bigg(\fixed_i d(\layer_i) + \sum_{v : \type(v) \leq i}
  \incr_i d(P_i(v))\bigg). \label{eq:alg-cost}
\end{gather}

\paragraph{Proof outline.}
We want to show that $\ALG \leq \OPT(T)$ for any HST embedding $T$,
since Corollary~\ref{cor:HST} would then imply that $\ALG \leq O(\log
k)\OPT$. The key is to decompose $\OPT$ on trees as follows: in
\S\ref{sec:rob-decomp} we defined the rent-or-buy functions $f_i(x) :=
\min\{\fixed_i, \incr_{i-1}\}$, and defined $\OPT_i$ to be the cost of
the optimal solution with respect to $f_i$. Our proof (specifically
Lemmas \ref{lem:fixed-cost} and \ref{lem:rent-cost}) will use the
structure of the HST to give a charging argument showing that
\begin{equation}
\fixed_i d(\layer_i) + \sum_{v: \type(v) < i} \incr_{i-1} d_i(v) \leq
O(1)\OPT_i(T).\label{eq:decomp}
\end{equation}
Roughly speaking, we argue that $d(\layer_i)$ can be charged to the
optimal cost of the witnesses of the terminals of type exactly $i$, and
moreover, the type selection rule forces terminals of type less than $i$
to be spread out.

Thus, if we could prove that $d(P_i(v)) \leq O(1) d_{i+1}(v)$, then we
would be done, because we could decompose the expression for $\ALG$
in~(\ref{eq:alg-cost}) as follows:
\begin{gather}
  \label{eq:ALG-decomp}
  \ALG \leq O(1) \cdot \sum_i \bigg(\fixed_i d(\layer_i) + \sum_{v:
      \type(v) < i} \incr_{i-1} d_i(v)\bigg).
\end{gather}
Inequality \eqref{eq:decomp} and Lemma \ref{lem:decomp} would imply
that $\ALG \leq O(1) \OPT(T)$ for any HST embedding $T$, as desired.

However, $d(P_i(v))$ can be much larger than $d_{i+1}(v)$. This is
because of the ``selfishness'' of the routing scheme: when a terminal
$w$ receives $v$'s demand, it simply routes it to the terminal $w'$ of
higher type nearest to $w$, without any regard as to how far $w'$ is
from $v$. Fortunately, the fact that the incremental costs $\incr_i$
are geometrically decreasing allows us to show that the total
incremental cost incurred over all segments $P_i(v)$ are bounded. In
particular, we show that
$\sum_{v : \type(v) \leq i} \incr_i d(P_i(v)) \leq O(1) \sum_i \sum_{v
  : \type(v) \leq i} \incr_i d_{i+1}(v)$.
The rest of the proof will follow this general outline.



 

\paragraph{Bounding incremental cost.}
We begin by proving the above bound on the incremental cost.
\begin{lemma}
  \label{lem:incr-bound}
  The incremental cost $\sum_i \sum_{v : \type(v) \leq i} \incr_i
  d(P_i(v)) \leq O(1) \sum_i \sum_{v : \type(v) \leq i} \incr_i
  d_{i+1}(v)$. 
\end{lemma}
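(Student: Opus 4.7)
The plan is to fix an arbitrary terminal $v$, bound its contribution to the left-hand side by its contribution to the right-hand side, and then sum over $v$. Consider the sequence $v = u_0, u_1, \ldots, u_K = r$ of terminals visited while routing $v$'s demand; by the routing rule the types $i_k := \type(u_k)$ strictly increase, and segment $P_{i_k}(v)$ is the shortest $u_k$-to-$u_{k+1}$ path in $\layer_{i_k}$. I would first appeal to Lemma~\ref{lem:mlast}(b) (with sources $R_{i_k}$ and sinks $X_{i_k+1}$) to get $d(P_{i_k}(v)) \leq \str \cdot d(u_k, X_{i_k+1}(v))$, then apply the triangle inequality to obtain $d(u_k, X_{i_k+1}(v)) \leq d(u_k, v) + d_{i_k+1}(v)$, and note that $d(u_k, v) \leq \sum_{\ell < k} d(P_{i_\ell}(v))$ by walking along the route.

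Writing $a_k := \incr_{i_k}$, $D_k := d(u_k, X_{i_k+1}(v))$, and $\delta_k := d_{i_k+1}(v)$, these combine into $D_k \leq \delta_k + \str \sum_{\ell < k} D_\ell$. Multiplying by $a_k$, summing over $k$, and swapping the order of the resulting double sum will yield
\[\sum_k a_k D_k \;\leq\; \sum_k a_k \delta_k \;+\; \str \sum_\ell D_\ell \sum_{k > \ell} a_k.\]
The key tool for closing the loop is Lemma~\ref{lem:prune}, which guarantees $\incr_i \leq \incr_{i-1}/\str^2$; since $i_k$ is strictly increasing in $k$, geometric summation gives $\sum_{k > \ell} a_k \leq a_\ell/(\str^2-1)$. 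Plugging this in and rearranging, I obtain $\bigl(1 - \tfrac{\str}{\str^2-1}\bigr)\sum_k a_k D_k \leq \sum_k a_k \delta_k$; with $\str = 3$ the coefficient on the left is $5/8 > 0$, so $\sum_k a_k D_k \leq (8/5)\sum_k a_k \delta_k$. Multiplying through by $\str$ and using $d(P_{i_k}(v)) \leq \str D_k$ gives the per-$v$ bound $\sum_k \incr_{i_k} d(P_{i_k}(v)) \leq O(1) \sum_k \incr_{i_k} d_{i_k+1}(v)$.

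Finally, summing over $v$ is immediate: $P_i(v)$ is nonempty only for $i \in \{i_0, \ldots, i_{K-1}\}$, all of which satisfy $i \geq \type(v)$, so the left-hand side of the lemma equals $\sum_v \sum_k \incr_{i_k} d(P_{i_k}(v))$; and the per-$v$ quantity $\sum_k \incr_{i_k} d_{i_k+1}(v)$ on the right is a subsum of $\sum_{i \geq \type(v)} \incr_i d_{i+1}(v)$, which is exactly $v$'s contribution to the right-hand side of the lemma.

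The main obstacle is the self-referential nature of the bound: each $D_k$ is controlled by earlier $D_\ell$'s through the triangle inequality, so without a damping factor the recursion would be vacuous. The geometric decay of $\incr_i$ in the type, supplied by the pruning in Lemma~\ref{lem:prune}, is precisely what tames the recursion after swapping the order of summation; concretely, I need the contraction factor $\str/(\str^2-1)$ to be strictly less than $1$, which holds with room to spare for $\str = 3$.
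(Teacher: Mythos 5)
Your proposal is correct and follows essentially the same route as the paper's proof: the same per-terminal recursion obtained from Lemma~\ref{lem:mlast}(b), the triangle inequality, and the bound $d(u_k,v) \leq \sum_{\ell<k} d(P_{i_\ell}(v))$, closed off using the geometric decay of the $\incr_i$'s from Lemma~\ref{lem:prune}. The only difference is cosmetic: the paper unrolls the recursion into an explicit double sum before swapping the order of summation, whereas you multiply by $\incr_{i_k}$ first and solve the resulting self-referential inequality via the contraction factor $\str/(\str^2-1)<1$; both are valid, and your variant is, if anything, slightly cleaner in its bookkeeping of the constants.
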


\begin{proof}
  For terminal $v$, suppose its type-$i$ segment $P_i(v)$ is non-empty: 
let $w_i$ be the starting terminal of the segment. Since $P_i(v)$ starts
with a terminal of type $i$ and ends with a terminal of type at least
$i+1$, so $\type(w_i) = i$. 
Now since $P_i(v)$ is a shortest path in $\layer_i$ from $w_i$ to
$X_{i+1}(v)$, Lemma~\ref{lem:mlast}(b) gives us $d(P_i(v)) \leq \str
\cdot d(w_i, X_{i+1}(v))$. Since $d_{i+1}(v)$ is the distance
from $v$ to $X_{i+1}(v)$, $d(w_i, X_{i+1}(v)) \leq d(w_i, v) + d_{i+1}(v)$.
Moreover, the segments $P_{i'}(v)$ for $i' < i$ form a path from $v$ to $w_i$, so
$d(w_i,v) \leq \sum_{i' < i} d(P_{i'}(v))$. Combining these
inequalities, we get $d(P_i(v)) \leq \str \left(d_{i+1}(v) + \sum_{i' <
    i} d(P_{i'}(v))  \right)$. Unrolling the recursion, this
gives
$d(P_i(v)) \leq \sum_{i' \leq i} \str^{i-i'+1}d_{i'+1}(v)$.
Hence, we get the following bound on the incremental cost of $v$:
  \begin{align*}
    \sum_{i \geq \type(v)} \incr_i \cdot d(P_i(v))
    &\leq \sum_{i \geq \type(v)} \incr_i \cdot \sum_{i' \leq i} \str^{i-i'+1}d_{i'+1}(v)
    = \sum_{i \geq \type(v)} d_{i+1}(v) \cdot \sum_{i' \geq i} \str^{i'-i+1}\incr_{i'},
  \end{align*}
  where the equality follows from rearranging the sums. But the
  $\beta_i$s decrease geometrically by a factor of $3^2$, so the last
  sum is dominated by the first term, and is $O(\beta_i)$.  Hence the proof.
\end{proof}

Lemma \ref{lem:incr-bound} implies Inequality \eqref{eq:ALG-decomp}. 
Define $\ALG_i = \fixed_i d(\layer_i) + \sum_{v: \type(v) < i}
\incr_{i-1} d_i(v)$. 
The rest of this section will show that on any HST embedding $T$,
$\ALG_i \leq O(1) \OPT_i(T)$ for every $i$. Lemma \ref{lem:decomp}
then implies that $\ALG \leq O(1)\OPT(T)$ for any HST embedding $T$
and so $\ALG \leq O(\log k)\OPT$ by Corollary~\ref{cor:HST}.

\paragraph{Charging to HST embeddings.}
For the following, fix an HST embedding $T$. Recall that for edge $e
\in T$, $L(e)$ denotes the leaves below $e$, $T(e)$ the length of $e$,
and $T(u,v)$ the distance between $u$ and $v$ in $T$. Also, observe that
an edge $e$ such that $r \notin L(e)$, the terminals in $L(e)$ either
have to buy the edge at cost $\fixed_i T(e)$, or rent it at cost
$\incr_{i-1}|L(e)| T(e)$. We record this lower bound for later.
\begin{equation}
  \label{eq:ROB-lb}
  \OPT_i(T) \geq \sum_{e \in T : r \notin L(e)} T(e)\min \{\fixed_i, \incr_{i-1}|L(e)|\}.
\end{equation}
To upper-bound $\ALG_i$, we bound both $\fixed_i d(\layer_i)$ and
$\sum_{v: \type(v) < i} \incr_{i-1} d_i(v)$ separately by
$\OPT_i(T)$. In each case, we proceed by developing an appropriate
charging scheme that charges to the edges of $T$ and then arguing that
the total charge received by each edge $e \in T$ is at most a constant
times its contribution to the lower bound of $\OPT_i(T)$ in Inequality
\ref{eq:ROB-lb}.


First, we bound $\sum_{v: \type(v) < i} \incr_{i-1} d_i(v)$. The
charging scheme is as follows: for each terminal $v$, charge
$\incr_{i-1}$ to an edge in $T$ whose length is proportional to
$d_i(v)$ and which contains $v$ as a leaf. Then we argue that no edge
is overcharged; in particular, for every edge $e \in T$, the total
number of terminals that can charge $e$ is at most
$\fixed_i/\incr_{i-1}$. Finally, we use the fact that terminals
charging to $e$ are close together (by the bounded diameter property
of HSTs), so if there were more than $\fixed_i/\incr_{i-1}$ terminals
charging $e$, then the one that arrived last would have been assigned
a type of at least $i$.

\begin{lemma}
  \label{lem:rent-cost}
  $\sum_{v: \type(v) < i} \incr_{i-1} d_i(v) \leq O(1)\OPT_i(T)$.
\end{lemma}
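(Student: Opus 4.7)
}

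The plan is to realize the charging scheme sketched in the paragraph preceding the lemma: each terminal $v$ with $\type(v) < i$ will pay $\incr_{i-1} d_i(v)$, and this payment will be charged to a single well-chosen HST edge $e_v$ lying on the root-to-$v$ path in $T$. Summing the charges then yields the bound via \eqref{eq:ROB-lb}.

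To choose $e_v$, I would let $j_v := \lfloor \log_2(d_i(v)/8) \rfloor$ and take $e_v$ to be the (unique) level-$j_v$ edge of $T$ on the path from the root of $T$ to the leaf $v$. By construction $T(e_v) \in (d_i(v)/16,\, d_i(v)/8]$, so the charge satisfies $\incr_{i-1} d_i(v) < 16\,\incr_{i-1} T(e_v)$; this is the ``length proportional to $d_i(v)$'' in the sketch. I would also observe, as a preliminary check, that $r \notin \leaves(e_v)$: otherwise the HST diameter bound gives $d(v,r) < 2^{j_v} \leq d_i(v)/8$, contradicting $d_i(v) \leq d(v,r)$ (since $r \in X_i(v)$, because $\type(r) = \infty$). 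Hence every edge that receives any charge contributes to the lower bound \eqref{eq:ROB-lb}.

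The main obstacle is to bound, for each fixed edge $e$, the number $k_e$ of terminals $v$ with $e_v = e$. I would prove two bounds: trivially $k_e \leq |\leaves(e)|$, and crucially $k_e \leq \fixed_i/\incr_{i-1}$. For the latter, let $v^\star$ be the last-arriving terminal with $e_{v^\star} = e$; then $e$ is at level $j_{v^\star}$, so $T(e) \leq d_i(v^\star)/8$. Every other terminal $v$ charging $e$ lies in $\leaves(e)$, so by the HST diameter proposition in \S\ref{sec:hst}, $d(v,v^\star) < T(e) \leq d_i(v^\star)/8$, placing $v$ inside the ball $B_i(v^\star)$ at the moment $v^\star$ arrived. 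Thus $k_e \leq n_i(v^\star) < \fixed_i/\incr_{i-1}$, by definition of $\type(v^\star) < i$.

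Combining the two steps, the total charge received by $e$ is less than
\[
16\,k_e\,\incr_{i-1}\,T(e) \;\leq\; 16\,\min\{\fixed_i,\;\incr_{i-1}|\leaves(e)|\}\,T(e).
\]
Summing over all edges $e$ with $r \notin \leaves(e)$ and applying \eqref{eq:ROB-lb} yields $\sum_{v:\type(v)<i} \incr_{i-1} d_i(v) \leq O(1)\OPT_i(T)$, as required. The only subtle point in executing the plan is verifying that $B_i(v^\star)$ (defined at $v^\star$'s arrival using $d_i(v^\star)$) really contains all earlier-arriving charging terminals; this is exactly where the $2^3$ inflation in the definition of $B_i(\cdot)$ is used, to absorb the level-$j_v$ diameter slack.
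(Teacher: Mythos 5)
Your proposal is correct and is essentially the paper's own proof: the same scheme charging $\incr_{i-1} d_i(v)$ to the HST edge of level roughly $\log_2 d_i(v) - 3$ above $v$ (with the check that $r$ is not among its leaves), and the same key step showing that the last-arriving terminal $v^\star$ charging an edge $e$ would have all other chargers inside $B_i(v^\star)$ by the HST diameter bound, so that $k_e < \fixed_i/\incr_{i-1}$ since $\type(v^\star) < i$. The differences from the paper are only in the bookkeeping of constants.
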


\begin{proof}
  Consider the following charging scheme. For each terminal $v$ with
  type $\type(v) < i$, if $d_i(v) \in [2^j, 2^{j+1})$, charge
  $2^{j+1}\incr_{i-1}$ to the length $2^{j-4}$ edge $e \in T$ whose
  leaves $L(e)$ contain $v$ but not $r$. Such an edge must exist since
  otherwise $d_i(v) \leq d(v,r) < 2^j$.
  The total charge received by the edges of $T$ is at least $\sum_{v :
    \type(v) < i} \incr_{i-1} d_i(v)$, and only edges $e$ with $r
  \notin L(e)$ were charged.

  Consider an edge $e \in T$ of length $2^{j-4}$. Let $C(e) \subseteq
  L(e)$ be the set of terminals charging $e$. We claim that $|C(e)|
  \leq \fixed_i/\incr_{i-1}$. Note that the total charge received by
  $e$ is \[2^{j+1}\incr_{i-1}|C(e)| ~~=~~ 2^5\, T(e) \incr_{i-1}|C(e)| ~~\leq~~
  2^5\, T(e) \min\{\fixed_i, \incr_{i-1}|L(e)|\}.\]
  By \eqref{eq:ROB-lb}, this proves the lemma.
  
  Now to prove the claim. Suppose for a contradiction, that $|C(e)| >
  \fixed_i/\incr_{i-1}$. Let $v$ be the last-arriving terminal of
  $C(e)$. Since $e$ is a length $2^{j-4}$ edge and $v$ charged $e$, we
  have $d_i(v) \geq 2^j$ and $\type(v) < i$. Moreover, since $C(e)
  \subseteq L(e)$, we have that $\diam(C(e)) < 2^{j-3} \leq
  d_i(v)/2^3$. Thus, every terminal $u \in C(e)$ is within distance
  $d_i(v)/2^3$ from $v$ so $C(e) \subseteq B_i(v)$. This implies that
  $|B_i(v)| \geq |C(e)| \geq \fixed_i / \incr_{i-1}$ so $v$ should
  have been assigned a type that is at least $i$, contradicting the
  fact that $\type(v) < i$. Therefore, $|C(e)| \leq
  \fixed_i/\incr_{i-1}$, as desired.
\end{proof}

\paragraph{Bounding the Fixed Cost of $\layer_i$.} There
are three steps to the proof. The first step is to use Lemma
\ref{lem:mlast} to charge the fixed cost to the terminals of type
$i$. The second is to argue that since each terminal $v$ of type $i$
has at least $\fixed_i/\incr_{i-1}$ witnesses nearby, $v$ can use the
cost incurred by $\OPT_i(T)$ in routing its witnesses to pay off the
charge it accumulated. Finally, we use the fact that terminals of type
$i$ that accumulate similar charges must be far apart to argue that no
witness is overcharged.
  
  \begin{lemma}
    \label{lem:fixed-cost}
    $\fixed_i d(\layer_i) \leq O(1)\OPT_i(T)$.
  \end{lemma}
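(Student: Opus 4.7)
The plan is to apply Lemma~\ref{lem:mlast}(a) to reduce the target inequality $\fixed_i d(\layer_i) \leq O(1)\OPT_i(T)$ to the claim $\sum_{v \in R_i} \fixed_i \cdot 2^{\class_i(v)} = O(\OPT_i(T))$, and then establish this via a charging scheme into the HST edges, in the spirit of the proof of Lemma~\ref{lem:rent-cost} above. For each source $v \in R_i$ with $\class_i(v) = j$, I use the fact that the type-selection rule guarantees $n_i(v) \geq N := \fixed_i/\incr_{i-1}$ witnesses in the ball $B_i(v)$ of radius $d_i(v)/8 \leq 2^{j-2}$ (using $d_i(v) \leq 2 \cdot 2^j$ from Lemma~\ref{lem:mlast}(d)). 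Pick any $N$ such witnesses, call the set $W(v)$, and split $v$'s charge $\fixed_i 2^j$ into $N$ pieces of $\incr_{i-1} 2^j$, one assigned to each $u \in W(v)$. Each witness $u$ then routes its piece to the unique level-$j$ edge on its up-path to $r$ in $T$; this edge exists because $d(u,r) \geq d(v,r) - d(u,v) \geq 2^j - 2^{j-2} = \Omega(2^j)$ (where $d(v,r) \geq 2^j$ follows from Lemma~\ref{lem:mlast}(d) applied with $s^\star = r$), and it has $r \notin L(e)$, so it contributes to the lower bound~\eqref{eq:ROB-lb}.

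Next, I bound the total charge received by any edge $e$ at level $\ell$ with $r \notin L(e)$, aiming for $\mathrm{charge}(e) \leq O(T(e) \cdot \min\{\fixed_i, \incr_{i-1}|L(e)|\})$. By Lemma~\ref{lem:mlast}(c), any two class-$\ell$ sources are at metric distance $\geq 2^\ell$; since each of their witness balls has radius $\leq 2^{\ell-2}$, the balls (and hence the picked $W(v)$'s) are pairwise disjoint across class-$\ell$ sources. Consequently each $u \in L(e)$ contributes to at most one pair $(v,u)$ with $\class_i(v) = \ell$, which gives the bound
\[
\mathrm{charge}(e) \;\leq\; \incr_{i-1}\cdot T(e)\cdot |L(e)|,
\]
matching $\OPT_i(T)_e$ in the rental regime $|L(e)| \leq N$. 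For the buy regime $|L(e)| > N$, I additionally use HST dominance ($T \geq d$) with the diameter bound $\mathrm{diam}_T(L(e)) < 2^\ell$ to deduce that at most one class-$\ell$ source can sit inside $L(e)$ (two would give $T(v,v') \geq d(v,v') \geq 2^\ell$, contradicting the diameter bound), and I cap each source's contribution at $N \cdot \incr_{i-1} 2^\ell = \fixed_i T(e)$; combining this per-source cap with the disjointness of witness balls, I obtain $\mathrm{charge}(e) \leq O(\fixed_i T(e))$, matching $\OPT_i(T)_e$ in the buy regime. Summing over all edges then yields $\sum_{v \in R_i} \fixed_i 2^{\class_i(v)} \leq O(\OPT_i(T))$, and combined with Lemma~\ref{lem:mlast}(a) finishes the proof.

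The main technical obstacle is the second, buy-regime bound. Since the fixed HST $T$ from Corollary~\ref{cor:HST} can arbitrarily distort local metric distances, class-$\ell$ sources located \emph{outside} $L(e)$ may still have some witnesses \emph{inside} $L(e)$, and one must argue carefully that even with multiple such outside contributors, the per-source cap of $\fixed_i T(e)$ together with the disjointness of the $B_i(v)$'s across class-$\ell$ sources prevents the total charge to $e$ from exceeding $O(\fixed_i T(e))$. This is where the phrase ``terminals of type $i$ that accumulate similar charges must be far apart'' from the high-level proof outline does the essential work: the $2^\ell$-separation of class-$\ell$ sources, applied carefully, is what forces the packing bound needed to handle the buy regime.
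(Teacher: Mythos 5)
Your overall route is the same as the paper's: reduce via Lemma~\ref{lem:mlast}(a) to bounding $\sum_{v\in R_i}\fixed_i 2^{\class_i(v)}$, then charge each class-$j$ source's contribution through its $\geq \fixed_i/\incr_{i-1}$ witnesses in $B_i(v)$ (radius $\leq 2^{j-2}$ by Lemma~\ref{lem:mlast}(d)) to HST edges, using Lemma~\ref{lem:mlast}(c) for separation. But there is a genuine gap, and it sits exactly where you flagged it: the buy regime. The root cause is your choice to charge the \emph{level-$j$} edge above each witness. With that choice, two distinct class-$j$ sources $v,v'$ can have witnesses $u\in B_i(v)$, $u'\in B_i(v')$ below the \emph{same} level-$j$ edge $e$: the diameter bound only gives $d(u,u')<2^j$, hence $d(v,v')< 2^j+2\cdot 2^{j-2}=1.5\cdot 2^j$, which is perfectly consistent with the $2^j$-separation of Lemma~\ref{lem:mlast}(c). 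In a general metric, arbitrarily many points can be pairwise $2^j$-separated yet all within $1.5\cdot 2^j$ of each other, so arbitrarily many class-$j$ sources can each deposit up to their full $\fixed_i T(e)$ cap on the same edge $e$. The ``disjointness of witness balls'' only bounds the total number of charging witnesses in $L(e)$ by $|L(e)|$, giving $\mathrm{charge}(e)\leq \incr_{i-1}T(e)|L(e)|$, which in the buy regime exceeds $\fixed_i T(e)$ by a factor of $|L(e)|\incr_{i-1}/\fixed_i$. Your closing paragraph acknowledges this is the hard part but does not supply the argument; the separation property alone, at scale $2^j$, does not give the packing you need.

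The fix is the one the paper uses: charge to edges a few levels down, namely the unique edge of length $2^{j-4}$ above each witness. Then two witnesses of distinct class-$j$ sources below the same such edge would give $d(u,u')<2^{j-3}$ and hence $d(v,v')<2^{j-3}+2\cdot 2^{j-3}<2^j$, contradicting Lemma~\ref{lem:mlast}(c). So each level-$(j-4)$ edge is charged by the witnesses of \emph{at most one} class-$j$ source, and its total charge is at most $\min\{N,|L(e)|\}\cdot\incr_{i-1}2^j=O\bigl(T(e)\min\{\fixed_i,\incr_{i-1}|L(e)|\}\bigr)$, which is what \eqref{eq:ROB-lb} requires; the buy/rent case split becomes unnecessary. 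Dropping the level also repairs a secondary flaw in your existence claim: you only establish $d(u,r)\geq 2^j-2^{j-2}$, which does not rule out $r\in L(e)$ for the level-$j$ edge $e$ above $u$ (that needs $d(u,r)\geq 2^j$), whereas at level $j-4$ the bound $d(u,r)\geq (3/4)2^j>2^{j-3}$ does force $r\notin L(e)$.
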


  \begin{proof}
    The layer $\layer_i$ is an MLAST whose set of sources is $R_i$ (the
    terminals of type exactly $i$) and sinks $S_i = X_{i+1}$, the
    terminals of higher type. Let $\class_i(v)$ be the class assigned by
    this MLAST algorithm to terminal $v$. By Lemma \ref{lem:mlast}, we
    have $\fixed_i d(\layer_i) \leq O(1) \sum_{v \in R_i} \fixed_i
    2^{\class_i(v)}$. Define $R_i(j) = \{v \in R_i : \class_i(v) =
    j\}$. 
    Let $E_j$ be the set of length $2^{j-4}$ edges of $T$. To prove the lemma, we will show that for each $j$,
    \begin{equation}
      \label{eq:fixed-cost}
    \fixed_i|R_i(j)| \leq \sum_{e \in E_j : r \notin L(e)} \min \{\fixed_i, \incr_{i-1}|L(e)|\}.
  \end{equation}
  This would then imply that
  \[\fixed_i d(H_i) \leq O(1) \sum_{v \in R_i} \fixed_i
  2^{\class_i(v)} \leq O(1)\sum_j 2^j \fixed_i|R_i(j)| \leq O(1) \sum_j
  2^{j-4} \sum_{e \in E_j : r \notin L(e)} \min \{\fixed_i,
  \incr_{i-1}|L(e)|\}.\] But $T(e) = 2^{j-4}$ for edges $e \in E_j$, so
  \eqref{eq:ROB-lb} bounds the cost by $O(1)\OPT_i(T)$ to complete the
  proof.

\medskip
  Now to prove~(\ref{eq:fixed-cost}).  We will show
  that for each $v \in R_i(j)$ (i.e., having type $i$, and class $j$ in the
  MLAST $H_i$ corresponding to type-$i$ terminals), every terminal $u$ in
  its ball $B_i(v)$ has to be routed on some level-$j$ edge $e \in E_j$
  with $r \notin L(e)$, and that the level-$j$ edges used by $B_i(v)$ is
  disjoint from the edges used by $B_i(v')$ for any other $v' \in
  R_i(j)$. More formally, for each terminal $v$, define $e_j(v)$ to be
  the unique edge of $E_j$ such that $v \in L(e_j(v))$ and define
  $E_j(v) := \{e_j(u) : u \in B_i(v)\}$. We need the following claims.

  \begin{claim}
    \label{clm:ball}
    For every $v \in R_i(j)$ and $u \in B_i(v)$, we have $d(u,v) \leq 2^{j-3}$.
  \end{claim}

  \begin{proof}
    By definition of $B_i(v)$, we have the following bound: $d(u,v) \leq
    d_i(v)/2^3 = d(v,X_i(v))/2^3$. Now observe that $X_i(v) = R_i(v)
    \cup S_i(v)$, where $R_i(v)$ and $S_i(v)$ are the sources and sinks
    that arrived before $v$ in the MLAST for type $i$. By Lemma
    \ref{lem:mlast}(d), $d(v, X_i(v)) \leq 2^{\class_i(v)} = 2^j$. Combining
    this with the above bound on $d(u,v)$ gives us $d(u,v) \leq
    2^{j-3}$, as desired.
    \end{proof}

    \begin{claim}
      \label{clm:disjointness}
      For every $v \in R_i(j)$, we have $\bigcup_{e \in E_j(v) : r \notin L(e)} L(e) \supseteq B_i(v)$. Moreover, $E_j(v) \cap E_j(v') = \emptyset$ for distinct $v,v' \in R_i(j)$.
    \end{claim}

    \begin{proof}
      To prove the first part of the claim, observe that by definition of $E_j(v)$, we have $\bigcup_{e \in E_j(v)} L(e) \supseteq B_i(v)$. Thus, it suffices to show that for every $v \in R_i(j)$ and each terminal $u \in B_i(v)$, the edge $e_j(v)$ does not contain $r$ as a leaf. Suppose, towards a contradiction, that there exists $u \in B_i(v)$ such that $r \in L(e_j(u))$. Since $e_j(u)$ has length $2^{j-4}$, we have $\diam(L(e_j(u))\leq 2^{j-3}$; moreover, $r, u \in L(e_j(u))$ and so $d(u,r) \leq 2^{j-3}$. Now, Claim \ref{clm:ball} implies that $d(u,v) \leq 2^{j-3}$. Therefore, $d(v,r) \leq d(u,v) + d(u,r) < 2^j$. However, this contradicts Lemma \ref{lem:mlast}(d), which implies that $d(v,r) \geq 2^j$. Thus, $r \notin L(e_j(u))$.

      To prove the second part of the claim, suppose towards a contradiction, that there exist $v,v' \in R_i(j)$ and $u \in B_i(v)$ and $u' \in B_i(v')$ such that $e_j(u) = e_j(u')$. By triangle inequality, $d(v,v') \leq d(u,v) + d(u,u') + d(u',v')$. Claim \ref{clm:ball}  implies that $d(u,v), d(u',v') \leq 2^{j-3}$. Since $u,u'$ are leaves of the same edge of length $2^{j-4}$, we get $d(u,u') \leq 2^{j-3}$. Therefore, $d(v,v') < 2^j$. On the other hand, Lemma \ref{lem:mlast}(c) says that $d(v,v') \geq 2^j$. This gives us our desired contradiction.
    \end{proof}

    With these claims in hand, we have
    \begin{align*}
      \sum_{e \in E_j : r \notin L(e)} \min \{\fixed_i, \incr_{i-1}|L(e)|\}
      &\geq \sum_{v \in R_i(j)} \sum_{e \in E_j(v) : r \notin L(e)}\min \{\fixed_i, \incr_{i-1}|L(e)|\}\\
      &\geq \sum_{v \in R_i(j)} \min \{\fixed_i, \incr_{i-1}\sum_{e \in E_j(v) : r \notin L(e)}|L(e)|\}\\
      &\geq \sum_{v \in R_i(j)} \min \{\fixed_i, \incr_{i-1}|B_i(v)|\} \\
      &\geq \sum_{v \in R_i(j)} \fixed_i = \fixed_i |R_i(j)|,
    \end{align*}
    where the first inequality follows from the second part of Claim \ref{clm:disjointness}, the third inequality from the first part, and the last inequality from the fact that $|B_i(v)| \geq \fixed_i/\incr_{i-1}$.
  \end{proof}

  Having proved these two lemmas, we have that for every HST embedding
  $T$ and every $i$, we have $\ALG_i \leq O(1)\OPT_i(T)$. Summing over
  all $i$ and using Lemma \ref{lem:decomp}, we have $\ALG \leq
  O(1)\OPT(T)$ for any HST embedding $T$, and so $\ALG \leq O(\log
  k)\OPT$. This proves Theorem \ref{thm:non-ob-BAB}.


\section{Multi-Commodity Spanners}
\label{sec:spanner}

In the \emph{online $\alpha$-MCS} problem, we are given a graph $G =
(V,E)$ with edge costs $d(u,v)$ satisfying the triangle
inequality. Terminal pairs $(s,t)$ arrive one-by-one. The goal is to
maintain a minimum-cost subgraph $H$ such that $d_H(s,t) \leq \alpha
d(s,t)$ for all terminal pairs $(s,t)$. We say that $H$ is
$\beta$-competitive if its cost $d(H) = \sum_{(u,v) \in H} d(u,v)$ is at most $\beta$ of the
cheapest Steiner forest connecting the terminal pairs.
We recall our main
theorem about multi-commodity spanners. 

\thmspanner*

Note that this result is tight in several ways. Firstly, the solution
is also a feasible Steiner forest, and we know that any online Steiner
forest algorithm has cost $\Omega(\log k)$ times the cheapest Steiner
forest, hence we cannot improve the cost even at the expense of
greater stretch. Moreover, there exists a $k$-vertex unweighted graph
with girth $\Omega(\frac{\log k}{\log \log k})$ and size
$\Omega(k \log k)$ \cite{LubotzkyPS88}. Thus, we cannot improve the
stretch by much either.  Finally, $\Omega(k)$ edges are needed to
connect $k$ terminals.



\paragraph{High-level overview.} Our algorithm (Algorithm \ref{alg}) is based on the Berman-Coulston online Steiner forest algorithm. There are two key ingredients to the algorithm. First, it
classify terminals according to the distance to their mates:
$\class(s_i) = \class(t_i) = \lfloor \log_2 d(s_i, t_i) \rfloor$. Let
$Y_j$ be the set of terminals with class at least $j$.
Second, the algorithm maintains a clustering of $Y_j$ as follows. It designates a subset $Z_j
\subseteq Y_j$ that are $2^j/16$-separated from each other as
\emph{centers}, and assigns each $v \in Y_j$ to a \emph{cluster} $C(z)$
where $z$ is the nearest center of $Z_j$ to $u$. 

The rule for adding edges is as follows: if for any $u,v \in Y_j$ with
$d(u,v) \in [2^j, 2^{j+1})$ has $d_H(u,v) > 4\log k\cdot d(u,v)$ (i.e.,
the pair has large stretch in the current graph $H$), the algorithm adds
the edge $(u,v)$, and connects $u$ and $v$ to their respective centers;
we call the former an \emph{augmentation edge} and the latter
\emph{bridge edges} and account for them separately.

Let $A_j$ be the set of augmentation edges $(u,v)$ with $d(u,v) \in
[2^j,2^{j+1})$. First, we show that it suffices to bound the cost and
number of augmentation edges (Lemma \ref{lem:A}). Since the diameter of
each cluster is at most $2^j/8$, each augmentation edge must connect two
different clusters. Thus we can think of a meta-graph where the nodes
correspond to clusters and meta-edges correspond to $A_j$. The bridge
edges allow us to argue that if there is a short path in the meta-graph
between nodes corresponding to clusters $C(z)$ and $C(z')$, then there
is a short path in $H$ between every $u \in C(z)$ and $v \in C(z')$
(Lemma \ref{lem:meta-path}). We then use this to show that there cannot
be a cycle of length less than $\log k$ in the meta-graph and so the
total number of meta-edges is at most $O(1)$ times the number of nodes,
i.e.~$|A_j| \leq O(|Z_j|)$. Finally, we use the separation of $Z_j$ to
show that $2^j|Z_j| = O(\OPT)$ via a ball-packing argument (Lemma
\ref{lem:cost}), and complete the analysis of the algorithm's cost. 
To bound the total number of edges $\sum_j |A_j|$, we use a charging scheme that charges $|A_j|$ to a carefully-chosen set of terminals and show that across all distance scales $j$, the total charge received by each terminal is $O(1)$.

\begin{algorithm}
\caption{Online $(4\log k)$-MCS}
\begin{algorithmic}[1]
  \label{alg}
  \STATE $A_j \leftarrow \emptyset$,
  $B_j \leftarrow \emptyset$, $Z_j \leftarrow \emptyset$ for all $j$;  $H \leftarrow \emptyset$; 
  \WHILE {$(s_i, t_i)$ arrives}
  \STATE Define $\class(s_i) = \class(t_i) = \lfloor \log d(s_i, t_i) \rfloor$
  \FOR {$j = 0$ up to $j = \lfloor \log d(s_i, t_i) \rfloor$}
  \FOR {$u \in \{s_i,t_i\}$}
  \IF {$d(u,z) \geq 2^j/16$ for every center $z \in Z_j$}
  \STATE Add $u$ to $Z_j$
  \STATE Define a new cluster $C(u) = \{u\}$
  \ELSE
  \STATE Choose $z \in Z_j$ closest to $u$ and add $u$ to $C(z)$
  \ENDIF
  \FOR {$u,v$ such that $\min\{\class(u),\class(v)\} \geq j$ and
    $d(u,v) \in [2^j, 2^{j+1})$}
  \IF {$d_H(u,v) > 4\log k \cdot d(u,v)$}
  \STATE Add augmentation edge $(u,v)$ to $A_j$ and $H$ \label{line:aug}
  \STATE Let $z, z' \in Z_j$ such that $u \in C(z)$, $v \in C(z')$
  \STATE Add bridge edges $(u, z)$ and $(v, z')$ to $B_j$ and $H$
  \ENDIF
  \ENDFOR
  \ENDFOR
  \ENDFOR
  \ENDWHILE
\end{algorithmic}
\end{algorithm}

\subsection{Analysis}
To analyze Algorithm \ref{alg}, we use a ``shadow'' algorithm that
builds a family of $O(\log k)$ \emph{meta-graphs} $F_j$, one for each
length scale $2^j$. Whenever the algorithm adds a new cluster---i.e., it
adds a new center $z$ to $Z_j$---the shadow algorithm creates a new node
in $F_j$ corresponding to $C(z)$. We use $C(z)$ to refer to both the
cluster as well as the corresponding node in $F_j$. Whenever Algorithm
\ref{alg} adds an edge $(u,v)$ to $A_j$, the shadow algorithm adds the
\emph{meta-edge} $(C(z),C(z'))$ where $u \in C(z)$ and $v \in C(z')$.
This is well-defined since $u,v$ have class at least $j$ and every
terminal of class at least $j$ belongs to some cluster $C(z)$ for $z \in
Z_j$.  Furthermore, since the diameter of a cluster is less than
$2^j/8$, and each edge of $A_j$ has length at least $2^j$, there are no
self-loops. See Figure \ref{fig:1} for an illustration.

\begin{figure}
  \centering
  \includegraphics[scale=0.3,page=1]{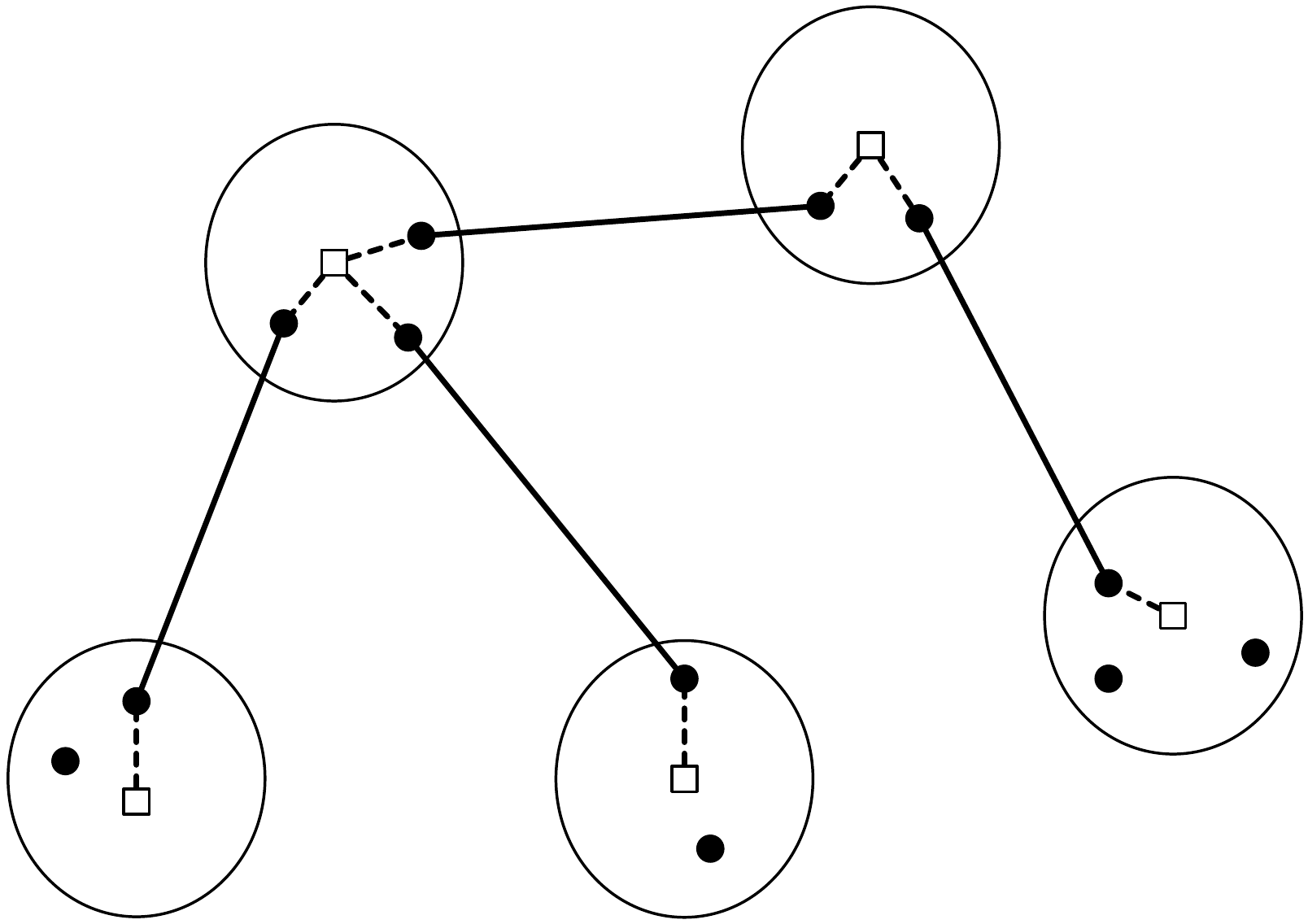}
  \caption{\em An example of the meta-graph. The circles represent clusters
    that are nodes of the meta-graph. The dashed edges and solid edges
    represent bridge and augmentation edges, respectively. Note that
    not all terminals in a cluster has a  bridge edge to the center.} 
  \label{fig:1} \end{figure}

Note that it suffices to consider the augmentation edges to bound the
cost of $H$ as well as the number of edges in $H$. Indeed, every bridge
edge in $B_j$ has a corresponding augmentation edge in $A_j$, and each
augmentation edge corresponds to at most two bridge edges (of no greater
cost). This proves the following lemma.
\begin{lemma}
  \label{lem:A}
  $d(H) \leq \sum_j 2^{j+1}\cdot 3|A_j|$ and $|E(H)| \leq \sum_j 3|A_j|$.
\end{lemma}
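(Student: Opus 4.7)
The plan is to decompose $H = \bigcup_j (A_j \cup B_j)$ and bound the contribution of each scale $j$ separately, exploiting the fact that each bridge edge can be charged to an augmentation edge both in count and in length.

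For the edge count, I would first observe directly from the algorithm (line \ref{line:aug} and the lines immediately following) that every execution of the augmentation step adds exactly one edge to $A_j$ and at most two edges (namely $(u,z)$ and $(v,z')$) to $B_j$. Hence $|B_j| \leq 2|A_j|$, which yields $|E(H)| \leq \sum_j (|A_j|+|B_j|) \leq \sum_j 3|A_j|$.

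For the cost, I would bound the two types of edges at scale $j$ using their defining geometric properties. Each augmentation edge $(u,v) \in A_j$ has $d(u,v) < 2^{j+1}$ by the condition in the inner loop, so $d(A_j) \leq 2^{j+1}|A_j|$. For a bridge edge $(u,z) \in B_j$, the key observation is the following: at the moment $u$ was processed at scale $j$, either $u$ became a center (and no bridge edge is ever created from $u$ at scale $j$, since $(u,u)$ is not added), or $u$ was assigned to its nearest existing center $z \in Z_j$. In the latter case, the centers in $Z_j$ are $2^j/16$-separated but \emph{cover} all terminals of class $\geq j$: had $u$ been at distance $\geq 2^j/16$ from every center, it would itself have been promoted to $Z_j$. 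Hence $d(u,z) < 2^j/16$, and each bridge edge has length less than $2^j/16$. Since $|B_j|\leq 2|A_j|$, we obtain
\[ d(A_j) + d(B_j) \;\leq\; 2^{j+1}|A_j| + 2|A_j|\cdot \tfrac{2^j}{16} \;=\; 2^{j+1}|A_j|\bigl(1 + \tfrac{1}{16}\bigr) \;\leq\; 3\cdot 2^{j+1}|A_j|. \]
Summing over $j$ gives $d(H) \leq \sum_j 3\cdot 2^{j+1}|A_j|$, completing both parts.

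There is no real obstacle here; the lemma is essentially accounting. The only subtlety worth stating explicitly is the ``cover'' property of $Z_j$ that controls the length of bridge edges, which is implicit in the clustering rule but is the one nontrivial fact being invoked.
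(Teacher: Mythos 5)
Your proof is correct and follows essentially the same route as the paper's (very brief) argument: charge each bridge edge to its corresponding augmentation edge, noting that each augmentation edge spawns at most two bridge edges of no greater length, and that each edge of $A_j$ has length less than $2^{j+1}$. Your explicit verification that bridge edges have length less than $2^j/16$ via the covering property of $Z_j$ is a slightly sharper (and welcome) justification than the paper's ``of no greater cost,'' but the accounting is the same.
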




The following lemma relates distances in the meta-graph $F_j$ to distances in $H$ and $G$, and is key to bounding $d(H)$ and $|E(H)|$.

\begin{lemma}
  \label{lem:meta-path}
  If there is a path of $\ell$ meta-edges in $F_j$ between
  nodes $C(z)$ and $C(z')$, then
  \begin{enumerate}
  \item   $d_H(z,z') < (3\ell+1)\cdot
  2^j$;
  \item $d(u,v) < 2^j/8 + (3\ell+1)\cdot 2^j$ for every $u \in
  C(z)$ and $v \in C(z')$.
  \end{enumerate}
\end{lemma}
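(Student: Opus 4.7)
The approach is to construct an explicit walk in $H$ that follows the given meta-path, alternating between bridge edges and augmentation edges, and then bound its length using the size of clusters and the length of augmentation edges at scale $j$.

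First I would set up notation: let $C(z_0), C(z_1), \ldots, C(z_\ell)$ be the sequence of cluster-nodes visited by the meta-path in $F_j$, where $z_0 = z$ and $z_\ell = z'$. For each $k \in \{0,\ldots,\ell-1\}$, the meta-edge $(C(z_k), C(z_{k+1}))$ corresponds to some augmentation edge $(u_k, v_k) \in A_j$ with $u_k \in C(z_k)$ and $v_k \in C(z_{k+1})$. By construction of Algorithm~\ref{alg}, the moment $(u_k, v_k)$ was inserted into $A_j$, the bridge edges $(u_k, z_k)$ and $(v_k, z_{k+1})$ were inserted into $B_j \subseteq H$.

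For part (1), I would concatenate these edges to form the walk
\[
z_0 \;\to\; u_0 \;\to\; v_0 \;\to\; z_1 \;\to\; u_1 \;\to\; \cdots \;\to\; v_{\ell-1} \;\to\; z_\ell
\]
in $H$. Two ingredients control its length: (i) by the cluster-assignment rule, every terminal $x \in C(z)$ has $d(x,z) < 2^j/16$ (with equality zero when $x = z$ is itself a center, so the edge case is harmless); (ii) by the definition of $A_j$, each augmentation edge satisfies $d(u_k, v_k) < 2^{j+1}$. Summing over the $\ell$ meta-steps,
\[
d_H(z_0, z_\ell) \;<\; \ell \cdot \bigl(2 \cdot 2^j/16 + 2^{j+1}\bigr) \;=\; \tfrac{17\ell}{8}\cdot 2^j \;<\; (3\ell+1)\cdot 2^j,
\]
which gives the claimed bound.

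For part (2), I would simply apply the triangle inequality in the underlying metric: for $u \in C(z)$ and $v \in C(z')$,
\[
d(u,v) \;\leq\; d(u,z) + d(z,z') + d(z',v).
\]
The two endpoint terms are each strictly less than $2^j/16$ by the cluster rule, summing to at most $2^j/8$. For the middle term, since $H$-distances dominate metric distances, $d(z,z') \leq d_H(z,z') < (3\ell+1)\cdot 2^j$ by part (1). Combining yields $d(u,v) < 2^j/8 + (3\ell+1)\cdot 2^j$, as required.

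The argument is essentially routine; the only thing requiring care is handling degenerate cases where a cluster center coincides with an endpoint of an augmentation edge (so a ``bridge edge'' has length zero), which only makes the bound easier, and confirming that the bridge edges are guaranteed to exist at the time we construct the walk, which follows from the synchronous insertion of bridge and augmentation edges in the algorithm.
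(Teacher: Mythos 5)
Your proof is correct and follows essentially the same route as the paper: walk along the meta-path in $H$ by alternating bridge edges (each of length less than $2^j/16$ by the clustering rule) and augmentation edges (each of length less than $2^{j+1}$ since they lie in $A_j$), then derive part~(2) from part~(1) by the triangle inequality through the two cluster centers. Your arithmetic $\tfrac{17\ell}{8}\cdot 2^j < (3\ell+1)\cdot 2^j$ is in fact tighter and cleaner than the paper's stated intermediate bound, and your remarks about degenerate bridge edges and the synchronous insertion of bridge and augmentation edges are accurate.
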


\begin{proof}
  Let the nodes along the path in $F_j$ be $C(z_1), C(z_2), \ldots, C(z_{k+1})$,
  where $z = z_1$ and $z_{k+1} = z'$. Let the corresponding
  augmentation edges be $(u_1, v_1), \ldots, (u_k,v_k)$ where $u_1 \in
  C(z_1), v_k \in C(z_{k+1})$ and $v_i, u_{i+1} \in C(z_i)$ for $1 \leq i
  \leq k$. Note that each
  augmentation edge $(u_i, v_i)$ has two corresponding bridge edges
  $(u_i, z_i)$ and $(v_i, z_{i+1})$.

  We will now show that there is a short path in $H$ from $u$ to $v$
  via these augmentation and bridge edges. Consider the path $P$ in
  $H$ from $z_1$ to $z_{\ell+1}$ consisting of these augmentation
  edges and their corresponding bridge edges. There are $\ell$
  augmentation edges and they have length at most $2^{j+1}$
  each. There are $2\ell$ bridge edges and they have length at most
  $2^j/16$. Therefore, $d(P) \leq 3 \ell \cdot 2^{j+1}$. This proves
  the first part of the lemma.

  We now turn to proving the second part of the lemma.
  Observe that $d(u,v) \leq d(u,z) + d(z,z') + d(z',v)$. Now $d(z,z') \leq d_H(z,z') \leq (3\ell + 1)\cdot 2^j$, from the first part of the lemma. We also have $d(u,z), d(z',v) \leq 2^j/16$ by definition of the clusters. Thus, $d(u,v) \leq 2^j/8 + (3\ell+1)\cdot 2^j$, as desired.
\end{proof}

With this lemma in hand, we can lower-bound the girth of $F_j$.
\begin{lemma}[Girth Bound]
  \label{lem:girth}
  The girth of the meta-graph $F_j$ is at least $\log k$.
\end{lemma}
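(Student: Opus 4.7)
The plan is to prove this by contradiction: suppose there is a cycle of length $\ell < \log k$ in $F_j$, and exhibit a short path in $H$ between the endpoints of the last-added augmentation edge in the cycle, contradicting the edge-addition rule. Write the cycle's augmentation edges as $e_1 = (u_1, v_1), \ldots, e_\ell = (u_\ell, v_\ell)$ with $u_i \in C(z_i)$ and $v_i \in C(z_{i+1})$ (indices mod $\ell$), and pick the indexing so that $e_\ell$ is the last of these to be added.

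First, I focus on the moment just before $e_\ell$ is added to $A_j$. At this point, the other $\ell-1$ meta-edges in the cycle are already in $H$ with their corresponding bridge edges, and together they constitute a path of $\ell - 1$ meta-edges in $F_j$ from $C(z_\ell)$ to $C(z_1)$. Applying Lemma~\ref{lem:meta-path}(1) to this path gives
\[
d_H(z_\ell, z_1) < (3(\ell-1)+1)\cdot 2^j = (3\ell-2)\cdot 2^j.
\]
The bridges $(v_{\ell-1}, z_\ell)$ and $(u_1, z_1)$ are both already in $H$, each of length at most $2^j/16$, so the centers $z_\ell$ and $z_1$ are short-distance ``hubs'' reachable from cluster members that have already received bridges.

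Next I want to extend this to a path from $u_\ell$ to $v_\ell$ in $H$, by showing $d_H(u_\ell, z_\ell) \leq O(2^j)$ and $d_H(v_\ell, z_1) \leq O(2^j)$. If $u_\ell$ already has a direct bridge to $z_\ell$ (because it was an endpoint of some earlier augmentation edge in $A_j$ processed by the algorithm), this is immediate. Otherwise, since $u_\ell$ and $v_{\ell-1}$ both lie in $C(z_\ell)$, their metric distance is at most $2^j/8$; I would argue, using the fact that the algorithm processes all lower distance scales and the stretch condition $d_H > 4\log k \cdot d$ must have been violated at some earlier point for the pair $(u_\ell, v_{\ell-1})$, that $u_\ell$ is connected in $H$ to $v_{\ell-1}$ (and hence to $z_\ell$ via the existing bridge) by a path of length $O(\log k \cdot 2^j/8)$; an analogous argument handles $v_\ell$ and $z_1$.

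Combining the three pieces yields $d_H(u_\ell, v_\ell) \leq O(\ell) \cdot 2^j$. Since $e_\ell \in A_j$ forces $d(u_\ell, v_\ell) \geq 2^j$, this gives $d_H(u_\ell, v_\ell) \leq O(\log k) \cdot d(u_\ell, v_\ell)$, with the hidden constant small enough to violate the condition $d_H(u_\ell, v_\ell) > 4 \log k \cdot d(u_\ell, v_\ell)$ required to trigger the addition of $e_\ell$. This contradicts the fact that the algorithm added $e_\ell$, and so no cycle of length less than $\log k$ exists in $F_j$.

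The main obstacle will be the step bounding $d_H(u_\ell, z_\ell)$ (and symmetrically $d_H(v_\ell, z_1)$) at the moment just before $e_\ell$ is added: the bridges for $e_\ell$'s endpoints have not yet been installed, so one must either verify that these terminals already possess bridges from earlier augmentation edges, or track a path through lower-scale edges exploiting that $u_\ell$ and $v_{\ell-1}$ share a cluster (so are metric-close) while $v_{\ell-1}$ already has its bridge to $z_\ell$. Careful bookkeeping of what edges are in $H$ at that instant is the delicate part of the argument.
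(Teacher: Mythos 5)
Your proposal follows the same skeleton as the paper's proof: take the last-added augmentation edge of a short cycle, observe that the remaining $\ell-1$ meta-edges already form a path in $F_j$, and use Lemma~\ref{lem:meta-path} to exhibit a short path in $H$ contradicting the condition under which that edge was added. The difference is in the step you single out as delicate: the paper's proof simply asserts ``by Lemma~\ref{lem:meta-path}, $d_H(u,v) < 4\log k\cdot 2^j$,'' whereas the lemma as stated only bounds $d_H(z,z')$ between the \emph{centers} (part~1) and the \emph{metric} distance $d(u,v)$ (part~2); the endpoints of the newly added edge need not yet have bridge edges to their centers, which is exactly the issue you flag. Your patch is the right one and it does go through: since $u_\ell$ and $v_{\ell-1}$ lie in the same cluster, $d(u_\ell,v_{\ell-1}) < 2^j/8$, so this pair lives at a strictly lower distance scale $j'\le j-4$, which has already been processed when scale $j$ is examined (the scale loop runs upward within each arrival, and both terminals have arrived by then); hence $d_H(u_\ell,v_{\ell-1}) \le 4\log k\cdot d(u_\ell,v_{\ell-1}) < (\log k/2)\cdot 2^j$ already holds, and $v_{\ell-1}$ does possess its bridge to $z_\ell$ from when $e_{\ell-1}$ was added (symmetrically, $v_\ell$ reaches $z_1$ via $u_1$). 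Summing, $d_H(u_\ell,v_\ell) \le 2\cdot(\log k/2)2^j + 2\cdot 2^j/16 + (3(\ell-1)+1)2^j < 4\log k\cdot 2^j \le 4\log k\cdot d(u_\ell,v_\ell)$ whenever $\ell<\log k$, contradicting the edge-addition test. Two small corrections: your intermediate bound ``$d_H(u_\ell,v_\ell)\le O(\ell)\cdot 2^j$'' should be $O(\ell+\log k)\cdot 2^j$, since the two within-cluster detours each cost $\Theta(\log k)\cdot 2^j$ (your final statement $O(\log k)\cdot d(u_\ell,v_\ell)$, with constant below $4$, is what matters and is correct); and ``the stretch condition must have been violated at some earlier point'' should say the pair must have been \emph{tested} at scale $j'$ --- whether or not the test triggered an edge, the stretch bound holds for that pair afterwards.
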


\begin{proof}
  Suppose, towards a contradiction, that the meta-graph has a cycle of
  length less than $\log k$. Suppose that $(C(z),C(z'))$ is the
  meta-edge of the cycle that was added last, and $(u,v) \in A_j$ is
  the corresponding augmentation edge with $u \in C(z)$ and
  $v \in C(z')$. The edge $(u,v)$ was added during the $j$-th
  iteration of the outer for loop. Consider the state of Algorithm
  \ref{alg} and the meta-graph right before $(u,v)$ was added to $H$
  and $(C(z),C(z'))$ was added to $F_j$. At this time, there is a path
  between nodes $C(z)$ and $C(z')$ in $F_j$ with less than $\log k$
  edges, so by Lemma \ref{lem:meta-path}, we have
  $d_H(u,v) < 4\log k \cdot 2^j \leq 4\log k\cdot d(u,v)$. However,
  this contradicts the condition for adding $(u,v)$ to $A_j$ and so
  $F_j$ has girth at least $\log k$.
\end{proof}

Combining the previous lemma with the following result from extremal graph theory will allow us to bound $|A_j|$ later.
\begin{lemma}~\cite{Bollobas04}
  \label{lem:extremal}
  Every graph on $N$ vertices and girth $k$ has at most
  $O(N^{1+2/(k-1)})$ edges.
\end{lemma}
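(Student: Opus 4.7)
The plan is to prove the classical Moore bound in two steps: (i) reduce to a subgraph of large minimum degree, and (ii) use a BFS ball-packing argument around a single vertex, exploiting the girth hypothesis.

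For step (i), write $d := 2m/N$ for the average degree, where $m$ is the number of edges, and iteratively delete any vertex whose degree in the current graph is strictly less than $d/2$. Each such deletion removes fewer than $d/2$ edges, so after $t$ deletions we have removed strictly fewer than $td/2$ edges. If the process deleted all $N$ vertices, it would have removed fewer than $Nd/2 = m$ edges in total, contradicting the fact that all $m$ edges must disappear. Hence the process halts at a non-empty subgraph $G'$; deleting vertices cannot shorten cycles so $G'$ still has girth at least $k$, and by construction every vertex of $G'$ has degree at least $d/2$ in $G'$.

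For step (ii), fix any vertex $v$ of $G'$ and let $r := \lfloor (k-1)/2 \rfloor$. I would run BFS from $v$ and argue that the vertices at depths $0, 1, \ldots, r$ are all distinct. Indeed, if two distinct BFS paths from $v$ met at depth $\leq r$, concatenating one with the reverse of the other would produce a closed walk of length at most $2r \leq k-1$, which must contain a cycle of length less than $k$, contradicting the girth bound. Now count level by level: the root contributes one vertex, level $1$ contributes at least $d/2$ vertices, and every vertex at depth $i < r$ contributes at least $d/2 - 1$ new children at depth $i+1$ (it cannot reuse the edge back to its parent). This yields $N \geq 1 + (d/2) \sum_{i=0}^{r-1} (d/2 - 1)^i = \Omega((d/2)^r)$ once $d$ is larger than an absolute constant.

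Rearranging, $d = O(N^{1/r}) = O(N^{2/(k-1)})$ (up to constants, using $r \geq (k-1)/2$), and hence $m = Nd/2 = O(N^{1 + 2/(k-1)})$. The boundary case where $d$ is at most an absolute constant is trivial, since then $m = O(N)$ which is already dominated by the claimed bound. The main obstacle is purely bookkeeping: making sure the pruning step in (i) preserves enough edges, and that the parity of $k$ does not worsen the exponent (it does not, up to absorbing the loss into the $O(\cdot)$ constant).
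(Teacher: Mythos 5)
The paper does not actually prove this lemma --- it is quoted from Bollob\'as without proof --- so there is nothing to compare against except the standard textbook argument, which is exactly what you have reproduced: prune to a subgraph of minimum degree at least half the average degree, then grow a BFS ball of radius $r=\lfloor (k-1)/2\rfloor$ and use the girth hypothesis to show the ball is a tree, forcing $N\ge \Omega((d/2-1)^{r})$. Both steps are correct: the pruning argument is the usual edge-counting contradiction, and the claim that two distinct BFS paths meeting at depth $\le r$ would close up into a cycle of length at most $2r\le k-1$ is sound. So as a proof of the Moore-type bound $m=O(N^{1+1/\lfloor (k-1)/2\rfloor})$, your argument works.

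The one inaccurate claim is the final remark that the parity of $k$ ``does not worsen the exponent, up to absorbing the loss into the $O(\cdot)$ constant.'' For even $k$ you get $r=(k-2)/2$, hence exponent $1+2/(k-2)$ rather than $1+2/(k-1)$, and the ratio $N^{2/(k-2)-2/(k-1)}=N^{2/((k-1)(k-2))}$ is unbounded in $N$ for fixed $k$; it is not a constant. This is not really your fault: the lemma as literally stated is itself only correct for odd girth (for $k=4$ the complete bipartite graph $K_{N/2,N/2}$ has girth $4$ and $\Theta(N^2)\gg N^{5/3}$ edges), and the honest form of the bound is $O(N^{1+1/\lfloor (k-1)/2\rfloor})$. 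That weaker form is all the paper ever needs: in Lemmas \ref{lem:cost} and \ref{lem:clust} the girth is at least $\log k$ and $N$ is at most the number of terminals, so $N^{1/\lfloor(\mathrm{girth}-1)/2\rfloor}=O(1)$ either way. I would simply state the conclusion of step (ii) with the floor in the exponent rather than asserting the parity loss disappears into the constant.
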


Let $\OPT$ be the cost of the optimal Steiner forest for the $n$
terminal pairs.

\begin{lemma}[Cost Bound]
  \label{lem:cost}
  $\sum_j 2^{j+1}|A_j| \leq O(1) \sum_j 2^{j+1}|Z_j| \leq O(\log k) \OPT$.
\end{lemma}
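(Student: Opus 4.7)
The claim bundles two inequalities, which I would establish in sequence.

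\emph{First inequality.} To show $\sum_j 2^{j+1}|A_j| \leq O(1)\sum_j 2^{j+1}|Z_j|$, it is enough to prove the per-scale bound $|A_j| = O(|Z_j|)$. By construction, $A_j$ is in bijection with the edge set $E(F_j)$ of the meta-graph, which has $|Z_j|$ vertices. Lemma~\ref{lem:girth} gives that $F_j$ has girth at least $\log k$, so Lemma~\ref{lem:extremal} yields $|E(F_j)| \leq O(|Z_j|^{1+2/(\log k -1)})$. Since there are only $2k$ terminals in total, $|Z_j| \leq 2k$, and so $|Z_j|^{2/(\log k - 1)} \leq (2k)^{2/(\log k - 1)} = O(1)$.

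\emph{Second inequality, per-scale bound.} For each $j$, I would establish $2^j|Z_j| \leq O(\OPT)$ by a standard ball-packing argument. Each center $z \in Z_j$ has $\class(z)\geq j$, meaning its partner $z'$ in the arriving pair satisfies $d(z,z') \geq 2^j$. Hence the optimal Steiner forest must contain a $z$-to-$z'$ path, which spends at least $2^j/32$ units of length inside the closed ball $B(z, 2^j/32)$. Because $Z_j$ is $(2^j/16)$-separated, these balls are pairwise disjoint, yielding $|Z_j|\cdot (2^j/32) \leq \OPT$.

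\emph{Second inequality, summing over $j$.} I would combine the bounds $|Z_j| \leq \min\{2k,\, 32\OPT/2^j\}$ by splitting at the threshold $2^{j^\star} \approx \OPT/k$. For $j \leq j^\star$, apply $|Z_j|\leq 2k$; the geometric sum $\sum_{j \leq j^\star} 2^{j+1}\cdot 2k$ is dominated by its top term and equals $O(k\cdot 2^{j^\star}) = O(\OPT)$. For $j > j^\star$, apply the ball-packing bound, contributing $O(\OPT)$ per level; the number of levels with $|Z_j|\geq 1$ in this range is $O(\log k)$ because $|Z_j|\geq 1$ forces the existence of a pair of class $\geq j$, hence $2^j \leq \max_i d(s_i,t_i) \leq \OPT$. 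Thus this part of the sum contributes $O(\log k)\OPT$, and adding the two pieces gives the desired $\sum_j 2^{j+1}|Z_j| \leq O(\log k)\OPT$.

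\emph{Main obstacle.} The nontrivial step is the last summation: a single bound on $|Z_j|$ does not suffice to obtain $O(\log k)\OPT$, and it is essential to interpolate between the trivial count $|Z_j|\leq 2k$ at small scales and the ball-packing bound at large scales, using the fact that $\max_i d(s_i,t_i) \leq \OPT$ to bound the number of active large scales by $\log k$. The remaining steps are direct applications of Lemmas~\ref{lem:girth} and~\ref{lem:extremal} together with the separation property of $Z_j$.
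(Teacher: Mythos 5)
Your proof is correct and follows the same route as the paper: $|A_j| \leq O(1)|Z_j|$ via the girth bound (Lemma~\ref{lem:girth}) and the extremal lemma (Lemma~\ref{lem:extremal}) together with $|Z_j| \leq 2k$, and $2^j|Z_j| \leq 32\,\OPT$ via ball-packing around the $(2^j/16)$-separated centers. You are in fact more careful than the paper on the final summation: the paper simply asserts $\sum_j 2^{j+1}|Z_j| \leq O(\log k)\OPT$, whereas your interpolation between the trivial bound $|Z_j|\leq 2k$ at scales below $\OPT/k$ and the per-scale packing bound at the at most $\log k$ scales between $\OPT/k$ and $\max_i d(s_i,t_i)\leq \OPT$ is exactly the argument needed to get $O(\log k)$ rather than the weaker $O(\log \Delta)$ that a naive sum over all active scales would give.
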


\begin{proof}
  By Lemmas \ref{lem:girth} and Lemma \ref{lem:extremal}, we have
  $|A_j| \leq O(1)|Z_j|$. Now we bound $\OPT$ in terms of
  $|Z_j|$. Place a ball of radius $2^j/32$ around each $z \in
  Z_j$.
  Since for any $z,z' \in Z_j$, $d(z,z') \geq 2^j/16$, these balls are
  disjoint. For any $z \in Z_j$, the distance from $z$ and its mate is
  at least $2^j$ so the optimal solution has a path of length at least
  $2^j/32$ that is contained in the ball around $z$. This implies that
  $\OPT \geq (2^j/32) \cdot |Z_j|$.
  Putting all of this together, we have
  $\sum_j 2^{j+1}|A_j| \leq \sum_j 2^{j+1}\cdot O(1)|Z_j| \leq O(\log
  k) \OPT$.
\end{proof}


We now turn to bounding the number of edges $|E(H)| = \sum_j |A_j|$. 
The main idea is to find for each scale $j$, a subset of terminals $X'_j$ such that $|A_j| \leq O(1)|X'_j|$, and each terminal belongs in $X'_j$ for a constant number of scales $j$. 
This allows us to charge $|A_j|$ to the terminals in $X'_j$ for each $j$, and each terminal only receives a charge of $O(1)$ across all $j$. 
Define $\clust_j = \{C(z) : z \in Z_j\}$, the set of clusters corresponding to centers $Z_j$. 
First, we show that there cannot be any large subset of clusters in $\clust_j$ that are far apart from each other. In particular, for any such subset $\clust^+_j \subseteq \clust_j$, $|A_j|$ is at most twice the number of remaining clusters $\clust_j \setminus \clust^+_j$.




\begin{lemma}
  \label{lem:clust}
  Let $\clust^+_j$ be any subset of $\clust_j$ such that for any clusters $C(z), C(z') \in \clust_j$ and terminals $u \in C(z)$ and $v \in C(z')$, we have $d(u,v) \geq 8\cdot 2^j$. Let $\clust^-_j = \clust_j \setminus \clust^+_j$ be the remaining clusters. Then, we have $|A_j| \leq 2 |\clust^-_j|$.
\end{lemma}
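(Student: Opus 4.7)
The plan is to show that $\clust^-_j$ is a vertex cover of the subgraph of $F_j$ formed by the $A_j$-meta-edges, and then combine this vertex-cover property with the girth bound from Lemma~\ref{lem:girth} to bound $|A_j|$ in terms of $|\clust^-_j|$.

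First I would establish the vertex cover property. Consider any $(u,v) \in A_j$ with $u \in C(z)$ and $v \in C(z')$. Since $(u,v)$ was added during the length-$2^j$ iteration, $d(u,v) < 2^{j+1}$. If both $C(z)$ and $C(z')$ belonged to $\clust^+_j$, the defining property of $\clust^+_j$ would force $d(u,v) \geq 8 \cdot 2^j$, a contradiction. Hence at least one of $C(z), C(z')$ lies in $\clust^-_j$. I would then partition $A_j$ into ``crossing'' edges (exactly one cluster endpoint in each of $\clust^-_j$ and $\clust^+_j$) and ``internal'' edges (both endpoints in $\clust^-_j$), and bound each class separately.

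For the crossing edges, the key step is to show that each $C(z) \in \clust^-_j$ is incident to at most one crossing edge. Suppose instead that $(u_1,v_1),(u_2,v_2) \in A_j$ both have their $\clust^-_j$-side endpoint in $C(z)$ (so $u_1,u_2 \in C(z)$) with $v_1 \in C(z_1)$, $v_2 \in C(z_2)$ and $C(z_1), C(z_2) \in \clust^+_j$. The girth bound of Lemma~\ref{lem:girth} rules out multi-edges, so $C(z_1) \neq C(z_2)$. Since the diameter of $C(z)$ is less than $2^j/8$ and each $A_j$-edge has length less than $2^{j+1}$, the triangle inequality gives
\[ d(v_1, v_2) \leq d(v_1, u_1) + d(u_1, u_2) + d(u_2, v_2) < 2 \cdot 2^{j+1} + 2^j/8 < 8 \cdot 2^j, \]
contradicting the defining property of $\clust^+_j$ applied to the distinct clusters $C(z_1), C(z_2)$. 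Hence the number of crossing edges is at most $|\clust^-_j|$.

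For the internal edges, the induced subgraph of $F_j$ on $\clust^-_j$ still has girth at least $\log k$ by Lemma~\ref{lem:girth}, and so Lemma~\ref{lem:extremal} yields at most $O(|\clust^-_j|^{1 + 2/(\log k - 1)}) = O(|\clust^-_j|)$ internal edges, since $|\clust^-_j|^{2/(\log k - 1)} \leq k^{2/(\log k - 1)}$ is bounded by a universal constant. Combining both bounds, $|A_j| = O(|\clust^-_j|)$. The main technical obstacle will be tightening constants to reach the stated bound of exactly $2|\clust^-_j|$: the Moore-type extremal bound above is loose by a factor approaching $4$ as $\log k \to \infty$, so obtaining the clean factor $2$ will likely require a sharper counting argument under our specific girth regime, or exploiting additional structure of $F_j$ (e.g., bounding internal-edge degrees by $1$ via a geometric argument analogous to the crossing-edge case) beyond just the girth.
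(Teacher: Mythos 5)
Your proof is correct and uses the same two ingredients as the paper (the geometric separation of $\clust^+_j$ and the girth/extremal bound), but organizes them differently. The paper first shows $|\clust^+_j| \leq |\clust^-_j|$ — every meta-node of $\clust^+_j$ has all its neighbors in $\clust^-_j$, no two share a neighbor (via Lemma~\ref{lem:meta-path} with $\ell \leq 2$), and connectivity of $F_j$ gives each one at least one neighbor — and then applies the global bound $|A_j| \leq O(1)|Z_j| = O(1)|\clust_j| \leq O(1)\cdot 2|\clust^-_j|$. You instead establish that $\clust^-_j$ is a vertex cover of the $A_j$-meta-edges and split $A_j$ into crossing and internal edges, bounding crossing edges by a degree-one argument and internal edges by the extremal bound on the induced subgraph. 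Your route has the mild advantage of not needing the connectivity (or per-component) assumption, since $\clust^+_j$-only nodes simply carry no edges in your accounting; the paper's route gives the cleaner-looking intermediate fact $|\clust_j| \leq 2|\clust^-_j|$. As for the constant you worry about at the end: the paper's own proof has exactly the same looseness — its final line silently absorbs the $O(1)$ from Lemma~\ref{lem:extremal} into the stated factor $2$ — and the only downstream use (Lemma~\ref{lem:size}) needs nothing more than $|A_j| \leq O(|\clust^-_j|)$, so your bound is fully adequate and no sharper counting is required.
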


\begin{proof}
  Without loss of generality, we assume that the meta-graph $F_j$ is connected (otherwise we can apply the same argument on each component of $F_j$). We begin by showing that $|\clust^+_j| \leq |\clust^-_j|$. The second part of Lemma \ref{lem:meta-path} implies that in $F_j$, if there exists a path consisting of at most $2$ meta-edges between meta-nodes $C(z)$ and $C(z')$, then $d(u,v) < 8 \cdot 2^j$ for all $u \in C(z)$ and $v \in C(z')$. Thus, any path between two meta-nodes $C(z), C(z') \in \clust^+_j$ has to contain at least $3$ meta-edges. This implies that every meta-node in $\clust^+_j$ can only have meta-nodes in $\clust^-_j$ as neighbors, and no two meta-nodes of $\clust^+_j$ can share the same neighbor in $\clust^-_j$. Since $F_j$ is connected, we have that each meta-node in $\clust^+_j$ must have at least one neighbor in $\clust^-_j$. Putting all these facts together gives us $|\clust^+_j| \leq |\clust^-_j|$. 

    By Lemmas \ref{lem:girth} and Lemma \ref{lem:extremal}, we have $|A_j| \leq O(1)|Z_j|$. Since $|Z_j| = |\clust_j|$ and $|\clust^+_j| \leq |\clust^-_j|$, we have $|A_j| \leq 2|\clust^-_j|$.
\end{proof}

\begin{lemma}[Sparsity Bound]
  \label{lem:size}
  $\sum_j |A_j| \leq O(k)$.
\end{lemma}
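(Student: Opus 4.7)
The plan is to apply Lemma~\ref{lem:clust} scale by scale, and then amortize $\sum_j |\clust^-_j|$ across terminals using a charging argument.

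For each scale $j$, I would choose $\clust^+_j$ to be a maximal subset of $\clust_j$ satisfying the separation hypothesis of Lemma~\ref{lem:clust}: all pairs of clusters in $\clust^+_j$ have terminal-distance at least $8\cdot 2^j$. Such a $\clust^+_j$ can be built greedily by processing clusters of $\clust_j$ in some fixed order (e.g., by arrival time of their centers) and adding a cluster iff its terminal-distance to every previously-added cluster is $\geq 8\cdot 2^j$. Lemma~\ref{lem:clust} then yields $|A_j| \leq 2|\clust^-_j|$ for each $j$.

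Next I would define the ``charged'' set $X'_j$ to be the centers of the clusters in $\clust^-_j$, so that $|X'_j| = |\clust^-_j|$ and therefore $\sum_j |A_j| \leq 2\sum_j |X'_j|$. The goal reduces to proving $\sum_j |X'_j| = O(k)$, which I would obtain by showing that each terminal $v$ belongs to $X'_j$ for at most $O(1)$ scales $j$. The starting observation is that if $v \in X'_j$, then on the one hand $v \in Z_j$ and so every other center of $Z_j$ is at distance $\geq 2^j/16$ from $v$, and on the other hand (since $C(v) \in \clust^-_j$ has a close sibling in $\clust^+_j$) there exists another center $z' \in Z_j$ with $d(v,z') \leq 9\cdot 2^j$ (combining the blocker distance $< 8\cdot 2^j$ with the cluster radius bound $< 2^j/16$ via triangle inequality). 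Hence $d(v,z') \in [2^j/16, 9\cdot 2^j]$, a window of $O(1)$ log-scales.

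The main obstacle is that the witness $z'$ may differ at different scales, so merely knowing the distance window does not immediately bound the number of scales. To handle this, I would exploit the hierarchical net structure of $\{Z_j\}_j$ together with the order in which the algorithm processes scales for each arriving terminal: when $v$ arrives, the algorithm sweeps $j = 0, 1, \ldots, \class(v)$ and at each step either declares $v$ a new center or assigns $v$ to an existing cluster, so the scales at which $v$ is a blocked center are tightly constrained by the distance from $v$ to the nearest previously-arrived centers. A careful case analysis (splitting on whether the witness $z'_j$ arrived before or after $v$, and using that the scale-$j$ centers form a $2^j/16$-packing) should yield the $O(1)$-scales bound. Summing, $\sum_j |A_j| \leq 2\sum_v |\{j : v \in X'_j\}| = O(k)$.
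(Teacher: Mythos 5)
Your reduction to Lemma~\ref{lem:clust} is sound: a maximal $8\cdot 2^j$-separated subfamily $\clust^+_j$ does satisfy its hypothesis, so $|A_j| \leq 2|\clust^-_j|$ for each $j$. The gap is in the amortization. With $X'_j$ taken to be the \emph{centers} of $\clust^-_j$, the claim that each terminal lies in $X'_j$ for only $O(1)$ scales is false, and no case analysis will rescue it, because the sets $Z_j$ are not nested and the witness $z'$ can indeed be a different terminal at every octave. Concretely: let $z_j$ (with a mate at distance $\approx 2^j$, so $\class(z_j)=j$) sit at distance $2^j$ from a fixed location, for $j = 8, 16, 24, \dots$, all arriving before a terminal $v$ of very high class placed at that location. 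Each $z_j$ is a center of $Z_j$, and $v$ is also a center of $Z_j$ for every such $j$ (since the only nearby members of $Z_j$ are at distance $\geq 2^j$, and the low-class points $z_{j'}$, $j' < j$, never enter $Z_j$). At every scale $j$ the earlier-arrived cluster $C(z_j)$ enters $\clust^+_j$ and forces $C(v)$ into $\clust^-_j$, so $v \in X'_j$ for $\Omega(\log \Delta)$ scales. Your fallback observation --- that $d(v, z'_j) \in [2^j/16, 9\cdot 2^j]$ --- is consistent with this, since the window is satisfied by a \emph{different} witness at each scale; there is no packing contradiction.

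The missing idea, which is exactly what the paper supplies, is to decouple the charging from the algorithm's (non-nested) centers by building an auxiliary \emph{nested} family of nets $N_j$ (each $N_j$ a $2^j/32$-net with $N_j \subseteq N_{j-1}$). Nesting gives every terminal a well-defined rank (the largest $j$ with $v \in N_j$), and one sets $X'_j = N_j \setminus N_{j+9}$, i.e.\ the net points of rank in $[j, j+9)$, so each terminal is charged at $O(1)$ scales by construction. The roles then shift slightly from your outline: $\clust^+_j$ is taken to be the clusters meeting $N_{j+9}$ (whose separation follows from the net's packing property, verifying the hypothesis of Lemma~\ref{lem:clust}), and one charges $|\clust^-_j|$ not to cluster centers but to the $N_j$-points that every cluster must contain --- a cluster in $\clust^-_j$ contains a point of $N_j$ but none of $N_{j+9}$, hence a point of $X'_j$, and disjointness of clusters gives $|\clust^-_j| \leq |X'_j|$. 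You would need to introduce this nested-net device (or an equivalent scale-consistent rank) to close your argument.
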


\begin{proof}
  We can easily construct a family of nets $N_j$, one for each $j$, such that $N_j$ is a $2^j/32$-net and is nesting, i.e. $N_j \subseteq N_{j-1}$ for all $j$. 

First, we claim that each cluster $C(z) \in \clust_j$ contains at least one net point $v \in N_j$. Suppose, towards a contradiction, that there exists a cluster $C(z) \in \clust_j$ such that $C(z) \cap N_j = \emptyset$. Consider a net point $v \in N_j$. Since each terminal is added to the cluster of the closest center in $Z_j$ and $v$ is not assigned to $z$, it must be the case that $d(v,z) \geq 2^j/32$; otherwise, $z$ would have been the closest center to $v$ because $d(z,z') \geq 2^j/16$ for every $z' \in Z_j$. Therefore, every net point $v \in N_j$ has $d(v,z) \geq 2^j/32$, but this contradicts the fact that $N_j$ is a $2^j/32$-net. Thus, each cluster in $\clust_j$ contains at least one net point $v \in N_j$.

Next, for each terminal $v$, we define its \emph{rank} to be the largest $j$ such that $v \in N_j$. Let $N^+_j = N_{j+9}$ and $X'_j = N_j \setminus N^+_j$. Note that because of the nesting property of the nets, the terminals in $X'_j$ are exactly those with rank between $j$ and $j+9$. Let $\clust^+_j$ be the subset of $\clust_j$ that intersects with $N^+_j$ and $\clust^-_j = \clust_j \setminus \clust^+_j$. Now, for every $C(z), C(z') \in \clust^+_j$ and $u \in C(z)$ and $v \in C(z')$, we have $d(u,v) \geq d(z,z') - d(u,z) - d(v,z') \geq 2^{j+9}/32 - 2 \cdot 2^j/16 \geq 2^{j+3}$. Therefore, by Lemma \ref{lem:clust}, we have $|A_j| \leq 2|\clust^-_j|$.

Now, every cluster $C \in \clust^-_j$ contains a terminal of $X'_j$; this is because every cluster contains at least one terminal of $N_j$ and clusters in $\clust^-_j$ do not contain any terminal of $N^+_j$. Since the clusters $\clust^-_j$ are disjoint, we have $|\clust^-_j| \leq |X'_j|$. We now have $\sum_j |A_j| \leq 2 \sum_j |X'_j|$. Observe that every terminal is contained in $X'_j$ for at most a constant number of scales $j$ (each terminal $v \in X'_j$ has $\rank(v) \in [j, j+9]$). Since there are $2k$ terminals, we have $\sum_j |X'_j| \leq O(k)$ and so, $\sum_j |A_j| \leq O(k)$.
\end{proof}

Lemmas \ref{lem:A}, \ref{lem:cost} and \ref{lem:size} give us Theorem
\ref{thm:spanner}.


\paragraph{Handling lack of knowledge of $k$.}
Our Algorithm \ref{alg} requires the knowledge of the value of $k$ in
determining when augmentation edges must be added, which is not
available a priori in the online setting. However, it is relatively
straightforward to get over this by noting that the subgraph $H$
maintained by the algorithm can be decomposed into one set of edges for
every distance scale $j$. At a specific distance scale, the number of
centers in $Z_j$ gives us a good lower bound on $k_j$ which can be used
in place of $k$ in determining the stretch condition: in other words, we
use $4\log k_j$ in the stretch condition when processing the loop for
$j$, and update it whenever the number of clusters in $Z_j$ doubles.


\section{Randomized Oblivious Buy at Bulk}
\label{sec:oblivious-bab-rand}

In this section, we describe our randomized algorithm for the oblivious
case, where we don't know the buy-at-bulk function in advance, and hence
have to give a solution that is simultaneously good for all buy-at-bulk
functions. Let us recall our main result for this section,
Theorem~\ref{thm:ob-BAB-rand}:

\BabUnknownRand*

Recall that this guarantee can also be achieved by performing Bartal's
tree embeddings, but the ideas we develop here will be crucial to
getting the deterministic algorithm in \S\ref{sec:oblivious-bab-det}.



For this section and the next, we use $g_i$ to denote the
``rent-or-buy'' concave function $g_i(x) = \min \{2^i, x\}$ and write
$\OPT_i$ to be the cost of the optimal solution for $g_i$. Given a
routing solution $P$, we denote by $\cost_i(P)$ the cost of $P$ under
$g_i$. By Lemma~\ref{lem:basis} it suffices to show that simultaneously
for all $g_i$,
\[  \cost_i(P) \leq O(\log^2 k)  \OPT_i. \]
This is what we show in the rest of this section and the next.




\subsection{High-Level Description}
We begin by reviewing the main ideas for approximating a fixed
rent-or-buy function $g_i = \min\{x, 2^i\}$ online. Essentially, it
boils down to choosing a good set of \emph{``buy'' terminals} $X_i$
(which includes the root) online. The terminal set $X_i$ is connected by
buying the edges of an online Steiner tree $F_i$ on them. The remaining
terminals are called \emph{``rent'' terminals} and are routed as
follows: For each rent terminal $u$, let $X_i(v)$ be the set of bought
terminals at the time of its arrival; the terminal $u$ is then routed to
the nearest buy terminal $v \in X_i(v)$ (the edge $(u,v)$ is rented),
and then via $v$'s path to the root in $F_i$. Observe that the cost of
buying edges is $2^i d(F_i)$ and the cost of renting edges is $\sum_{v
  \notin X_i} d(v, X_i(v))$. A key ingredient\footnote{We will actually 
need a stronger version, Lemma \ref{lem:opt}} (see, e.g.,~\cite{AwerbuchAB04}) 
is the fact that if we choose each terminal $v$ to be a buy terminal with probability $2^{-i}$, then
\begin{equation}
2^i d(\Steiner(X_i)) + \sum_{v \notin X_i} d(v,X_i(v)) \leq O(\log k)
\OPT_i.\label{eq:ROB-rand}
\end{equation}

But this approach requires knowing the particular function $g_i$,
whereas we need to oblivious, i.e., to be good for all $g_i$
simultaneously. This suggests the following approach to designing an
online oblivious algorithm. Firstly, we assign a \emph{type} $\type(v)$ to
each terminal $v$ at random: $\type(v) = i$ with probability
$2^{-i}$. As before, we assign the root $r$ to have type $\infty$. Thus,
if we define $X_i$ to be the set of terminals with type at least $i$,
then it satisfies Inequality
\eqref{eq:ROB-rand}. 

Secondly, for each type $i$, we maintain an online spanner $F_i$ on
$X_i$ using the algorithm from \S\ref{sec:spanner}. We think of the
edges that belong to a spanner of type at least $i$ as the edges that
our solution buys if it is given the $i$-th rent-or-buy function
$g_i$. Then, each terminal $v$ is routed using a path $P(v)$ that goes
through spanners of progressively higher type until we reach the
root. Let $P_i(v)$ be the prefix of $P(v)$ right before it enters a
spanner of type at least $i$. This is the part of $P(v)$ that will be
rented for $g_i$. At a high level, the lightness and low distortion
properties of the spanners will allow us to prove that the routing $P$
satisfies the following bounded cost properties for all $i$:
\begin{itemize}
\item Bounded buy cost: $2^id(F_i) \leq O(\log k)\Steiner(X_i)$;
\item Bounded rent cost:
  $\sum_{v : \type(v) < i} d(P_i(v)) \leq O(\log k) d(v,X_i(v))$,
  i.e. the distance from $v$ to $X_i(v)$ on the path $P(v)$ is not
  much larger than the actual distance.
\end{itemize}


Some care needs to be taken in choosing which spanners to route
through. Fix a terminal $v$, and for each $i \geq \type(v)$, let $w_i$
be the nearest terminal of type at least $i$ and call it the $i$-th
\emph{waypoint} for $v$. A natural idea is to route from $v$ to
$w_{\type(v)+1}$ using the spanner $F_{\type(v)}$, then from
$w_{\type(v)+1}$ to $w_{\type(v)+2}$ using $F_{\type(v)+1}$, and so
on, until we reach the root. However, the problem is that even if we
were to route between consecutive waypoints $w_i, w_{i+1}$ using the
direct edge $(w_i, w_{i+1})$, the total length of the path before we
can reach a given $w_{i'}$ might be too large. To overcome this issue,
we bypass waypoints that are of roughly the same distance from $v$ and
route through a subset of waypoints whose distances from $v$
are geometrically increasing. See Algorithm \ref{alg:rand-ob} for a
formal description of our algorithm.

\begin{figure}
  \centering
  \includegraphics[scale=0.5]{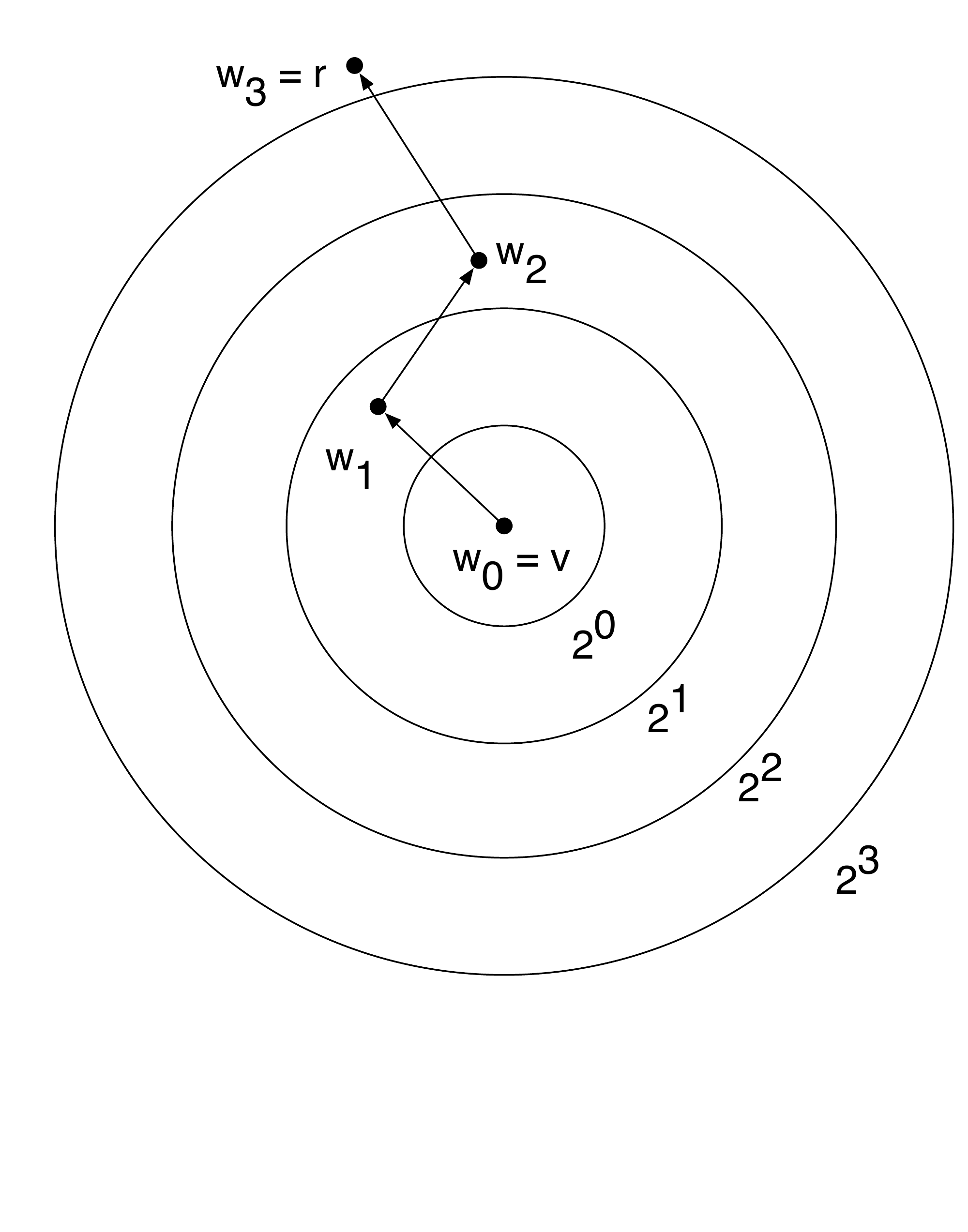}
  \caption{The arrow from $w_{i}$ to $w_{i+1}$ represents the shortest
    path between $w_{i}(v)$ and $w_{i+1}(v)$ in $F_{\tau(w_i)}$. Note
    that it is possible for a ring (in this figure, the fourth ring)
    to not have a waypoint.} \label{fig:path} \end{figure}


\begin{algorithm}
\caption{Randomized Online Oblivious Algorithm for Buy-at-Bulk} 
\begin{algorithmic}[1]
  \label{alg:rand-ob}
  \STATE Initialize $X_i \leftarrow \emptyset, F_i \leftarrow \emptyset$ for all
  $i$
  \STATE Set $\type(r) \leftarrow \infty$
  \WHILE {terminal $v$ arrives}
  \STATE Choose a random type $\type(v)$, with $\type(v) = i$ w.p. $2^{-i}$
  \STATE For $i \leq \type(v)$, add $v$ to $X_i$ and update spanner $F_i$ to include $v$
  \STATE For $i > \type(v)$, define waypoint $w_i$ to be terminal of $X_i$ nearest to $v$
  \STATE Set initial waypoint to be $v$
  \FOR {$j = 0$ up to $\lfloor \log d(v,r) \rfloor$}
  \STATE Define the ring $R_j = \{w_i : d(v,w_i) \in [2^j, 2^{j+1})\}$
  \IF {$R_j \neq \emptyset$}
  \STATE Choose waypoint $w_i$ in $R_j$ of highest type and route from current waypoint $w_{i'}$ to $w_i$ along shortest path in $F_{i'}$
  \ENDIF
  \ENDFOR
  \ENDWHILE
\end{algorithmic}
\end{algorithm}

\subsection{Analysis}

We now analyze $\cost_i(P)$. For each
terminal $v$, define $X_i(v)$ to be the subset of terminals of type
at least $i$ when $v$ arrived, excluding $v$ itself. 

\begin{lemma}
  \label{lem:cost-decomp}
  $\cost_i(P) \leq 2^i \sum_{i' \geq i} d(F_{i'}) + \sum_{v : \type(v) < i} d(P_i(v))$.
\end{lemma}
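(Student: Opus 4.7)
\textbf{Proof proposal for Lemma~\ref{lem:cost-decomp}.}
The plan is to write $\cost_i(P) = \sum_e d(e)\,g_i(\load(e)) = \sum_e d(e)\min\{2^i,\load(e)\}$ and then split the edges carrying load into two classes: let $E_1 := \bigcup_{i' \geq i} F_{i'}$ be the edges appearing in some spanner of type at least $i$, and let $E_2$ be the remaining edges carrying load in the routing. On $E_1$ I will use the buy bound $\min\{2^i,\load(e)\} \leq 2^i$, and on $E_2$ I will use the rent bound $\min\{2^i,\load(e)\} \leq \load(e)$. Summing gives
\[
\cost_i(P) \;\leq\; 2^i \sum_{e \in E_1} d(e) \;+\; \sum_{e \in E_2} d(e)\,\load(e) \;\leq\; 2^i \sum_{i'\geq i} d(F_{i'}) + \sum_{e \in E_2} d(e)\,\load(e),
\]
where the last inequality uses the (possibly loose) union bound $d(E_1) \leq \sum_{i'\geq i} d(F_{i'})$ (an edge in several $F_{i'}$'s is counted multiply, which is fine for an upper bound).

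The second step is to rewrite $\sum_{e \in E_2} d(e)\,\load(e)$ as a sum over terminals $v$ of the length of the portion of $P(v)$ lying in $E_2$. By unswapping the order of summation, $\sum_e d(e)\,\load(e) = \sum_v d(P(v) \cap E_2)$, so it suffices to show that $P(v) \cap E_2 \subseteq P_i(v)$ for every $v$, and in particular that $P(v) \cap E_2 = \emptyset$ whenever $\type(v) \geq i$. Both follow directly from the structure of Algorithm~\ref{alg:rand-ob}: every segment of $P(v)$ is routed along a spanner $F_{i'}$, and the type $i'$ used on each segment equals the type of the current waypoint. The initial waypoint is $v$ itself (so the first segment lies in $F_{\type(v)}$), and the waypoint types are non-decreasing along $P(v)$. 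Hence if $\type(v) \geq i$ then every segment lies in some $F_{i'}$ with $i' \geq i$, giving $P(v) \subseteq E_1$; and if $\type(v) < i$, then exactly $P_i(v)$ is the prefix of $P(v)$ traversed before the first segment routed along a spanner of type $\geq i$, so $P(v) \setminus P_i(v) \subseteq E_1$ and consequently $P(v) \cap E_2 \subseteq P_i(v)$.

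Combining the two steps,
\[
\sum_{e \in E_2} d(e)\,\load(e) \;=\; \sum_v d(P(v) \cap E_2) \;\leq\; \sum_{v:\type(v)<i} d(P_i(v)),
\]
which plugged into the first displayed inequality yields the claim.

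The argument is essentially an accounting/bookkeeping step, so there is no real obstacle; the only thing to be careful about is the edge-classification bookkeeping, namely establishing that for $\type(v) \geq i$ the entire path $P(v)$ is contained in $E_1$ and for $\type(v)<i$ only the prefix $P_i(v)$ can leave $E_1$. This is where I would spend the bulk of the write-up, since it depends on the precise routing rule (shortest path in the spanner indexed by the current waypoint's type) rather than on any quantitative estimate.
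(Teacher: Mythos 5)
Your proof is correct and follows essentially the same route as the paper's: both partition the routed edges into a ``buy'' part contained in $\bigcup_{i' \geq i} F_{i'}$ (charged at $2^i$ per unit length) and a ``rent'' part charged at full load, and both rely on the observation that $P(v) \setminus P_i(v)$ lies in spanners of type at least $i$ while $P_i(v) = \emptyset$ when $\type(v) \geq i$. The only cosmetic difference is that you partition edges by membership in the high-type spanners directly, whereas the paper defines $\buy_i = \bigcup_v P(v)\setminus P_i(v)$ and then notes it is contained in $\bigcup_{i'\geq i}F_{i'}$; the two are interchangeable here.
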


\begin{proof}
  Recall that
  $\cost_i(P) = \sum_{e \in E} d(e)\min\{2^i, \load_P(e)\}$, where
  $\load_P(e)$ is the total number of terminals $v$ whose paths $P(v)$
  include $e$.  We will divide the edges used by the routing into two
  sets $\rent_i$ and $\buy_i$, and think of $\rent_i$ as the set of rented edges
  (i.e. we pay $d(e)\load_P(e)$ on edges $e \in \rent_i$), and $\buy_i$ as
  the set of bought edges (i.e. we pay $2^id(e)$ per edge in
  $\buy_i$). Define $\rent_i = \bigcup_v P_i(v)$ and
  $\buy_i = \bigcup_v P(v) \setminus P_i(v)$. We have
\[\cost_i(P) = \sum_{e \in E} d(e)\min\{2^i, \load_P(e)\} \leq 2^i
d(\buy_i) + \sum_{e \in \rent_i} d(e)\load_P(e).\]

Since $P(v) \setminus P_i(v) \subseteq \bigcup_{i' \geq i} F_{i'}$, we
have that $2^i d(\buy_i) \leq 2^i \sum_{i' \geq i} d(F_{i'})$. Now
observe that for each $e \in \rent_i$, $\load_P(e) $ is exactly the number
of terminals $v$ of type less than $i$ with $P_i(v) \ni e$. This is
because $P_i(v) \neq \emptyset$ only for terminals $v$ of type less
than $i$. Thus, we have
$\sum_{e \in \rent_i} d(e) \load_P(e) = \sum_{v : \type(v) < i}
d(P_i(v))$.
\end{proof}

We define $\Steiner(S)$ to be the cost of the optimal Steiner tree for
terminals $S$.

\begin{lemma}
  \label{lem:opt}
  We have that
  \begin{enumerate}
  \item $E[2^i \Steiner(X_i)] \leq \OPT_i$ and
  \item $E[\sum_v d(v, X_i(v))] \leq O(\log k)\OPT_i$.
  \end{enumerate}
\end{lemma}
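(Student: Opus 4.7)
I will decompose $\OPT_i$ as $2^i d(B^*) + \sum_v \rho_v^*$, where $B^*$ is the set of edges with load at least $2^i$ in the optimum routing for $g_i$ (the ``buy'' edges) and $\rho_v^*$ is the length of the portion of $v$'s optimal path on the remaining (``rent'') edges, so $\OPT_i \geq 2^i d(B^*)$ and $\OPT_i \geq \sum_v \rho_v^*$. For Part~(1) this gives a direct construction: $\bigcup_{v \in X_i} P_v^*$, where $P_v^*$ is $v$'s path in the optimum, is a connected subgraph spanning $X_i \cup \{r\}$ with total edge cost at most $d(B^*) + \sum_{v \in X_i} \rho_v^*$ (buy edges counted once; rent edges bounded by summing $\rho_v^*$ over $v \in X_i$). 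Taking expectations, using that each terminal lies in $X_i$ independently with probability $O(2^{-i})$,
\[ \E[\Steiner(X_i)] \;\leq\; d(B^*) + O(2^{-i}) \sum_v \rho_v^*, \]
so $\E[2^i \Steiner(X_i)] \leq O\big(2^i d(B^*) + \sum_v \rho_v^*\big) = O(\OPT_i)$.

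For Part~(2), I will analyze through an HST. By Corollary~\ref{cor:HST}, fix an HST embedding $T$ of the metric on the terminals and $r$ with $\OPT_i(T) \leq O(\log k) \OPT_i$. Since $d(u,v) \leq d_T(u,v)$ pointwise, $\sum_v d(v, X_i(v)) \leq \sum_v d_T(v, X_i(v))$, so it suffices to show $\E_\type\big[\sum_v d_T(v, X_i(v))\big] = O(\OPT_i(T))$. On an HST, $d_T(v, X_i(v))$ is at most a constant times $2^{h(v)}$, where $h(v)$ is the level of the lowest ancestor edge $e$ of $v$ whose leaf-set $\leaves(e)$ already contains a terminal of $X_i(v)$. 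Hence
\[ \sum_v d_T(v, X_i(v)) \;=\; O\!\left(\sum_h 2^h \cdot \big|\{v : h(v) \geq h\}\big|\right). \]

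For each level-$h$ edge $e$ of $T$, the terminals in $\leaves(e)$ contributing to $\{v : h(v) \geq h\}$ are precisely those $v \in \leaves(e)$ arriving before the first sample of type $\geq i$ in $\leaves(e)$. Since each terminal is sampled into $X_i$ independently with probability $\Theta(2^{-i})$, a geometric waiting-time argument bounds the expected count by $O(\min\{\load(e), 2^i\})$. Summing over edges and noting that on a tree $\OPT_i(T) = \sum_e T(e)\min\{\load(e), 2^i\}$,
\[ \E\Big[\sum_v d_T(v, X_i(v))\Big] \;=\; O\!\left(\sum_e T(e)\min\{\load(e), 2^i\}\right) \;=\; O(\OPT_i(T)) \;=\; O(\log k)\,\OPT_i. \]

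\textbf{Main obstacle.} Part~(1) is a short routing/charging argument once the buy/rent decomposition of $\OPT_i$ is in hand. The technical heart is Part~(2): the key step is the per-edge geometric waiting-time bound, which exploits the independence of the type assignments to identify the ``expected number of terminals arriving in a subtree before its first sample of type $\geq i$'' with the rent-or-buy threshold $\min\{\load(e), 2^i\}$. Matching the geometric HST level factors $2^h$ against this threshold is what cleanly recovers $\OPT_i(T)$, and only then does the HST-to-metric distortion contribute the final $\log k$ factor.
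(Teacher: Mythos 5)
Your Part~(1) is essentially the paper's own argument: decompose $\OPT_i$ into the ``buy'' edges (load at least $2^i$) plus the per-terminal rent segments, observe that $\bigcup_{v \in X_i} P^*_i(v)$ is a feasible Steiner tree for $X_i$, and note that each rent segment is counted only when $v \in X_i$, which happens with probability $\Theta(2^{-i})$ and cancels the $2^i$. (The lemma is stated without a big-$O$ in part~(1), but only the constant-factor version is ever used, so your $O(\OPT_i)$ is harmless.) For Part~(2), however, you take a genuinely different route. The paper never touches an HST here: it uses the independence of the event $v \in X_i$ from the random variable $d(v, X_i(v))$ (which depends only on earlier terminals' types) to write $\E[\sum_v d(v,X_i(v))] = \E[2^i \sum_{v \in X_i} d(v, X_i(v))]$, recognizes $\sum_{v \in X_i} d(v, X_i(v))$ as the cost of the greedy online Steiner tree on $X_i$ — hence at most $O(\log k)\Steiner(X_i)$ — and then invokes Part~(1). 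Your argument instead fixes an HST $T$ with $\OPT_i(T) \leq O(\log k)\OPT_i$ and runs a per-subtree geometric waiting-time computation, bounding the expected number of terminals in $\leaves(e)$ that precede the first sampled terminal there by $O(\min\{\load(e), 2^i\})$ and reassembling $\OPT_i(T)$. Both are correct; the paper's is shorter and reuses Part~(1), while yours is self-contained on the tree and is closer in spirit to the charging arguments the paper uses elsewhere (e.g.\ Lemma~\ref{lem:rent-cost}), with the $\log k$ coming from embedding distortion rather than from the greedy Steiner ratio. Two details to tidy: your ``precisely'' claim is off by one level — a terminal $v$ with $h(v) = h$ arrives \emph{after} the first sample in its level-$h$ subtree but before the first sample in its level-$(h-1)$ subtree, so you should count, for each level-$(h-1)$ edge $e'$, the terminals of $\leaves(e')$ with no earlier sample in $\leaves(e')$; and you need the convention that a terminal with no sampled predecessor in any ancestor subtree is routed through the HST root, where the same waiting-time bound on the topmost edges covers it.
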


\begin{proof}
  We begin with the first part of the lemma. For terminal $v \in X$,
  let $P^*_i(v)$ be the path to the root in $\OPT_i$. Define $\buy^*_i$ to
  be the set of edges over which at least $2^i$ terminals
  are routed, and for each $v$, define $\rent^*_i(v) = P^*_i(v) \setminus
  \buy^*_i$. The cost of the optimal solution is
  \[\OPT_i = 2^i d(\buy^*_i) + \sum_{v \in X} d(\rent^*_i(v)).\]

  Define $H_i = \bigcup_{v \in X_i} P^*_i(v)$. Since $H_i$
  is a Steiner tree for terminals $X_i$,
  \begin{align*}
    E[2^i \Steiner(X_i)]
    &\leq 2^i d(\buy^*_i) + E\bigg[2^i \sum_{v \in X_i} d(\rent^*_i(v))\bigg]\\
    &\leq 2^i d(\buy^*_i) + 2^i \sum_v d(\rent^*_i(v))\cdot \Pr[v \in X_i]\\    
    &\leq 2^i d(\buy^*_i) + \sum_v d(\rent^*_i(v)) = \OPT_i.
  \end{align*}
  This proves the first part of the lemma.

  We now turn to the second part of the lemma. For this, we will show
  that
  $E\bigg[2^i \sum_{v \in X_i} d(v, X_i(v))\bigg] =
  E\bigg[\sum_v d(v, X_i(v))\bigg]$,
  then prove
  $\sum_{v \in X_i} d(v, X_i(v)) \leq O(\log
  k)\Steiner(X_i)$,
  and apply the first part of the lemma. By linearity of expectation,
  \begin{align*}
    E\bigg[2^i \sum_{v \in X_i} d(v, X_i(v))\bigg]
    &= 2^i \sum_v E[d(v, X_i(v)) \cdot 1\{v \in X_i\}]\\
    &= 2^i \sum_v E[d(v, X_i(v))] \cdot \Pr[v \in X_i]\\
    &= E\bigg[\sum_v d(v, X_i(v))\bigg]
  \end{align*}
  where the second equality follows from the fact that whether $v \in
  X_i$ is independent of $d(v,X_i(v))$, and the last equality from the fact that $\Pr[v \in X_i] = 1/2^i$.

  Next, we prove
  $\sum_{v \in X_i} d(v, X_i(v)) \leq O(\log
  k)\Steiner(X_i)$.
  Consider the online Steiner tree instance defined on terminals
  $X_i$. The cost of the greedy algorithm, which connects each
  terminal to the nearest previous terminal, is exactly
  $\sum_{v \in X_i} d(v, X_i(v))$. The fact that the greedy
  algorithm is $O(\log k)$-competitive for the online Steiner tree
  problem implies
  $\sum_{v \in X_i} d(v, X_i(v)) \leq O(\log k)
  \Steiner(X_i)$. Therefore,
  \[E\bigg[\sum_v d(v, X_i(v))\bigg] = E\bigg[2^i \sum_{v \in X_i} d(v,
  X_i(v))\bigg] \leq O(\log k) E[2^i \Steiner(X_i)] \leq O(\log k)
  \OPT_i.\] This completes the proof of the second part of the lemma.
\end{proof}

In the following, denote by $P_i(v)$ the prefix of $P(v)$ right before
it enters a spanner $F_{i'}$ with $i' \geq i$.

\begin{lemma}
  \label{lem:dist}
  For every $i > \type(v)$, we have $d(P_i(v)) \leq O(\log k) d(v, X_i(v))$.
\end{lemma}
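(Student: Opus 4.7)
The plan is to unroll the path $P_i(v)$ into hops between consecutive waypoints visited by the algorithm and bound each hop via the stretch guarantee of the spanner it is routed through. Let $u_0 = v, u_1, u_2, \ldots, u_m$ denote the sequence of waypoints, taken from rings $R_{j_1}, R_{j_2}, \ldots, R_{j_m}$ with strictly increasing $j_k$'s, where $u_m$ is the first waypoint of type at least $i$. By the algorithm's definition, $P_i(v)$ is exactly the concatenation of the $F_{\type(u_{k-1})}$-paths from $u_{k-1}$ to $u_k$ for $k = 1, \ldots, m$.

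The main structural step I would carry out first is to show that the types of the $u_k$'s are strictly increasing, which is what justifies applying the spanner's stretch bound to each hop. The key observation is that $d(v, w_{i'})$ is nondecreasing in $i'$ (because $X_{i'+1} \subseteq X_{i'}$), so the set of waypoints landing inside any ring $R_{j_k}$ corresponds to a contiguous range of indices, and within such a range $\type(w_{i'})$ is itself nondecreasing in $i'$. Hence the highest-type waypoint in $R_{j_k}$ is $w_{i^+(k)}$ for $i^+(k) := \max\{i' : w_{i'} \in R_{j_k}\}$, and one can argue that in fact $\type(u_k) = i^+(k)$: otherwise $w_{i^+(k)+1}$ would lie at distance $\leq d(v, u_k) < 2^{j_k+1}$, forcing it back inside $R_{j_k}$ and contradicting the maximality of $i^+(k)$. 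Since the next chosen waypoint $u_{k+1}$ has index $i^+(k+1) > i^+(k)$, the types strictly increase, so $u_k \in X_{\type(u_{k-1})}$ and the $O(\log k)$-stretch guarantee of Theorem \ref{thm:spanner} applies to the pair $(u_{k-1}, u_k)$ in $F_{\type(u_{k-1})}$.

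Next I would bound $d(v, u_m)$ in terms of $d(v, X_i(v)) = d(v, w_i)$. Because $w_i$ itself has type $\geq i$ and lies in the ring $R_{j^*}$ with $j^* = \lfloor \log_2 d(v, w_i) \rfloor$, by the time the algorithm processes this ring it must already have selected some waypoint of type $\geq i$. Thus $j_m \leq j^*$, giving $d(v, u_m) < 2^{j_m+1} \leq 2 \, d(v, w_i)$. To finish, I would expand
\[
d(P_i(v)) \leq O(\log k) \sum_{k=1}^m d(u_{k-1}, u_k) \leq O(\log k) \sum_{k=1}^m \big( d(v, u_{k-1}) + d(v, u_k) \big),
\]
and then use $d(v, u_k) < 2^{j_k+1}$ together with the fact that the $j_k$'s are distinct increasing integers to evaluate the geometric sum as $O(2^{j_m}) = O(d(v, u_m))$, concluding $d(P_i(v)) = O(\log k) \, d(v, X_i(v))$.

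The hard part will be the type-monotonicity claim $\type(u_k) = i^+(k)$: without it, I cannot guarantee that consecutive waypoints $u_{k-1}, u_k$ both lie in $X_{\type(u_{k-1})}$, and so the spanner stretch bound cannot be directly invoked for each hop. Once that monotonicity is established, the remaining ingredients (the ring-based geometric series and the termination bound coming from $w_i$ itself) are essentially routine.
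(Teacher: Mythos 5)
Your proposal is correct and follows essentially the same route as the paper's proof: decompose $P_i(v)$ into hops between consecutively visited waypoints, apply the $O(\log k)$ spanner stretch to each hop, sum the geometrically increasing ring distances, and handle types $i$ whose waypoint $w_i$ is not itself visited by observing that some visited waypoint of type $\geq i$ lies within distance $2\,d(v,w_i)$. The only difference is that you make explicit the type-monotonicity of consecutive visited waypoints (so that each hop's endpoints indeed lie in the spanner being used), a point the paper leaves implicit; your argument for it is sound.
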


\begin{proof}
  Note that $w_i$ is the nearest terminal of $X_i(v)$ so $d(v,w_i) = d(v,X_i(v))$. Define $W$ to be the subset of waypoints $w_i$ such that $w_i$ is the waypoint of highest type in some ring $R_j$, and $I = \{i : w_i \in W\}$. For every type $i' > \type(v)$ that is not in $I$, there exists $i \in I$ such that $i > i'$ and $d(v, w_{i}) \leq 2d(v,w_{i'})$; moreover, since $i > i'$, we have $d(P_{i'}(v)) \leq d(P_{i}(v))$. Therefore, it suffices to prove $d(P_i(v)) \leq O(\log k) d(v,w_i)$ for types $i \in I$. 

For brevity, we abuse notation and write $w_\ell$ to denote the waypoint in $W$ with $\ell$-th highest type; $F_\ell$ to denote the spanner of type $\type(w_\ell)$. Let $P_\ell(v)$ be the prefix of $P(v)$ ending at $w_\ell$. 
We now show that $d(P_\ell(v)) \leq O(\log k)d(v,w_\ell)$.

Since $P_\ell(v)$ consists of the union of the shortest path between $w_{\ell'-1}$ and $w_{\ell'}$ in $F_{\ell'-1}$, for every $\ell' < \ell$, the length of $P_\ell(v)$ is 
$d(P_\ell(v)) = \sum_{\ell' \leq \ell}d_{F_{\ell'-1}}(w_{\ell'-1}, w_{\ell'})$.
For every $\ell' < \ell$, since $F_{\ell'-1}$ is a spanner with stretch $O(\log k)$, we have 
\[d_{F_{\ell'-1}}(w_{\ell'-1}, w_{\ell'}) \leq O(\log k)d(w_{\ell'-1}, w_{\ell'}) \leq O(\log k)(d(v, w_{\ell'-1}) + d(v, w_{\ell'})).\]
Thus, 
  \begin{align*}
    d(P_\ell(v))
    &\leq O(\log k)\sum_{\ell' < \ell} (d(v, w_{\ell'-1}) + d(v, w_{\ell'})) \leq O(\log k) d(v,w_\ell),
  \end{align*}
where the last inequality follows from the fact that we chose $W$ to contain at most one waypoint per ring and the rings have geometrically increasing distances.
\end{proof}

\begin{lemma}
  $E[\cost_i(P)] = O(\log^2 k)\OPT_i$.
\end{lemma}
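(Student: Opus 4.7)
The plan is to use Lemma~\ref{lem:cost-decomp} to split $\cost_i(P)$ into a ``buy'' contribution $2^i \sum_{i' \geq i} d(F_{i'})$ and a ``rent'' contribution $\sum_{v : \type(v) < i} d(P_i(v))$, take expectations, and bound each piece by $O(\log^2 k)\OPT_i$ separately.

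For the rent part, I would apply Lemma~\ref{lem:dist} to replace $d(P_i(v))$ by $O(\log k)\, d(v, X_i(v))$ whenever $\type(v) < i$, enlarge the sum to range over all terminals (this only makes it bigger), and then invoke Lemma~\ref{lem:opt}(2) to conclude $E\big[\sum_{v: \type(v) < i} d(P_i(v))\big] \leq O(\log k)\, E\big[\sum_v d(v, X_i(v))\big] \leq O(\log^2 k)\, \OPT_i$.

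For the buy part, I would use the lightness guarantee from Theorem~\ref{thm:spanner} to bound $d(F_{i'}) \leq O(\log k)\, \Steiner(X_{i'})$. Since $X_{i'} \subseteq X_i$ for every $i' \geq i$, any Steiner tree on $X_i$ is also one on $X_{i'}$, so $\Steiner(X_{i'}) \leq \Steiner(X_i)$. Using that the number of non-empty type classes is $O(\log k)$ (see below), we get $\sum_{i' \geq i} \Steiner(X_{i'}) \leq O(\log k)\, \Steiner(X_i)$, so the buy cost is at most $O(\log^2 k)\, 2^i \Steiner(X_i)$. Lemma~\ref{lem:opt}(1) then gives $E[2^i \Steiner(X_i)] \leq \OPT_i$, yielding the desired $O(\log^2 k)\, \OPT_i$ bound on the expected buy cost.

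The main subtlety is the range of the sum $\sum_{i' \geq i}$: a priori the number of types is unbounded. But since $\Pr[\type(v) = i'] = 2^{-i'}$ and there are only $k$ terminals, by a union bound we have $\Pr[X_{i'} \neq \emptyset] \leq k \cdot 2^{-i'}$, so with probability $1 - 1/\mathrm{poly}(k)$ every non-trivial type class has $i' \leq \log k + O(1)$; equivalently, we can cap $\type(v)$ at $O(\log k)$ without changing the algorithm's behavior (empty spanners $F_{i'}$ contribute nothing to the buy cost). The residual low-probability event contributes only lower-order terms. Adding the two bounds gives $E[\cost_i(P)] \leq O(\log^2 k)\, \OPT_i$, as claimed.
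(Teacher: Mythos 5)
Your proposal matches the paper's proof essentially step for step: the same decomposition via Lemma~\ref{lem:cost-decomp}, the rent term handled by Lemma~\ref{lem:dist} followed by Lemma~\ref{lem:opt}(2), and the buy term by spanner lightness, monotonicity of $\Steiner(X_{i'})$ in $i'$, an $O(\log k)$ bound on the number of relevant types, and Lemma~\ref{lem:opt}(1). Your explicit union-bound justification for truncating the sum over types at $O(\log k)$ is a detail the paper leaves implicit, and it is a reasonable way to fill that gap.
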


\begin{proof}
  By Lemma \ref{lem:cost-decomp}, the expected cost of the algorithm's solution is
    $E[\cost_i(P)] \leq E\big[2^i \sum_{i' \geq i} d(F_{i'})\big] + E\big[\sum_{v : \type(v) < i} d(P_i(v))\big]$.
  Let $i' > i$. Since $d(F_{i'})$ is a $(\log k)$-competitive $(\log k)$-spanner on
  $X_i$, we have $d(F_{i'}) \leq O(\log k)\Steiner(X_{i'})$. Together
  with the fact that $\Steiner(X_{i'}) \leq \Steiner(X_i)$ (since
  $X_{i'}$ contains $X_i$), we
  have \[E\big[2^i \sum_{i' \geq i} d(F_{i'})\big] \leq O(\log k) \sum_{i' \geq i}
  E[2^i \Steiner(X_{i'})] \leq O(\log^2 k) E[2^i \Steiner(X_i)] \leq
  O(\log^2 k) \OPT_i,\]
  where the last inequality follows from Lemma \ref{lem:opt}.

  Next, we turn to bounding $E\big[\sum_{v : \type(v) < i} d(P_i(v))\big]$. By
  Lemmas \ref{lem:dist} and \ref{lem:opt}, we have
  \[E\bigg[\sum_{v : \type(v) < i} d(P_i(v))\bigg] \leq O(\log k)
  E\bigg[\sum_{\type(v) < i} d(v, X_i(v))\bigg] \leq O(\log^2 k) \OPT_i.\]
\end{proof}


\section{Deterministic Oblivious Buy at Bulk}
\label{sec:oblivious-bab-det}


In this section, we present our deterministic algorithm for the
oblivious setting, and prove Theorem \ref{thm:ob-BAB-det}.

\BabUnknownDet*

Our deterministic oblivious algorithm is based on the randomized version
from \S\ref{sec:oblivious-bab-rand} (Algorithm \ref{alg:rand-ob}) with
two modifications. First, we remove the need for randomness by using the
deterministic type selection rule used by our non-oblivious algorithm
from \S\ref{sec:non-obl} (Algorithm \ref{alg:main}). We can show that
this modification already yields an $O(\log^3 k)$-competitive
algorithm. To get the competitive ratio down to $O(\log^{2.5} k)$, we
need to use a different routing scheme. For clarity of presentation, we
first present the simpler $O(\log^3 k)$-competitive algorithm and later
(in Section \ref{sec:det-oblivious-improved}) show how to modify it to
obtain the $O(\log^{2.5} k)$-competitive algorithm of Theorem
\ref{thm:ob-BAB-det}.

\subsection{An $O(\log^3 k)$-competitive Algorithm}

We will assign a type $\type(v)$ to each terminal $v$, with the root having type $\type(r) = \infty$. As before, we use $X_i$ to denote the set of terminals with type at least $i$. We will also maintain a $(\log k, \log k)$-spanner $F_i$ on $X_i$. 

\begin{algorithm}
\caption{Simpler Online Deterministic Oblivious Algorithm for Buy-at-Bulk} 
\begin{algorithmic}[1]
  \label{alg:det-ob}
  \STATE Initialize $X_i \leftarrow \emptyset, F_i \leftarrow \emptyset$ for all
  $i$
  \STATE Set $\type(r) \leftarrow \infty$
  \WHILE {terminal $v$ arrives}
  \FOR {each type $i$}
  \STATE Define $\class_i(v)$ to be the highest $j$ such that $v$ is a level $j$ center in spanner $F_i$ if it were to be included in $F_i$
  \STATE Define the ball $B_i(v) = \{u : d(u,v) \leq 2^{\class_i(v)}/2^3\}$. 
  \ENDFOR
  \STATE Assign type $\type(v) = \max \left\{i : |B_i(v)| \geq 2^i \right\}$.
  \STATE For $i \leq \type(v)$, update spanner $F_i$ to include $v$
  \STATE For $i > \type(v)$, define waypoint $w_i$ to be terminal of $X_i$ nearest to $v$
  \STATE Set initial waypoint to be $v$
  \FOR {$j = 0$ up to $\lfloor \log d(v,r) \rfloor$}
  \STATE Define the ring $R_j = \{w_i : d(v,w_i) \in [2^j, 2^{j+1})\}$
  \IF {$R_j \neq \emptyset$}
  \STATE Choose waypoint $w_i$ in $R_j$ of highest type and route from current waypoint $w_{i'}$ to $w_i$ along shortest path in $F_{i'}$
  \ENDIF
  \ENDFOR
  \ENDWHILE
\end{algorithmic}
\end{algorithm}


\subsubsection{Analysis} 
Fix a type $i$. We will now show that $\cost_i(P) \leq O(\log^2 k)\OPT_i$. As before, define $P_i(v)$ to be the segment of $P(v)$ right before it enters a spanner $F_{i'}$ with $i' \geq i$. (Note that $P_i(v) = \emptyset$ if $\type(v) \geq i$.) 

By Lemma \ref{lem:cost-decomp}, we have $\cost_i(P) \leq 2^i \sum_{i' \geq i} d(F_i) + \sum_{v : \type(v) < i} d(P_i(v))$. Next, we bound $2^i \sum_{i' \geq i} d(F_i)$ and $\sum_{v : \type(v) < i} d(P_i(v))$ separately. 

\begin{lemma}
  $\sum_{v : \type(v) < i} d(P_i(v)) \leq O(\log^2 k)\OPT_i$.
\end{lemma}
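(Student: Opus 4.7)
The plan is to follow the template of the non-oblivious analysis in \S\ref{sec:non-obl}, combining a ``routing distortion'' bound with a charging argument on an HST embedding. First, I would argue that Lemma \ref{lem:dist} carries over verbatim: the routing scheme here is identical to the randomized oblivious algorithm (waypoints through rings, shortest paths in the $(\log k)$-stretch spanners $F_i$), and the proof of Lemma \ref{lem:dist} uses nothing about the type distribution beyond the fact that $w_i$ is the nearest terminal of $X_i(v)$. Thus for every $v$ with $\type(v) < i$,
\[
d(P_i(v)) \leq O(\log k)\, d(v, X_i(v)).
\]
So it suffices to prove $\sum_{v : \type(v) < i} d(v, X_i(v)) \leq O(\log k)\, \OPT_i$.

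To do this, I fix the HST embedding $T$ of $(X,d)$ guaranteed by Corollary \ref{cor:HST}, so $\OPT_i(T) \leq O(\log k)\,\OPT_i$, and then show by a charging scheme mirroring Lemma \ref{lem:rent-cost} that $\sum_{v : \type(v) < i} d(v, X_i(v)) \leq O(1)\,\OPT_i(T)$. Concretely, for each $v$ with $\type(v) < i$, letting $j$ be the integer with $d(v, X_i(v)) \in [2^j, 2^{j+1})$, I charge $2^{j+1}$ to the unique edge $e \in T$ of length $2^{j-4}$ such that $v \in \leaves(e)$ and $r \notin \leaves(e)$ (such $e$ exists because $d(v,r) \geq d(v, X_i(v)) \geq 2^j$). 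The total charge dominates $\sum d(v, X_i(v))$, and only edges with $r \notin \leaves(e)$ are charged; since the rent-or-buy lower bound \eqref{eq:ROB-lb} gives $\OPT_i(T) \geq \sum_{e : r \notin \leaves(e)} T(e)\min\{2^i, |\leaves(e)|\}$, it will be enough to show that the set $C(e)$ of terminals charging any fixed edge $e$ of length $2^{j-4}$ satisfies $|C(e)| \leq 2^i$.

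The main obstacle is this ``no overcharge'' claim, and this is precisely where the deterministic type-selection rule (now using spanner-class balls rather than the MLAST-class balls) must be shown to still imply the required clustering behavior. I would proceed exactly as in Lemma \ref{lem:rent-cost}: if $|C(e)| > 2^i$, pick $v \in C(e)$ to be the last-arriving terminal of $C(e)$; by the diameter bound in the HST, all of $C(e)$ lies within distance $2^{j-3}$ of $v$. Since $v$ charges $e$ we have $d(v, X_i(v)) \geq 2^j$. The key subclaim is that this distance lower bound forces $2^{\class_i(v)} \geq \Omega(2^j)$, so that the ball $B_i(v) = \{u : d(u,v) \leq 2^{\class_i(v)}/2^3\}$ has radius at least $2^{j-3}$ and therefore contains $C(e)$. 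Then $|B_i(v)| \geq |C(e)| > 2^i$, so the type-selection rule would have set $\type(v) \geq i$, contradicting $\type(v) < i$.

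The subclaim $2^{\class_i(v)} = \Omega(d(v,X_i(v)))$ follows from the definition of $\class_i(v)$ as the highest level at which $v$ would be a center of $F_i$: if $v$ is at distance at least $2^j$ from every member of $X_i(v)$, then in particular $v$ is $2^j/16$-separated from every existing level-$j$ center of $F_i$ (which lies in $X_i(v)$), so $v$ qualifies as a level-$j$ center and $\class_i(v) \geq j$. Combining the three factors---Lemma \ref{lem:dist}, the $O(1)$-charging onto $\OPT_i(T)$, and the HST loss $\OPT_i(T) \leq O(\log k)\OPT_i$---gives the claimed $O(\log^2 k)\,\OPT_i$ bound.
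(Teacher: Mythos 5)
Your proposal is correct and follows the paper's proof exactly: the paper also combines the distortion bound of Lemma~\ref{lem:dist} ($d(P_i(v)) \leq O(\log k)\,d(v,X_i(v))$) with a Lemma~\ref{lem:rent-cost}-style charging argument showing $\sum_{v:\type(v)<i} d(v,X_i(v)) \leq O(\log k)\,\OPT_i$. The only difference is that the paper leaves the adaptation of Lemma~\ref{lem:rent-cost} implicit, whereas you correctly supply the needed subclaim that $d(v,X_i(v)) \geq 2^j$ forces $\class_i(v) \geq j$ so that $B_i(v)$ swallows the charging set.
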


\begin{proof}
  Using the same argument as in the proof of Lemma \ref{lem:dist} gives us that for every terminal $v$, and type $i > \type(v)$, we have $d(P_i(v)) \leq O(\log k) d(v, X_i(v))$. Thus, \[\sum_{v : \type(v) < i} d(P_i(v)) \leq O(\log k) \sum_{v : \type(v) < i} d(v,X_i(v)).\] Using a similar proof as that of Lemma \ref{lem:rent-cost}, we also have \[\sum_{v : \type(v) < i} d(v,X_i(v)) \leq O(\log k)\OPT_i.\] Combining these two inequalities gives us the lemma.
\end{proof}

\begin{lemma}
  \label{lem:det-ob-buy-1}
  $2^i \sum_{i' \geq i} d(F_{i'}) \leq O(\log^2 k)\OPT_i(T)$ for any HST embedding $T$.
\end{lemma}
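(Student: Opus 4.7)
My plan is to mimic the charging argument of Lemma~\ref{lem:fixed-cost} from the non-oblivious setting, reducing the target bound in two stages. In Stage 1 I would bound the cost of each spanner individually by applying the spanner cost guarantee of Theorem~\ref{thm:spanner}: $d(F_{i'}) \leq O(\log k)\,\Steiner(X_{i'})$. Since the type classes are nested ($X_{i'} \subseteq X_i$ for $i' \geq i$) we get $\Steiner(X_{i'}) \leq \Steiner(X_i)$, and since only $O(\log k)$ rent-or-buy functions are relevant (the function $g_i$ becomes linear once $2^i$ exceeds $k$), summing over $i' \geq i$ yields $\sum_{i' \geq i} d(F_{i'}) \leq O(\log^2 k)\,\Steiner(X_i)$.

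In Stage 2 I would show the key inequality $2^i\,\Steiner(X_i) \leq O(1)\,\OPT_i(T)$, which is the heart of the proof. Starting from $\Steiner(X_i) \leq \Steiner_T(X_i) = \sum_{e : r \notin L(e),\, L(e) \cap X_i \neq \emptyset} T(e)$, I would charge $2^i T(e)$ for each such edge to the optimal-contribution term $T(e)\min\{2^i,|L(e)|\}$ from the lower bound~\eqref{eq:ROB-lb}. Whenever $|L(e)| \geq 2^i$ the charge is immediate. Otherwise, I would use the type-selection rule: every $v \in L(e) \cap X_i$ has $|B_{\type(v)}(v)| \geq 2^{\type(v)} \geq 2^i$ witnesses, all lying in a ball of radius $2^{\class_{\type(v)}(v)-3}$; the diameter bound of HSTs lets me associate these witnesses to an HST ancestor edge of $v$ at an appropriate constant-offset level, and a ball-packing disjointness argument analogous to Claim~\ref{clm:disjointness} lets me amortize the excess across HST edges without double-charging.

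The chief obstacle lies in the disjointness step of Stage 2. In the MLAST underlying Lemma~\ref{lem:fixed-cost} (Lemma~\ref{lem:mlast}(c)), class-$j$ terminals are $\Omega(2^j)$-separated with a generous constant that comfortably exceeds the witness-ball radius $2^{j-3}$. In the spanner, however, we only have separation $2^j/16$ between level-$j$ MCS centers, which is of the same order as the ball radius; the direct triangle-inequality argument of Claim~\ref{clm:disjointness} therefore no longer closes. I would repair this either by tightening the constant $2^3$ in the ball $B_i(v)$ of the type-selection rule to a larger power of two (the rest of the algorithm is unaffected), or by replacing strict disjointness with a bounded-overlap argument showing that each relevant HST edge is claimed by only $O(1)$ centers per class. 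Either fix loses only a constant factor, which is comfortably absorbed into the $O(\log^2 k)$ target.
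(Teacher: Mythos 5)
Your Stage~1 is fine, but Stage~2 is where the proof breaks, and the obstacle is not the one you identify. The inequality $2^i\,\Steiner(X_i) \leq O(1)\,\OPT_i(T)$ is exactly the deterministic analogue of Lemma~\ref{lem:opt}(1), which the paper proves \emph{only in expectation}, using the fact that each terminal lands in $X_i$ independently with probability $2^{-i}$. Deterministically, the paper never establishes this bound and deliberately avoids needing it. The charging you sketch does not deliver it: $\Steiner_T(X_i)$ charges $2^i T(e)$ to \emph{every} edge on the root-path of every $v \in X_i$, so each terminal must collect $2^i$ units of charge at \emph{every} level $j$ of the HST up to roughly $\log d(v, X_i(v))$. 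By contrast, in Lemma~\ref{lem:fixed-cost} each terminal collects charge only at the single level $j = \class_i(v)$, precisely because the quantity being bounded there is the per-terminal decomposition $\sum_v 2^{\class_i(v)}$ of the MLAST/spanner cost, not a Steiner tree. The disjointness of Claim~\ref{clm:disjointness} is a \emph{within-class} statement: terminals of the same class $j$ are $2^j$-separated, so their radius-$2^{j-3}$ witness balls are disjoint. Across classes, witness balls of distinct members of $X_i$ can be nested (a late-arriving high-class terminal's ball can contain an earlier low-class terminal together with its entire witness set), so the same witnesses are claimed repeatedly across levels and across members of $X_i$, and neither of your proposed repairs (enlarging the constant $2^3$, or a bounded-overlap count) addresses this multi-level, multi-class overcounting.

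The paper's actual proof routes around this entirely. It bounds $d(F_{i'}) \leq O(1)\sum_{v \in X_{i'}} 2^{\class_{i'}(v)}$ directly from the spanner analysis (Lemma~\ref{lem:cost}), and then confronts the issue your proposal never mentions: a terminal $v$ with $\type(v) \geq i'$ is only guaranteed $2^{i'}$ witnesses in the ball of radius $2^{\class_{\type(v)}(v)-3}$ used by the type rule, whereas the cost it contributes to $F_{i'}$ is $2^{\class_{i'}(v)}$, and these two classes need not coincide. The paper introduces the ``paying'' sets $A_{i'} = \{v \in X_{i'} : |B_{i'}(v)| \geq 2^{i'}\}$, shows $\class_{i'}(v) < \class_{\type(v)}(v)$ for non-paying terminals, and reroutes their charge to the spanner of their own type; the bound $|N(v)| \leq O(\log k)$ on the number of spanners a terminal joins without paying supplies one of the two logarithmic factors (the other comes from summing over $i' \geq i$). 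Only then is the single-level, single-class packing of Lemma~\ref{lem:fixed-cost} applied, to $2^i \sum_{v \in A_{i'}} 2^{\class_{i'}(v)}$. Your observation about the $2^j/16$ separation of spanner centers versus the $2^{j-3}$ ball radius is a fair quibble with the paper's ``similar proof'' hand-wave and is fixable by constants as you say, but it is peripheral; the missing ideas are the $A_{i'}$/$N(v)$ bookkeeping and the fact that no deterministic $O(1)$ bound on $2^i\Steiner(X_i)$ against $\OPT_i(T)$ is available to you.
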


\begin{proof}
  Let $i' > i$. By Lemma \ref{lem:cost} and the fact that $\class_{i'}(v)$ is the highest level $j$ such that $v$ is a level-$j$ center in $F_{i'}$, we have that $2^i d(F_{i'}) \leq O(1) 2^i \sum_{v \in X_{i'}} 2^{\class_{i'}(v)}$. Now, we would like to say that the RHS of the previous inequality is at most $O(1)\OPT_{i'}(T)$ for any HST embedding $T$. We can prove this using a similar proof as in Lemma \ref{lem:fixed-cost} if for every $v$ with $\type(v) \geq i'$, we have $|B_{i'}(v)| \geq 2^{i'}$. Unfortunately, this is not necessarily the case. A terminal $v$ might have $\type(v) \geq i'$, i.e.~it has at least $2^{\type(v)} \geq 2^{i'}$ terminals in its $\type(v)$-th ball $B_{\type(v)}(v)$, but its $i'$-th ball $B_{i'}$ might have a radius much smaller than $B_{\type(v)}(v)$ (i.e. $\class_{i'}(v) < \class_{\type(v)}(v)$), and thus does not contain at least $2^{i'}$ terminals. In this case, roughly speaking, we will charge the cost of including $v$ in $F_{i'}$ to the cost of including it in $F_{\type(v)}$.

  More formally, for each type $i'$, define $A_{i'} = \{v \in X_{i'} : |B_{i'}(v)| \geq 2^{i'}\}$, i.e. this is the set of terminals who can ``pay'' to be included in spanner $F_{i'}$. 

  \begin{claim}
    Let $v$ be a terminal with type $\type(v) > i'$. If $v \notin A_{i'}$, then $\class_{i'}(v) < \class_{\type(v)}(v)$.
  \end{claim}

  \begin{proof}
    The balls $B_{\type(v)}(v)$ and $B_{i'}(v)$ have radii $2^{\class_{\type(v)}(v)-3}$ and $2^{\class_{i'}(v)-3}$, respectively. By definition of types, $v \in A_{\type(v)}$ and so $|B_{\type(v)}(v)| \geq 2^{\type(v)} \geq 2^{i'}$. Since $v \notin A_{i'}$, the radius of $B_{i'}(v)$ must be smaller than the radius of $B_{\type(v)}(v)$ and so $\class_{i'}(v) < \class_{\type(v)}(v)$.
  \end{proof}

  Define $N(v) = \{i' < \type(v) : v \in X_{i'} \setminus A_{i'}\}$, i.e.~the set of types $i'$ for which $v$ is included in spanner $F_{i'}$ but could not pay to be included in $F_{i'}$. Combining this with the above claim, we get
  \begin{align*}
    \sum_{i' \geq i} d(F_{i'})
    &\leq O(1)\sum_{i' \geq i} \left(\sum_{v \in A_{i'}} 2^{\class_{i'}(v)} + \sum_{v \in X_{i'} \setminus A_{i'}} 2^{\class_{\type(v)}(v)}\right) \\
    &= O(1)\sum_{i' \geq i} \sum_{v \in A_{i'}} 2^{\class_{i'}(v)} + \sum_{v \in X_i} |N(v)|\cdot 2^{\class_{\type(v)}(v)} \\
    &\leq O(1)\sum_{i' \geq i} \sum_{v \in A_{i'}} 2^{\class_{i'}(v)} + O(\log k)\sum_{v \in X_i} 2^{\class_{\type(v)}(v)} \\
    &\leq O(\log k)\sum_{i' \geq i} \sum_{v \in A_{i'}} 2^{\class_{i'}(v)}
  \end{align*}
  where the last inequality follows from the fact that $v \in A_{\class_{\type(v)}(v)}$ and $\type(v) \geq i$.

  Using a similar proof as in Lemma \ref{lem:fixed-cost}, we have $2^i \sum_{v \in A_{i'}} 2^{\class_{i'}(v)} \leq O(1)\OPT_i(T)$. Thus,
  \[2^i \sum_{i' \geq i} d(F_{i'}) \leq O(\log^2 k)\OPT_i(T),\]
  as desired.
\end{proof}

With these two lemmas and Lemma \ref{lem:decomp} in hand, we have that $\ALG \leq O(\log^3 k)\OPT_i$ and thus the algorithm has competitive ratio $O(\log^3 k)$.

\subsection{Improving to $O(\log^{2.5} k )$}
\label{sec:det-oblivious-improved}

We now describe the modified routing scheme. Suppose we know the number of terminals $k$ of the online instance in advance (we will show how to remove this assumption at the end). Let $L = \sqrt{\log k}$. The main idea is that instead of including each terminal $v$ in spanner $F_i$ for every $i \leq \type(v)$, we only include $v$ in $F_i$ for $\type(v) - L \leq i \leq \type(v)$. This helps us bound the number of spanners that $v$ is included but could not pay for by $L$. However, we now cannot reach all terminals of type higher than $i$ using spanner $F_i$. To cope with this, we will group the $L^2$ different types into consecutive intervals of length $L$. The routing for a terminal $v$ now proceeds in several phases, one per interval, starting from the interval containing $\type(v)$. During phase $t$ (corresponding to the interval $[tL, (t+1)L]$), we find a path $Q_t$ from the waypoint $w_{tL}$ (or $v$ if the interval contains $\type(v)$) to $w_{(t+1)L}$ by passing through intermediate waypoints that belong to rings geometrically increasing distances.

In the following, we write $u \in F_i(v)$ to mean that $u$ is a terminal included in spanner $F_i$ before $v$ arrived.
\begin{algorithm}
\caption{Improved Online Deterministic Oblivious Algorithm for Buy-at-Bulk} 
\begin{algorithmic}[1]
  \label{alg:det-ob-improved}
  \STATE Initialize $X_i \leftarrow \emptyset, F_i \leftarrow \emptyset$ for all
  $i$
  \STATE Set $\type(r) \leftarrow \infty$
  \WHILE {terminal $v$ arrives}
  \FOR {each type $i$}
  \STATE Define $\class_i(v)$ to be the highest $j$ such that $v$ is a level $j$ center in spanner $F_i$ if it were to be included in $F_i$
  \STATE Define the ball $B_i(v) = \{u : d(u,v) \leq 2^{\class_i(v)}/2^3\}$. 
  \ENDFOR
  \STATE Assign type $\type(v) = \max \left\{i : |B_i(v)| \geq 2^i \right\}$.
  \STATE For $\type(v) - L \leq i \leq \type(v)$, update spanner $F_i$ to include $v$
  \STATE For $i > \type(v)$, define waypoint $w_i$ to be terminal of $F_i$ nearest to $v$
  \STATE Set initial waypoint to be $v$
  \FOR {$t$ such that $\type(v) - 1 \leq tL \leq \log k$}
  \STATE Define the ring $R_j(t) = \{w_i : tL \leq i \leq (t+1)L \wedge d(v,w_i) \in [2^j, 2^{j+1})\}$
  \STATE Initialize $Q_t \leftarrow \emptyset$
  \WHILE {current waypoint is not $w_{(t+1)L}$}
  \IF {$R_j(t) \neq \emptyset$}
  \STATE Choose waypoint $w_i$ in $R_j(t)$ of highest type and route from current waypoint $w_{i'}$ to $w_i$ along shortest path in $F_{i'}$
  \STATE Add shortest path to $Q_t$
  \ENDIF
  \ENDWHILE
  \ENDFOR
  \ENDWHILE
\end{algorithmic}
\end{algorithm}


\begin{lemma}
  \label{lem:det-ob-dist}
  For each terminal $v$, we have 
  \[d(P_i(v)) \leq O(\log k) \left(\sum_{t : \type(v) < (t+1)L \leq i} d(v, F_{(t+1)L}(v)) + d(v, F_i(v))\right).\]
\end{lemma}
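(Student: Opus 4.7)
The plan is to decompose $P_i(v)$ into the sub-paths $Q_t$ built by the outer phase loop. Since phase $t$ routes only through spanners $F_{i'}$ with $i' \in [tL, (t+1)L]$, the entire $Q_t$ is contained in $P_i(v)$ whenever $(t+1)L \leq i$, while the single phase $t^\star$ with $(t^\star{+}1)L > i$ contributes a partial segment $Q^{\mathrm{part}}$ ending just before the first routing step that uses a spanner of index $\geq i$. Thus
\[
d(P_i(v)) \;=\; \sum_{t : \type(v) < (t+1)L \leq i} d(Q_t) \;+\; d(Q^{\mathrm{part}}).
\]

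Next, I would bound $d(Q_t)$ for each complete phase $t$ by $O(\log k)\,d(v, F_{(t+1)L}(v))$. The inner loop of phase $t$ processes non-empty rings $R_j(t)$ in increasing order of $j$, and each iteration routes from the current waypoint $w_{i'}$ to the highest-type waypoint $w_{i''}$ of the ring along a shortest path in the $O(\log k)$-stretch spanner $F_{i'}$. By the stretch guarantee of Theorem~\ref{thm:spanner} and the triangle inequality, this step has length at most $O(\log k)\bigl(d(v,w_{i'})+d(v,w_{i''})\bigr)$. Since the waypoints visited lie in rings of geometrically doubling radii, the distances $d(v, w_{i_\ell})$ form a geometric sequence whose sum is dominated by its largest term, namely $d(v, w_{(t+1)L})$. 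Using that $w_{(t+1)L}$ is by definition the nearest terminal of $F_{(t+1)L}$ to $v$, this equals $d(v, F_{(t+1)L}(v))$, giving the claimed bound on $d(Q_t)$.

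The same telescoping argument applies to $Q^{\mathrm{part}}$: the waypoints visited in $Q^{\mathrm{part}}$ have indices strictly less than $i$, so their distances from $v$ are all at most $O(d(v, w_i))$ (otherwise we would already have passed the ring containing $w_i$, forcing a routing step from $w_i$ which uses $F_i$ and terminates $P_i(v)$). Hence the geometric summation again yields $d(Q^{\mathrm{part}}) \leq O(\log k)\,d(v, w_i) = O(\log k)\,d(v, F_i(v))$. Combining this with the sum over complete phases gives the lemma.

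The main obstacle is making the telescoping step fully rigorous, and in particular verifying two algorithmic invariants: (a)~the waypoint chosen in ring $R_j(t)$ actually lies in the spanner $F_{i'}$ used for routing from the current waypoint, so that the spanner's stretch bound applies, and (b)~within a phase, the distance of the current waypoint from $v$ grows by a constant factor from one non-empty ring to the next, so that the geometric sum collapses. Both invariants follow from the construction: the definition of $R_j(t)$ partitions waypoints into doubling distance bands, and the rule of including each terminal in spanners of index within $L$ of its type ensures that consecutive waypoints in a phase share a spanner.
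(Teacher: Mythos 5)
Your proposal matches the paper's proof: both decompose $P_i(v)$ into the complete phase segments $Q_t$ with $(t+1)L \le i$ plus one partial segment of the phase containing $i$, bound each piece by $O(\log k)$ times the distance from $v$ to the corresponding waypoint via the ring-telescoping argument of Lemma~\ref{lem:dist}, and then sum. The invariants you flag at the end are exactly the ones the paper also leaves implicit by invoking ``a similar proof as in Lemma~\ref{lem:dist}.''
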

  
\begin{proof}
  Let $Q_t$ be the segment of $P(v)$ starting from $w_{tL}(v)$ to $w_{(t+1)L}(v)$ and $Q_t(i)$ be the segment of $Q_t$ right before it enters a spanner $F_{i'}$ with $i' \geq i$. Using a similar proof as in Lemma \ref{lem:dist}, for every $t$ and $i \in [tL, (t+1)L]$, we get that $d(Q_t(i)) \leq O(\log k) d(v, w_i(v))$ and $d(Q_t) \leq O(\log k) d(v, w_{(t+1)L}(v))$. For each type $i$, define $t_i$ to be the integer such that $i \in [t_iL, (t_i+1)L)$. We have
  \[d(P_i(v)) = \sum_{t : \type(v) < (t+1)L < i} d(Q_t) + d(Q_{t_i}(i)) \leq O(\log k)\left(\sum_{t : \type(v) < (t+1)L < i} d(v, w_{(t+1)L}(v)) + d(v, w_i(v))\right).\]
\end{proof}

\begin{lemma}
  $\sum_{v: \type(v) < i}d(P_i(v)) \leq O(\log^{2.5} k)\OPT_i$.
\end{lemma}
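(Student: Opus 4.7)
The plan is to apply Lemma~\ref{lem:det-ob-dist} to decompose each $d(P_i(v))$ as a sum over at most $O(\sqrt{\log k})$ waypoint-to-spanner distances, interchange summations over $v$ and over the distance terms, and then bound each resulting aggregated distance by $O(\log k)\OPT_i$ via an HST charging argument modeled on Lemma~\ref{lem:rent-cost}. First I would note that, after swapping the summation order,
\[
\sum_{v:\type(v)<i} d(P_i(v)) \;\leq\; O(\log k) \sum_{j \in J_i}\; \sum_{v:\type(v)<j} d(v, F_j(v)),
\]
where $J_i \subseteq \{L, 2L, 3L, \ldots\}\cup\{i\}$ has at most $\lceil \log k / L\rceil + 1 = O(\sqrt{\log k})$ elements, each satisfying $j \leq i$.

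Next I would establish the key subclaim: for each $j \in J_i$,
\[
\sum_{v:\type(v)<j} d(v, F_j(v)) \;\leq\; O(\log k)\,\OPT_i.
\]
Since $g_j(x) \leq g_i(x)$ pointwise for $j \leq i$, we have $\OPT_j \leq \OPT_i$, so by Corollary~\ref{cor:HST} it suffices to prove $\sum_{v:\type(v)<j} d(v,F_j(v)) \leq O(1)\OPT_j(T)$ on any HST embedding $T$. The charging scheme mimics Lemma~\ref{lem:rent-cost}: for each $v$ with $\type(v)<j$, if $d(v,F_j(v)) \in [2^\ell, 2^{\ell+1})$, charge $2^{\ell+1}$ to the unique HST edge of length $2^{\ell-4}$ separating $v$ from the root. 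Using the type-selection rule $\type(v) = \max\{i : |B_i(v)|\geq 2^i\}$, exactly as in Lemma~\ref{lem:rent-cost}, any edge charged by more than $2^j$ terminals would force its latest-arriving charger to be assigned type $\geq j$, a contradiction. Comparing with the lower bound $\OPT_j(T) \geq \sum_e T(e)\min\{2^j, |L(e)|\}$ gives the HST bound.

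The main obstacle is that, unlike in the simpler algorithm, $F_j$ contains only those terminals whose type lies in $[j,j+L]$, and so the nearest $F_j(v)$ element could in principle be farther than the nearest $X_j(v)$ element; this is what prevents the cleaner $O(\log^2 k)$ bound. To close the gap I would show that if the nearest previous type-$\geq j$ terminal $u$ to $v$ has $\type(u) > j+L$, then the type-selection rule guarantees $|B_{\type(u)}(u)| \geq 2^{\type(u)} \geq 2^{j+L}$, so a ball-packing argument within the HST subtree containing $u$ (in the style of Claims~\ref{clm:ball} and~\ref{clm:disjointness}) yields a previous terminal of type in $[j,j+L]$ nearby, which belongs to $F_j(v)$. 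This lets the charging above go through with $d(v,F_j(v))$ in place of $d(v,X_j(v))$. Putting the three factors together---the $O(\log k)$ from Lemma~\ref{lem:det-ob-dist}, the $O(\sqrt{\log k})$ count of relevant $j \in J_i$, and the $O(\log k)\OPT_i$ bound per $j$---yields $\sum_{v:\type(v)<i}d(P_i(v)) \leq O(\log^{2.5}k)\OPT_i$, as required.
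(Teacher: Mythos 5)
Your proposal follows essentially the same route as the paper: apply Lemma~\ref{lem:det-ob-dist}, interchange the sums, bound each aggregated term $\sum_{v:\type(v)<j} d(v,F_j(v))$ by $O(\log k)\OPT_j \leq O(\log k)\OPT_i$ via the charging scheme of Lemma~\ref{lem:rent-cost}, and multiply by the $O(\sqrt{\log k})$ count of relevant scales $j$. One remark: the ``main obstacle'' you identify --- that $F_j$ only contains terminals of type in $[j, j+L]$, so $d(v,F_j(v))$ may exceed $d(v,X_j(v))$ --- does not actually need to be resolved, because in the deterministic oblivious algorithm the ball $B_j(v)$ is defined via $\class_j(v)$, which is determined by the spanner $F_j$ itself (any $v$ with $d(v,F_j(v)) \geq 2^\ell$ would be a level-$\ell$ center in $F_j$, so $2^{\class_j(v)} \gtrsim d(v,F_j(v))$). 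Hence the Lemma~\ref{lem:rent-cost}-style charging applies directly to $d(v,F_j(v))$ without ever comparing it to $d(v,X_j(v))$. This is fortunate, since the patch you sketch is not sound as stated: $|B_{\type(u)}(u)| \geq 2^{\type(u)}$ tells you nothing about the \emph{types} of the terminals in that ball, so it cannot by itself produce a nearby terminal of type in $[j,j+L]$. Dropping that paragraph leaves a proof that matches the paper's.
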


\begin{proof}
  Using a similar proof as Lemma \ref{lem:rent-cost}, we get that $\sum_{v : \type(v) < i'} d(v, F_{i'}(v)) \leq O(\log k) \OPT_{i'}$. Combining this with Lemma \ref{lem:det-ob-dist} gives us
  \begin{align*}
    \sum_{v : \type(v) < i} d(P_i(v)) 
    &\leq O(\log k) \left(\sum_{t : (t+1)L < i} \sum_{v : \type(v) < (t+1)L} d(v, F_{(t+1)L}(v)) + \sum_{v : \type(v) < i} d(v, F_i(v))\right) \\
    &\leq O(\log^2 k) \left(\sum_{t : (t+1)L < i} \OPT_{(t+1)L} + \OPT_i\right) \\
    &\leq O(\log^{2.5} k) \OPT_i,
  \end{align*}
  where the last inequality follows from the fact that $\OPT_{i'} \leq \OPT_i$ if $i' < i$ and that the number of integers $t$ such that $(t+1)L < i$ is at most $L = \sqrt{\log k}$.
\end{proof}

\begin{lemma}
  $2^i \sum_{i' \geq i} d(F_i) \leq O(\log^{1.5} k)\OPT_i(T)$ for any HST embedding $T$.
\end{lemma}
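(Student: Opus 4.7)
The plan is to replay the proof of Lemma \ref{lem:det-ob-buy-1}, exploiting one new invariant introduced by Algorithm \ref{alg:det-ob-improved}: each terminal $v$ is placed in a spanner $F_{i'}$ only for $i' \in [\type(v) - L, \type(v)]$. As before, let $A_{i'} = \{v \in X_{i'} : |B_{i'}(v)| \geq 2^{i'}\}$ (the ``payers'' for $F_{i'}$) and $N(v) = \{i' \geq i : v \in X_{i'} \setminus A_{i'}\}$. The new invariant forces $|N(v)| \leq L = \sqrt{\log k}$, which is the key improvement over the bound $|N(v)| \leq O(\log k)$ that gave the $O(\log^3 k)$ result.

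First I would replay the charging step in Lemma \ref{lem:det-ob-buy-1}: the spanner cost bound (Lemma \ref{lem:cost}) gives $d(F_{i'}) \leq O(1)\sum_{v \in X_{i'}} 2^{\class_{i'}(v)}$, and for $v \notin A_{i'}$ we have $\class_{i'}(v) < \class_{\type(v)}(v)$ (by the same claim used in Lemma \ref{lem:det-ob-buy-1}, since $v \in A_{\type(v)}$ has $|B_{\type(v)}(v)| \geq 2^{\type(v)} \geq 2^{i'}$). Combining these and using $v \in A_{\type(v)}$ absorbs the non-payer contributions into the payer contributions with a multiplicative overhead of $\max_v |N(v)| = O(L)$:
\[\sum_{i' \geq i} d(F_{i'}) \;\leq\; O(L) \sum_{i' \geq i} \sum_{v \in A_{i'}} 2^{\class_{i'}(v)}.\]

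Next, for each fixed $i' \geq i$ I would establish the \emph{per-spanner bound} $2^i \sum_{v \in A_{i'}} 2^{\class_{i'}(v)} \leq O(1) \OPT_i(T)$ by mimicking the ball-packing argument of Lemma \ref{lem:fixed-cost}. Group terminals in $A_{i'}$ by their class $j$ in $F_{i'}$. For $v \in A_{i'}$ with $\class_{i'}(v) = j$, the ball $B_{i'}(v)$ has radius $\Theta(2^j)$ and contains $|B_{i'}(v)| \geq 2^{i'} \geq 2^i$ terminals. Using the separation of same-level centers guaranteed by the spanner's net construction in Algorithm \ref{alg}, one shows (exactly as in the proofs of the claims inside Lemma \ref{lem:fixed-cost}) that for each fixed $j$ the balls $\{B_{i'}(v) : v \in A_{i'},\, \class_{i'}(v)=j\}$ map to disjoint sets of level-$j$ edges of $T$ whose leaves do not contain the root. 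Plugging this into the HST lower bound \eqref{eq:ROB-lb} produces the per-spanner bound.

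Finally, at most $O(\log k)$ values of $i' \geq i$ yield a non-empty $F_{i'}$, so summing the per-spanner bound produces
\[2^i \sum_{i' \geq i} d(F_{i'}) \;\leq\; O(L)\cdot O(\log k)\cdot \OPT_i(T) \;=\; O(\log^{1.5} k)\, \OPT_i(T),\]
as claimed. The main obstacle I anticipate is purely bookkeeping with constants in the ball-packing step: the spanner's nets are $2^j/16$-separated (as opposed to the $2^j$-separation available in the MLAST-based Lemma \ref{lem:fixed-cost}), so to guarantee disjointness of the charged level-$j$ tree edges one may need to either shrink the radius of $B_{i'}(v)$ slightly in the definition, or charge to HST edges of a correspondingly coarser level. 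The structural argument is otherwise a straightforward combination of the non-payer accounting from Lemma \ref{lem:det-ob-buy-1} and the payer ball-packing from Lemma \ref{lem:fixed-cost}.
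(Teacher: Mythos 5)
Your proof is correct and takes essentially the same route as the paper's: the paper's own proof is a one-sentence remark that one replays Lemma~\ref{lem:det-ob-buy-1} using the fact that each terminal is now included in at most $L = \sqrt{\log k}$ spanners, which is precisely your $|N(v)| \leq L$ accounting combined with the per-spanner ball-packing bound and the sum over $O(\log k)$ types. Your closing caveat about the $2^j/16$-separation constants in the spanner's nets concerns a detail the paper also elides when it invokes ``a similar proof as in Lemma~\ref{lem:fixed-cost}.''
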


\begin{proof}
  The proof of this lemma is similar to that of Lemma \ref{lem:det-ob-buy-1} except that the total number of types $i'$ such that $v$ is included in spanner $F_{i'}$ is at most $L = \sqrt{\log k}$. 
\end{proof}

\paragraph{Removing the need to know $k$ in advance.}
First, instead of using intervals of length $\sqrt{\log k}$, we will use intervals between perfect squares. In particular, the $t$-th interval is $[t^2, (t+1)^2]$. Second, each terminal $v$ is included in spanners $F_i$ where $\type(v) \geq i \geq \type(v) - \sqrt{i}$.




\section{Conclusions}

Several open questions remain: can we do better than $O(\log^2 k)$ for the oblivious randomized case, and $O(\log^{2.5} k)$ for the oblivious deterministic case? Can we get online constructions of tree embeddings with stretch better than $O(\log k \log \Delta)$? 


\bibliographystyle{alpha}
{\small \bibliography{BAB}}





\end{document}